\theoremstyle{theorem}
\newtheorem{theorem}{Theorem}[section]
\newtheorem{proposition}[theorem]{Proposition}
\newtheorem{lemma}[theorem]{Lemma}
\newtheorem{corollary}[theorem]{Corollary}
{\theorembodyfont{\rmfamily}
  \newtheorem{example}[theorem]{Example}
 \newtheorem{remark}[theorem]{Remark}
}
\newenvironment{proof}{\noindent\textit{Proof.}}
{\QED\vskip\theorempostskipamount} 
\def\petitcarre{\vrule height4pt width 4pt depth0pt}
\def\QED{\relax\ifmmode\eqno{\hbox{\petitcarre}}\else{%
  \unskip\nobreak\hfil\penalty50\hskip2em\hbox{}\nobreak\hfil
  \petitcarre
  \parfillskip=0pt \finalhyphendemerits=0\par\smallskip}
  \fi}
\newcommand{\tr}[1]{\vphantom{#1}^t #1}
\DeclareMathOperator{\E}{\mathcal{E}}
\newcommand{\T}{\mathcal T}
\newcommand{\Z}{\mathbb Z}
\DeclareMathOperator{\Card}{Card}
\begin{document}

\title{Balancedness    and  coboundaries in symbolic  systems}

\author{Val\'erie Berth\'e 
\thanks{ This work was supported by the Agence Nationale de la Recherche  through the project  ``Dyna3S'' (ANR-13-BS02-0003).}}
\affil{ IRIF, CNRS UMR 8243, Universit\'e Paris Diderot -- Paris 7,
 Case 7014, 75205 Paris Cedex 13, France \\
berthe@irif.fr}

\author{Paulina Cecchi Bernales%
\thanks{The  second author was supported by the PhD grant CONICYT - PFCHA / Doctorado Nacional / 2015-21150544.}}
\affil{IRIF, CNRS UMR 8243, Universit\'e Paris Diderot -- Paris 7
 Case 7014, 75205 Paris Cedex 13, France\\
 Departamento de Matem\'atica y Ciencia de la Computaci\'on, Universidad de Santiago de Chile\\
  pcecchi@irif.fr }

\date{}

\maketitle

\begin{center}
{\em  This paper is dedicated to the memory of  our friend and colleague Maurice  Nivat, \\ for  his constant  and  encouraging support, and for all of his    ideas \\
which have nourished the present work.}
\end{center}

\begin{abstract}
This paper  studies  balancedness  for  infinite words  and  subshifts, both for letters and factors.  Balancedness  is a measure of disorder that amounts to  strong convergence properties  for frequencies. It measures  the difference between  the   numbers of occurrences of a   given word in   factors  of the same length.  We focus on  two families of words, namely dendric words and  words  generated by  substitutions.
 The family of dendric words includes Sturmian and    Arnoux-Rauzy words,  as well as  codings of regular  interval exchanges.  We prove that dendric words are balanced on letters if and only if they are balanced
on words.  In the substitutive case, we stress the role played by the  existence of coboundaries  taking  rational  values
  and show  simple criteria   when frequencies take rational values for   exhibiting imbalancedness. 
\end{abstract}

\noindent
{\bf Keywords} Balancedness; Dendric  subhifts; Substitutions; Coboundaries; Word combinatorics.

\section{Introduction} \label{sec:intro}

Balancedness for   words or   subshifts is a measure of disorder that provides strong convergence properties  for frequencies of   letters and   words.   In  ergodic terms, balancedness  can be  interpreted as an optimal speed of convergence of Birkhoff sums toward frequencies of words. 
In combinatorial terms,
given a finite alphabet ${\mathcal A}$, 
a  word $u\in  {\mathcal A}^{\mathbb Z}$ is said to be   {\em balanced }  on the  finite word  $v \in {\mathcal A}^*$  if there exists a constant $C_v$ such that for every pair  $(w,w')$  of   factors  of $u$  of the same length,   the difference between  the number of occurrences of 
$v$ in each      word  $w$ and $w'$   differs by at most $C_v$, that  is,  $||w|_v-|w'|_v|\leq C_v$, for $|w|=|w'|$,  where the notation  $|w|_v$  stands for the number of occurrences of 
$v$, and $ | w|$, for its length.

The  study of balancedness belongs to the  general domain of  aperiodic order (see e.g. \cite{BaakeGrimm}), and is  considered for words, as well as for tilings and  Delone sets in the context of quasicrystals, with  balancedness being   closely related to the notion of   bounded distance equivalence to  lattice   \cite{Lacz:92,HayKeKo:17,FreGar:18}.  Balancedness
  first occurred in  the form  of $1$-balance for  letters for infinite  words  defined over   a two-letter alphabet\footnote{ A binary  infinite words is  said   $1$-balanced if   the   difference between numbers of occurrences  of  a given  letter  in factors of the same length is bounded  by $1$.}  in the seminal papers by   Morse and Hedlund
 \cite{MorseHedlund:38,MorseHedlund:40} who  laid the groundwork  for   the study of symbolic dynamics and Sturmian words.  
The infinite words   that are $1$-balanced   over a two-letter alphabet are  indeed exactly  the   Sturmian words.

 The  notion of balancedness   was then  considered for larger alphabets, for $C$-balance, with $C>1$, and for  factors,  instead of letters. Words   over a 
   larger alphabet  that  are $1$-balanced  have been characterized in \cite{Hubert:00}  and shown  to   be closely related to Sturmian  words.    Moreover, Sturmian words   have been proved to be  balanced on their factors in \cite{FagnotVuillon:02}.
Note that the number of
1-balanced words of length $n$ is polynomial  \cite{Lipatov:82, Mignosi:91}  while the  number of $C$-balanced words of length
$n$ for $C>1$ is exponential  \cite{Lipatov:82} and, therefore, being 
$C$-balanced is relatively common.

This notion  is  natural and has thus  been widely studied
from many viewpoints, for instance in ergodic theory \cite{Cassaigne-Ferenczi-Zamboni:00,Cassaigne-Ferenczi-Messaoudi:08} to  prove absence of weak mixing properties, 
in number theory  in  connection with  Fraenkel's  conjecture \cite{Fraenkel:73,Tijdeman:2000} (see  also  \cite{Tijdeman:2000} and  the survey on balanced words \cite{Vuillon:03}), or else,  in operations research, for optimal routing  and scheduling; see e.g.  \cite{AltmanGH:00,BraunerCrama:04,BraunerJost:08,Tijdeman:80}. Balancedness  is   also  closely related to  symbolic discrepancy, as   
investigated in \cite{Adam:03,Adam:04}.  
 
 In the multidimensional framework, balancedness has been considered  both for    multidimensional words
 \cite{BerTij:02} and  for tilings  \cite{Sadun:16}.  This notion has to be compared with homogeneity  (related  to $0$-balance) such as  introduced by M. Nivat in  \cite{Nivat:02}. A   binary  two-dimensional word   $U$ in    $ \{0,1\}^{{\mathbb Z}^2}$  is $k$-homogeneous for a   finite subset  $F$ if,   whatever the position of the window $F$  in  $U$,
exactly
$k$
ones appear in the window. M. Nivat proved in  \cite{Nivat:02}  that a two-dimensional word  is  $1$-homogeneous  for a window  $F$ if and only if $F$  tiles the plane; recall that this latter property  has been     been characterized in \cite{BeauNiv:91}. 
Providing measures of   order for words or tilings   has constantly  been   in  M. Nivat's   research interests;  let us quote  e.g.  the paper   \cite{BrlekHNR:04} which
has  opened the way to the study of palindromic defects and complexity,   and also  Nivat's conjecture   which states that if a two-dimensional word admits  at most $ m n$ rectangular factors of size $(m,n)$, then it admits at least one direction of periodicity. This  elegant and  apparently simple conjecture  has been formulated    by M. Nivat in 1997  during an invited talk at ICALP and has  lead to various approaches and    numerous   results.
See  for instance  \cite{CyrKra:2015,KariMoutot:18,kariSza:2015}  for  further  references and   examples of  the latests developments on  this  conjecture.

In the case of   primitive substitutions,   if   balancedness  is   known to be closely related    to the  eigenvalues of the substitution matrix (see \cite{HolZam:98,Adam:03,Adam:04}  and Section~\ref{subsec:subs}),   it is   not completely characterized
in purely linear terms (see  Remark \ref{rem:adam}).   Note that there is an important literature devoted to   the measure of imbalancedness, considered as deviations  for Birkhoff sums,
for substitutions  acting  on words or, in the higher-dimensional framework, for tiling substitutions.  
See  in particular \cite{BressaudBH,paquette_son_2017} where  central limit theorems are considered  for deviations of ergodic sums   for substitution subshifts. In \cite{BufSol:2013},
  deviation of ergodic averages for  ${\mathbb R}^d$-actions by translation  associated with self-similar tilings are  also investigated with new phenomena  due to the  dimension; this is the so-called boundary effect  expressed in terms of the eigenvalues of the underlying substitution matrix. See also \cite{Sadun:2011} based  on the  use of cohomology for tiling spaces. 
Similarly, a  criterion for balancedness  for $S$-adic words (obtained by iterating several substitutions and not just one)  is given  in  \cite[Theorem 5.8]{Berthe-Delecroix}
in terms of convergence of  products of matrices; 
see  \cite{Delecroix-Hejda-Steiner,BerST:18} for an  example of application for Brun substitutions.   We  discuss further  $S$-adic  examples in Section~\ref{subsec:exdendric}. 
 \bigskip

In  this paper, we     study  balancedness  for  some families of   words of low  factor complexity\footnote{Factor complexity counts the number of factors of a given size.},  not only for letters, but also  for factors.  We  focus on two   families of words, namely   dendric words in Section \ref{sec:dendric} and  fixed points of  primitive substitutions
having rational frequencies   in Section \ref{sec:sub}.  Dendric words   are also called  tree words (see e.g.~\cite{BDFDLPRR:15,BDFDLPRR:15bis,BFFLPR:2015,BDFDLPR:16,Rigidity:17}).  These two families behave  in  different ways with respect to balancedness. 
Frequencies  of  factors of dendric words are irrational  \cite{Rigidity:17} and  balancedness for letters  is equivalent to   balancedness  for factors (Theorem~\ref{theo:equilibre}).   In the case of  substitutions, the example of  the Thue--Morse substitution (handled in  Corollary \ref{cor:TM})  illustrates the fact that one can have balancedness on letters  
and   imbalancedness on factors. Our approach does not rely on  techniques of linear algebra using  the   substitution matrix, or, in the dendric case, on the underlying   $S$-adic  expansion.
It also  allows us  to  prove    balancedness results for specific  factors. 
We  exploit  ideas  issued from topological dynamics  (and in particular,  the notion of coboundary)  for the study  of   balancedness  of   fixed points of primitive substitutions  words  with rational frequencies in Section \ref{sec:sub}  and, conversely,  we 
use balancedness to deduce spectral properties in Section \ref{sec:dendric},   which uses purely  combinatorial methods. 
Our  main results are the following. 
\begin{theorem}\label{theo:equilibre}
Let $(X,T)$ be a minimal dendric subshift. Then $(X,T)$ is balanced on  letters if and only if it is  balanced  on  factors.
In particular, if $(X,T)$  is balanced, then  all the   frequencies of   factors are additive topological eigenvalues
and all 
cylinders are  bounded remainder sets.
\end{theorem}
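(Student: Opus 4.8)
The plan is to prove the two directions separately, the forward one being immediate and the converse being the heart of the matter. Since every letter is a factor, balancedness on factors trivially implies balancedness on letters, so I would concentrate on the converse. I would argue by induction on the length of the factor $v$, the base case $|v|=1$ being exactly the hypothesis of balancedness on letters. For the inductive step I would exploit the defining property of a dendric subshift, namely that the extension graph $E(v)$ of every factor $v$ (including the empty word) is a tree, and show that balancedness on all factors of length $m$ forces balancedness on all factors of length $m+1$.

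First I would record the elementary incidence relations between occurrence counts. Writing $\ell_b=|w|_{bv}$, $r_a=|w|_{va}$ and $e_{ba}=|w|_{bva}$ for $b\in L(v)$, $a\in R(v)$ and edges $(b,a)$ of $E(v)$, a one-letter extension argument gives, for any factor $w$,
\begin{equation*}
\sum_{a} e_{ba} = \ell_b - \mathbf 1[bv \text{ is a suffix of } w], \qquad \sum_{b} e_{ba} = r_a - \mathbf 1[va \text{ is a prefix of } w],
\end{equation*}
so that each vertex count is the sum of its incident edge counts up to a correction in $\{0,1\}$. The key point is that, because $E(v)$ is a \emph{tree}, the edge counts are in turn determined by the vertex counts: deleting an edge $(b,a)$ splits $E(v)$ into the component $T_b$ containing $b$ and its complement, and conservation of flow inside $T_b$ yields the exact identity
\begin{equation*}
e_{ba} \ = \ \sum_{b'\in L(v)\cap T_b} \ell_{b'} \ - \ \sum_{a'\in R(v)\cap T_b} r_{a'} \ + \ \theta_w,
\end{equation*}
where the combinatorial error $\theta_w$ is a bounded sum of the $\{0,1\}$ boundary corrections, hence $|\theta_w|\le |\mathcal A|$ uniformly in $w$. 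This inversion is exactly where dendricity is essential: for a graph containing a cycle the edge counts would carry an undetermined circulation and could not be recovered from the vertex counts, so that balancedness of the shorter factors would fail to control that of the longer ones, the difficulty being concentrated, as usual, at the bispecial factors where both extension sets branch.

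Subtracting this identity for two factors $w,w'$ of equal length, and noting that $bva$, $b'v$, $va'$ have respective lengths $m+1$, $m$, $m$ when $|v|=m-1$, I would obtain
\begin{equation*}
\bigl| \, |w|_{bva} - |w'|_{bva} \, \bigr| \ \le \ \sum_{b'} \bigl| \, |w|_{b'v}-|w'|_{b'v}\, \bigr| + \sum_{a'} \bigl| \, |w|_{va'}-|w'|_{va'}\, \bigr| + 2|\mathcal A|.
\end{equation*}
Since every factor of length $m+1$ is of the form $bva$ with $v$ its middle factor of length $m-1$, and since there are at most $|\mathcal A|$ terms in each sum, balancedness on all factors of length $m$ yields balancedness on all factors of length $m+1$; taking $v$ to be the empty word handles the passage from letters to length two, so that starting from the hypothesis on letters the induction runs and proves the equivalence.

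For the final assertions I would invoke the coboundary framework recalled earlier. Balancedness on a factor $v$ is equivalent to the bounded deviation $\bigl| \, |w|_v - |w|\, \mu([v]) \, \bigr|\le C_v$, where $\mu([v])$ is the frequency of $v$; for a point $x\in X$ this reads as a uniform bound on $\bigl| \sum_{k=0}^{N-1}\mathbf 1_{[v]}(T^k x) - N\mu([v]) \bigr|$ up to the $O(|v|)$ window correction, which is precisely the statement that the cylinder $[v]$ is a bounded remainder set, equivalently that $\mathbf 1_{[v]} - \mu([v])$ is a continuous coboundary and that $\mu([v])$ is an additive topological eigenvalue. Applying this to every factor $v$ gives the ``in particular'' clause. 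The hard part will be the tree inversion step: getting the signs, the splitting into $T_b$ and its complement, and the uniform control of the accumulated boundary corrections exactly right, and checking that the argument genuinely applies to the empty word so that the induction starts from the letters.
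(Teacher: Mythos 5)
Your proof is correct, and it reaches the theorem by a route that is recognizably the same in spirit as the paper's but different in execution. The paper isolates an abstract lemma (its Lemma~3.3): on a bipartite tree, if each vertex value is the sum of the values of its incident edges and all vertex values lie in a subgroup $H$ of an abelian group, then all edge values lie in $H$; this is proved by peeling off leaves. It is applied with $G=C(X,\mathbb{Z})$ and $H$ the subgroup generated by the translated letter indicators $\chi_{T^k[a]}$, yielding the \emph{exact} functional identity that $\chi_{[v]}$ is a finite integer combination of such indicators (Lemma~3.4), from which balancedness on $v$ follows in a single estimate from balancedness on letters. You instead work directly with occurrence counts in a finite word $w$, where the incidence relations only hold up to $\{0,1\}$ boundary corrections (prefix/suffix occurrences), and you invert them via the closed-form ``cut an edge and compare the two marginals over one component'' formula rather than by leaf removal. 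Both arguments use dendricity in exactly the same place and for the same reason — on a tree the edge counts carry no hidden circulation — and your flow formula has the minor advantage of not needing the case split the paper performs when $\widetilde v$ or $v'$ has a unique extension (its tree lemma requires both sides of the bipartition to have at least two vertices). What the paper's formulation buys is the cleaner, error-free cohomological statement ($\chi_{[v]}\in H$, independent of any balancedness hypothesis) and a one-step constant $|\mathcal A|KC$, whereas your length-by-length induction compounds the constants ($C_{m+1}\le 2|\mathcal A|C_m+O(1)$), which is harmless for the theorem but weaker quantitatively. Your treatment of the ``in particular'' clause via Gottschalk--Hedlund matches the paper's. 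Do make sure, as you flag yourself, that your base case genuinely uses the extension graph of the empty word; the paper's definition of dendric includes $\E(\varepsilon)$ being a tree, so this is available.
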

\begin{theorem}\label{condnec1}
Let  $\sigma$ be a  primitive substitution over the alphabet ${\mathcal A}$.  Let $\mathcal{L}(X_\sigma)$ denote the language of $\sigma$. Let 
$v$ be in $ \mathcal{L}(X\sigma)$, and  suppose that it has  a  rational frequency  $\mu_v=p_v/q_v$ written in  irreducible form.   Suppose that   the associated subshift  $(X_{\sigma},T)$ is balanced on $v$.  Then, we have the following.
\begin{enumerate}
\item 
For each $a\in\mathcal{A}$ and   each return  word $w$ to $a$, $q_v$ divides $|\sigma^n(w)|$ for  all $n$ large.  In particular, if $aa\in \mathcal{L}_2(X_{\sigma})$, then $q_v$ divides $|\sigma^n(a)|$ for all $n$ large.

\item 
Let $a\in \mathcal{A}$ and suppose that 
there exist $b,c\in \mathcal{A}$ such that $bac\in \mathcal{L}(X_{\sigma})$ and $bc\in \mathcal{L}(X_{\sigma})$. 
Then $q_v$ divides $|\sigma^n(a)|$ for  all  $n$ large.
\end{enumerate}
\end{theorem}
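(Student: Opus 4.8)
The plan is to translate balancedness on $v$ into a coboundary equation and then read off the two divisibility statements from the block structure of $\sigma$. First, since $\sigma$ is primitive the system $(X_\sigma,T)$ is minimal and uniquely ergodic, so $v$ has a well-defined frequency $\mu_v=p_v/q_v$, and balancedness on $v$ is equivalent to the uniform boundedness of the Birkhoff sums of $\mathbf{1}_{[v]}-\mu_v$. Multiplying by $q_v$, the integer-valued continuous function $g=q_v\mathbf{1}_{[v]}-p_v$ then has uniformly bounded Birkhoff sums, so by Gottschalk--Hedlund on the minimal system $(X_\sigma,T)$ it is a continuous coboundary: $g=F-F\circ T$ for some continuous $F$. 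As $g$ takes integer values, $F\bmod 1$ is $T$-invariant, hence constant by minimality, and we may take $F$ integer-valued.

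The structural point I would establish next is that $F$ depends only on finitely many \emph{future} coordinates. Indeed $\mathbf{1}_{[v]}$, hence $g$, depends only on coordinates $0,\dots,|v|-1$, so if $y,y'$ have the same forward tail then $g(T^ky)=g(T^ky')$ for all $k\ge 0$; telescoping gives $F(y)-F(y')=F(T^my)-F(T^my')$, and since $T^my,T^my'$ agree on all coordinates $\ge -m$ they converge together, so continuity of $F$ forces $F(y)=F(y')$. Being continuous and locally constant, $F$ therefore has the form $F(y)=\phi(y_{[0,s)})$ for some $s$ and some $\phi$. From now on fix $n$ large enough that $|\sigma^n(b)|\ge s$ for every letter $b$, which is possible by primitivity.

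For Part 1, I would pick a point of $X_\sigma$ starting with $\sigma^n(wa)=\sigma^n(w)\sigma^n(a)$, which exists because $wa\in\mathcal{L}(X_\sigma)$. Since the return word $w$ begins with $a$, both position $0$ and position $|\sigma^n(w)|$ are the start of a copy of $\sigma^n(a)$, so $F$ takes the same value $\phi(\mathrm{pref}_s(\sigma^n(a)))$ at both; summing $g$ over the window $[0,|\sigma^n(w)|)$ then telescopes to $0$, giving $p_v|\sigma^n(w)|=q_v\cdot(\text{integer})$, whence $q_v\mid|\sigma^n(w)|$ because $\gcd(p_v,q_v)=1$. The case $w=a$, available when $aa\in\mathcal{L}_2(X_\sigma)$, yields the ``in particular'' statement. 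For Part 2, write $\phi_a=\phi(\mathrm{pref}_s(\sigma^n(a)))$ and likewise $\phi_b,\phi_c$. Summing $g$ over the $\sigma^n(a)$-window inside a point starting with $\sigma^n(bac)$ gives $q_v\cdot(\text{integer})-p_v|\sigma^n(a)|=\phi_a-\phi_c$. Summing $g$ over the $\sigma^n(b)$-window inside points starting with $\sigma^n(bac)$ and with $\sigma^n(bc)$ respectively, and subtracting, cancels both $\phi_b$ and the two $-p_v|\sigma^n(b)|$ terms and produces $\phi_a-\phi_c=q_v\cdot(\text{integer})$. Substituting this into the previous identity shows $p_v|\sigma^n(a)|$ is a multiple of $q_v$, hence $q_v\mid|\sigma^n(a)|$.

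The main obstacle is precisely the future-dependence of $F$: it is what allows me to match the values of $F$ at the two copies of $\sigma^n(a)$ (and to discard the letters sitting to the left of $b$), even though those copies have different left contexts; without it the telescopings would only control $|\sigma^n(w)|$ or $|\sigma^n(a)|$ up to an uncontrolled boundary error. The remaining care is bookkeeping—verifying the equivalence of balancedness with bounded Birkhoff sums and checking that each window's endpoints fall at the intended block boundaries so that the relevant length-$s$ prefixes are those of $\sigma^n(a)$, $\sigma^n(b)$, $\sigma^n(c)$.
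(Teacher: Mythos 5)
Your argument is correct, but it takes a genuinely different and more elementary route than the paper's. The paper never manipulates the transfer function directly at block boundaries: it introduces the nested two-letter tower partitions $\mathcal{P}_n=\{T^j\sigma^n([ab])\}$, packages the Birkhoff sums of $f_v=\chi_{[v]}-\mu_v$ along each tower into a function $\phi_{v,n}$ on $\mathcal{L}_2(X_\sigma)$, and proves (Propositions \ref{beta_1} and \ref{t1}) that balancedness forces $\phi_{v,n}\in\beta(R_1(X_\sigma))$ for all $n\geq k+d$; that step uses the induced matrix $M_{\sigma_2}$ and the decomposition of $\mathbb{R}^d$ into the eventual range and eventual kernel of $M_{\sigma_2}$ in order to invert $M_{\sigma_2}^{\ell-n+m}$ on its eventual range. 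The divisibility then follows from the return-word identity of Lemma \ref{return}. You bypass the entire $M_{\sigma_2}$ linear algebra by showing that the integer-valued transfer function $F$ of $q_v\chi_{[v]}-p_v$ depends only on a bounded window of \emph{future} coordinates and then telescoping $g$ over the windows $\sigma^n(w)$, $\sigma^n(a)$, $\sigma^n(b)$; your future-determinacy argument is the same telescoping trick the paper uses to show that the transfer function is constant on the bases $\sigma^\ell([ab])$, and your three-sum computation in Part~2 is the pointwise analogue of the paper's identity $\phi_n(ba)+\phi_n(ac)=\phi_n(bc)$. Two remarks. First, the one step you state too quickly is ``being continuous and locally constant, $F$ therefore has the form $\phi(y_{[0,s)})$'': future-determinacy and local constancy do not combine by splicing a future onto an arbitrary past, since the spliced point need not belong to $X_\sigma$; instead write $F(y)=F(T^{k}y)+\sum_{j=0}^{k-1}g(T^{j}y)$ for $k$ the local-constancy radius, which exhibits $F(y)$ as a function of $y_{[0,s)}$ with $s=\max(2k+1,\,k+|v|)$ --- one line, using the cocycle relation you already invoke. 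Second, what the paper's heavier route buys is effectivity: the threshold $n\geq k+d$ is explicit and is exploited in Remark \ref{rem:example} and Example \ref{ex:ch}, whereas your threshold is governed by the window size $s$ of $F$, which comes out of Gottschalk--Hedlund with no a priori bound; what your route buys is a shorter, self-contained proof that needs neither the partitions $\mathcal{P}_n$ nor the spectrum-free structure theory of $M_{\sigma_2}$.
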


We briefly describe  the contents of this paper.  Basic  definitions are recalled in Section~\ref{sec:def}.  In particular, we  stress the relations between
balancedness and discrepancy in Section~\ref{subsec:dis},   we recall   balancedness results  for  substitutions in Section \ref{subsec:subs},
and highlight the connections with   coboundaries and   spectral eigenvalues in Section \ref{subsec:coboundaries}. The approach of Section \ref{sec:dendric} is  combinatorial and applies to a wide  family of infinite words and subshifts,  namely the  family of  dendric words.  The case of Arnoux-Rauzy words is discussed in Section 
\ref{subsec:exdendric}.
In Section \ref{sec:sub}, a topological  dynamics approach is developed for  infinite words for which frequencies do exist and are  rational,  based on  the existence of   coboundaries taking  rational values. Some examples are discussed in Section \ref{subsec:examples}.

\bigskip

{\bf Acknowledgements}
We would like to  thank J. Cassaigne, V. Delecroix, F. Dolce, F. Durand,  J. Leroy, D. Perrin, S. Petite and R. Yassawi  for  many  stimulating discussions  on the subject.
We also thank   the referee for her/his   comments  that   contributed to improving  the  readability and  quality of this paper. 

\section{Basic definitions} \label{sec:def}

\subsection{Words and   symbolic dynamical systems} 
 Let ${\mathcal A}$ be a finite non-empty alphabet of cardinality $d$.  
Let us denote by $\varepsilon$ the empty word of the free monoid ${\mathcal A}^*$, and by ${\mathcal A}^{\mathbb{Z}}$ the set of bi-infinite words over ${\mathcal A}$. For $a \in  {\mathcal A}$ and for $w \in {\mathcal A}^*$, $|w|_a$ stands for the number of  occurrences of the letter $a$ in the word $w$, and $|w|$ stands for the length of $w$. The  $i$th letter of $w$  is  denoted as  $w_i$  by labelling  indices  from $0$, i.e.,
$w=w_0 \cdots w_{|w|-1}$.  The notation $w_{[i,j]}$ stands for  the word $w_i \cdots w_j$,  for $i,j$ non-negative integers with $ i \leq j$, and $w_{[i,j)}$ stands for  $w_i \cdots w_{j-1}$,  with $i <j$. 
A~\emph{factor} of a (finite or infinite) word~$u$ is defined as the concatenation of consecutive letters occurring in~$u$. We use the  notation $w\prec u$  for  $w$  a factor of $u$.
The set of factors   ${\mathcal L }(u)$ of an infinite word  $u$ is called its {\em language}.  The {\em factor complexity} of  the  bi-infinite word $u$  counts the number of factors of  $u$ a  given length. 
The bi-infinite word 
  $u=(u_n)_n \in  {\mathcal A} ^{{\mathbb Z}}$ is  said  to be 
  {\em uniformly recurrent}
if every word occurring in $u$ occurs in an infinite number of 
positions  with
   bounded gaps, that is, 
for every factor $w$, there exists $s$ such
  that,  for every $n$, $w$ is a factor
of $u_n \dots u_{n+s-1}$.   Let $u$ be a uniformly recurrent  bi-infinite word  in   ${\mathcal A}^{\mathbb Z}$. Let $v\in\mathcal{A}^*$
be a factor of $u$. We say that a word $w$  with $wv$ in ${\mathcal L} (u)$  is a {\it first return word} to $v$ in $u$ if $v$ is a prefix of $wv$ and there are exactly two occurrences of $v$ in $wv$. We say that it is just a {\it return word} to $v$ in $u$ if $v$ is a prefix of $wv$.

\medskip

{\bf Symbolic dynamical systems.} Let $T$  stand for the {\em  shift}  acting on ${\cal A}^{{\mathbb Z}}$, that is, 
$T((u_n)_{n \in \mathbb{Z}})=(u_{n+1})_{n \in \mathbb{Z}}$.    A {\em subshift}
(also called a {\em shift)}
is a pair $(X,T)$ where $X$ is a closed shift-invariant subset of some ${\mathcal A}^{\mathbb Z}$. Here the set ${\cal A}^{{\mathbb Z}}$ is  equipped
with the product topology of 
the discrete
   topology on each copy of ${\cal  A}$.    
One  associates with any bi-infinite word  $u$ in ${\cal A}^{{\mathbb Z}}$ the  symbolic dynamical system   $( X_u,T)$,  where the  subshift  $X_u \subset {\mathcal A}^{\mathbb Z}$  is defined as  
$X_u=\{v\in \mathcal{A}^\mathbb{Z}: \forall w, w \prec v \Rightarrow  w\prec u \}.
$ A subshift $X$  is said to be {\em minimal} if it admits no non-trivial closed and shift-invariant subset.
 This is equivalent to the fact that  every bi-infinite word  $u$ in $X$ is uniformly recurrent.
If $X$ is a subshift, then its {\em  language} ${\mathcal L}(X)$ is defined as the set of factors of elements of $X$.
For any $n\geq 1$,  we let denote by  ${\mathcal L}_n(X)$ the set of  factors  of length $n$  of elements in  $X$.

\medskip

{\bf Substitution dynamical systems.} A  {\em substitution} $\sigma$  defined on the alphabet ${\mathcal A}$  is  a non-erasing morphism of the  free monoid ${\mathcal A}^*$, i.e., there is no letter in ${\mathcal A} $ whose image
under $\sigma$ is the empty word. If  there exists  a letter $a$ such that $\sigma(a)$ admits  the letter $a$ as  a strict prefix, then 
there exists an infinite word $u= \sigma^{\omega} (a)$ such that $\sigma(u)=u$. 
Moreover,  if $\sigma(b)$  admits  the letter $b$ as  a strict suffix for some letter $b$,  then there  exists
a bi-infinite word $v= \sigma^{\omega}(b)\cdot \sigma^{\omega} (a)$ such that $\sigma(v)=v$,  where the dot is located between the    letters of index $-1$ and $0$.  Such  infinite and bi-infinite words are said to be {\em fixed points}
of the substitution $\sigma$.

  Let $|\mathcal{A}|$ stand  for the cardinality of $\mathcal{A}$.
The {\em substitution matrix} (or incidence matrix)  $M_\sigma$ of $\sigma$ is the $|\mathcal{A}|\times|\mathcal{A}|$-matrix whose coefficients are
$M_\sigma(a,b)=|\sigma(b)|_a$. 
The substitution $\sigma$ is said to be {\it primitive} if there exists a power of $M_\sigma$ which is positive.
Given a  primitive substitution $\sigma$ on the finite alphabet $\mathcal{A}$, the symbolic system associated with $\sigma$ is the pair $(X_\sigma,T)$, where
$$X_\sigma:=\{x\in \mathcal{A}^{\mathbb Z}: \forall w, w \prec x \Rightarrow \exists a\in \mathcal{A}, \exists n\in \mathbb{N} \mbox{ s.t. } w\prec \sigma^n(a) \}.$$
Note that the subshift $X_{\sigma}$  is generated by any   bi-infinite word $v$ such that $\sigma^k(v)=v$  for some positive $k$, i.e.,
$X_{\sigma}= X_v$.
The {\em language} of $\sigma$ is  defined as ${\mathcal L}(X_{\sigma})$. Primitive substitutions  are known to be \emph{recognizable},  i.e.,  they can be  uniquely desubstituted \cite{Mosse:92,Mosse:96}.
More precisely, for any $x \in X_{ \sigma}$, there exists a  unique pair $(y,k)$ with  $y  \in  X_{ \sigma}$
and $0 \leq k < |\sigma(y_0)|$ such  that
$x = T^k \sigma(y)$. 

According  to Perron's theorem, if   a substitution  is primitive, then its  substitution matrix 
 admits a dominant eigenvalue  (it dominates  strictly  in modulus the other eigenvalues) that is positive. It is called its  {\em Perron eigenvalue},
 or its {\em expansion} factor.
 {\em  Pisot substitutions} are primitive substitutions  such that the    dominant Perron eigenvalue  of their substitution matrix is  a  {\em Pisot number}, that is, an  algebraic integer whose conjugates 
lie  strictly inside  the unit disk. 
Note that Pisot  irreducible  substitutions, that is,  Pisot  substitutions  for which the characteristic polynomial   of their substitution  matrix is  irreducible, are conjectured to   have pure discrete spectrum in the measure-theoretic  sense.  This  is called the Pisot substitution  conjecture
(see e.g. the survey \cite{AkiBBLS}).

\medskip

{\bf Frequencies and invariant measures.} Let $u$ be an bi-infinite word in ${\mathcal A}^{\mathbb Z}$.  The \emph{frequency}  $\mu_v$ of a   finite word $v \in {\mathcal A}^* $   is defined as the limit, when
$n$ tends toward infinity, if it exists, of the number of occurrences of $v$ in $ u_{-n}  \cdots u_{-1} u_0 u_1 \cdots u_{n-1}$ divided
by $2n+1$, i.e.,  
 $$\mu_v= \lim_{ n \rightarrow +\infty} \frac{| u_{ -n} \cdots u_0 \cdots u_n|_v}{2n+1}.$$
 Assume that   the frequencies of  the factors  of $u$  all exist.
 The infinite word $u$  is said to have  \emph{uniform frequencies} if, for every  factor $v$ of $u$, the convergence toward $\mu_v$  of 
 $ \frac{| u_{ k}  \cdots u_{k+2n} |_v}{2n+1} $   is   uniform  in $k$, when $n$ tends to infinity.

Let $(X,T)$ be a subshift  with  $X  \subset {\mathcal A }^{\mathbb Z}$. 
 A probability measure $\mu$ on $X$ is
said to be {\em $T$-invariant} if $\mu(T^{-1} A) = \mu(A)$ for every measurable set $A \subset X$.
The subshift $(X,T)$ is  {\em uniquely ergodic} if  there
exists a unique shift-invariant probability measure on $X$.   The subshift $(X,T)$ is uniquely ergodic if and only if  every    bi-infinite  word $u$  in $X$  has uniform  factor frequencies.  In that case, one
recovers the frequency  $\mu_v$ of a factor $v=v_0 \cdots v_n$   as $\mu_v= \mu([v])$, where the {\em cylinder}
$$[v] : = \{u \in X; u_0 \ldots u_{n} = v\}.$$  For more  on invariant measures and ergodicity, we refer to \cite{Queffelec:2010} and \cite[Chap. 7]{CANT}. We  also  recall  that  the  substitutive subshift $X_\sigma$ determined by a primitive substitution $\sigma$ is  uniquely
ergodic  and minimal (see e.g.  \cite{Queffelec:2010}).

\subsection{Balancedness and discrepancy }\label{subsec:dis}

A  bi-infinite word $u\in  {\mathcal A}^{\mathbb Z}$ is said to be   {\em balanced }  on the   factor   $v \in {\mathcal L}(u)$  if there exists a constant $C_v$ such that for every pair  $(w,w')$  of   factors  of $u$,   if $|w|=|w'|$,  then $$||w|_v-|w'|_v|\leq C_v.$$  It is   {\em  balanced on  letters} if it is balanced on each letter  in $\mathcal{A}$,    it is  {\em balanced on   factors}   if
 it is balanced on all  its   factors, and lastly, it  is  {\em balanced  on   factors  of length $n$}  if it is balanced on  all its factors of length $n$. 
 Similarly,  a subshift  $(X,T)$ with $X \subset {\mathcal A}^{\mathbb Z}$  is said to be   {\em balanced }  on the factor $v \in {\mathcal L} (X)$  if there exists a constant $C_v$ such that for all $w,w'$ in $\mathcal{L}(X)$ with $|w|=|w'|$, then  $||w|_v-|w'|_v|\leq C_v$.  The notions of  balancedness for  letters, words or words of a given length extend   similarly to subshifts. Note  that in  the first papers  devoted  to balancedness,
   balance   was used to  refer to $1$-balance  for letters (see e.g. \cite{Lothaire:2002}).

Proposition \ref{prop:decrease} below  (which is  a  rephrasing of  \cite[Lemma 23]{Adam:03}) 
 states   that   balancedness is preserved when  decreasing   the length of factors.
It is thus  sufficient to  prove   that  balancedness does not hold for some length to obtain that  it does not hold for all   larger lengths.
This will  be used in the examples of Section \ref{subsec:examples}. 

\begin{proposition}\cite[Lemma 23]{Adam:03} \label{prop:decrease}
If  a bi-infinite word    $u$  is   balanced  on  some factor $v$,  then it is balanced    on   the prefix of $v$ of length $|v|-1$.
If  a  bi-infinite word  $u$ or a subshift $(X,T)$  is  balanced on factors of length $n+1$, then  it is  balanced on factors of length $n$. 
\end{proposition}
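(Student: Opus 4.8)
The plan is to prove the two parts of Proposition~\ref{prop:decrease} by counting occurrences of $v$ against occurrences of the shorter prefix. Let $v = v_0 \cdots v_{k}$ with $|v| = k+1$, and write $p$ for the prefix $v_0 \cdots v_{k-1}$ of length $|v|-1$. The key combinatorial observation is that every occurrence of $v$ in a factor $w$ begins at a position where $p$ also occurs, so $|w|_v$ can be recovered from the occurrences of $p$ by asking, at each occurrence of $p$, whether it is followed by the letter $v_k$. First I would make this precise: for a factor $w$, each occurrence of $p$ inside $w$ either extends to an occurrence of $v$ or does not, and the occurrences of $p$ that lie too close to the right end of $w$ (within the last $k$ positions) may fail to have room to extend. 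This gives a relation of the form $|w|_v = |w|_p - (\text{boundary terms})$, where the boundary terms are controlled by a constant depending only on $|v|$.

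Next I would carry out the comparison. Fix two factors $w, w'$ with $|w| = |w'|$, and suppose $u$ is balanced on $v$ with constant $C_v$. I want to bound $\bigl||w|_p - |w'|_p\bigr|$. The clean way is to compare $p$-occurrences to $v$-occurrences at a shifted scale: if $w$ and $w'$ both have length $n$, I consider the extensions of $w$ and $w'$ one letter to the right inside $u$ (using uniform recurrence so that such extensions exist in the language), so that occurrences of $p$ in $w$ correspond, up to a bounded number of boundary positions, to occurrences of $v$ in the extended words $wv_k$-type factors. Since $v$-occurrences in equal-length factors differ by at most $C_v$, and the passage from $p$-counts to $v$-counts introduces only a bounded discrepancy (again depending only on $|v|$), I conclude $\bigl||w|_p - |w'|_p\bigr| \le C_v + C'$ for a constant $C'$ absorbing the boundary effects. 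This establishes balancedness on $p$ with constant $C_p := C_v + C'$, proving the first assertion.

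For the second assertion, balancedness on all factors of length $n+1$ means there is a uniform constant working simultaneously for every length-$(n+1)$ factor. Each factor $v$ of length $n$ is the prefix of length $|v|-1$ of at least one factor $v a$ of length $n+1$ (such a right-extension exists because $v \in \mathcal{L}(X)$ and the language is extendable, which holds for the recurrent words and minimal subshifts in play). Applying the first assertion to each such $va$, I obtain balancedness on $v$; taking the maximum of the resulting constants over the finitely many factors of length $n$ gives a single constant valid for all factors of length $n$, which is exactly balancedness on factors of length $n$.

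I expect the main obstacle to be the bookkeeping of the boundary terms, namely making rigorous that the difference between $|w|_v$ and $|w|_p$ is governed by a constant independent of $n$. The subtlety is that an occurrence of $p$ near the right boundary of $w$ might or might not extend to a $v$ inside $w$, and one must argue that only the last $k = |v|-1$ positions of $w$ can contribute such ambiguous occurrences, so the boundary contribution is at most $k$ per word and hence cancels up to a constant when comparing two equal-length words. Handling the bi-infinite versus finite-word distinction cleanly, and invoking extendability of the language to guarantee the needed right-extensions exist, is where care is required; the rest is a direct counting argument.
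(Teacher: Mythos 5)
The counting identity at the heart of your first paragraph is false, and this is a genuine gap. Writing $v=pv_k$ with $p$ the prefix of length $|v|-1$, an occurrence of $p$ in $w$ that does not extend to an occurrence of $v$ is \emph{not} a boundary phenomenon: it is an occurrence of $p$ followed by some letter $a\neq v_k$, and there can be $\Theta(|w|)$ of these. The correct relation is $|w|_p=\sum_{a:\,pa\in\mathcal{L}(u)}|w|_{pa}+\epsilon_w$ with $\epsilon_w\in\{0,1\}$ (the only true boundary term being a possible occurrence of $p$ as a suffix of $w$), so balancedness on $p$ is controlled by balancedness on \emph{all} right extensions $pa$ simultaneously, not by balancedness on the single extension $v$. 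Your argument therefore does not prove the first sentence, and in fact that sentence cannot be proved this way: take any bi-infinite $x\in\{a,b\}^{\mathbb Z}$ that is unbalanced on $a$ and set $u=\phi(x)$ with $\phi(a)=0012$, $\phi(b)=1012$. Then $01$ occurs in $u$ exactly at the positions congruent to $1$ modulo $4$, so $u$ is balanced on $v=01$ with constant $1$, while the number of $0$'s in an aligned window covering a block $w'$ of $x$ equals $|w'|+|w'|_a$ and has unbounded discrepancy; hence $u$ is unbalanced on the length-one prefix $0$ of $v$.

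Your second paragraph reduces the second assertion to the first and so inherits the gap, but there the repair is immediate: the hypothesis gives a uniform constant $C$ for \emph{every} factor of length $n+1$, hence for every right extension $va$ of a given $v\in\mathcal{L}_n$, and summing the identity above over $a$ yields $\bigl||w|_v-|w'|_v\bigr|\le |\mathcal{A}|C+1$. Once corrected in this way, your argument is essentially the paper's proof in different clothing: the paper recodes $u$ by its factors of lengths $n+1$ and $n$ and observes that the prefix map induces a letter-to-letter substitution sending $u^{(n+1)}$ to $u^{(n)}$, under which the number of occurrences of a letter in the image is the \emph{sum} of the numbers of occurrences of its several preimage letters --- precisely the summation over all right extensions that your version omits. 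Note that the paper's recoding argument, like yours, only establishes the second assertion; neither derives balancedness on a prefix from balancedness on one factor alone.
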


\begin{proof}   Let $u \in {\mathcal A}^{\mathbb Z}$.
 For every $n$,  we  consider an alphabet  ${\mathcal A}_n$   and  a  bijection   $  \theta_n \colon  {\mathcal A}_n \rightarrow  {\mathcal L}_n(u)$.
 The word $u^{(n)}:= \theta_n(u)$, defined over the alphabet  ${\mathcal A}_n$,  codes 
factors of length $n$  according  to the bijection $\theta_n$ in  the same order as in $u$  with overlaps  and   without repetition.
The map  $\theta_n  \circ \pi_n  \circ  \theta_{ n+1} ^{-1}$ is a morphism  from the monoid  ${\mathcal A}_{n+1}^*$ to 
${\mathcal A}_{n+1}^*$  that maps  letters to letters: it maps  the coding of  a  block of length $n+1$ to   the coding of its prefix of length $n$. 
The word $u^{(n)}$   is thus  the image by a  letter-to-letter  substitution of the   word $u^{(n+1)}$; indeed 
$u^{(n)} = \theta_n  \circ \pi_n  \circ  \theta_{ n+1} ^{-1} (u^{(n+1)}).$
We conclude  by noticing that the action of a letter-to-letter substitution preserves balancedness. 
\end{proof}
\begin{example}\label{ex:TMcode}
We consider the Thue--Morse substitution $\sigma_{TM}$ defined over $\{0,1\}$ as  $\sigma_{TM} \colon 0 \mapsto 01, \  1 \mapsto 10.$ One has ${\mathcal L}_2 (\sigma_{TM})= \{00,01,10,11\}$.
Let 
$a=00$, $b=01 $, $c=10$, $d=11$.
The  central  letters of the two-sided  word  $u=(\sigma^2)^{\omega}( 0)\cdot (\sigma^2)^{\omega} (0)$ are
$$ \cdots 0110  1001  1001 0110   \cdot 0110  1001 1001 0110 \cdots $$
coded by  the central letters of $u^{(2)}$ 
$$ \cdots bdcb     cabd   cabc    bdca \cdot bdcb  cabd  cabc   bdcb \cdots $$
\end{example}

\medskip

{\bf Discrepancy.} Let $u \in {\mathcal A}^{\mathbb Z}$ be a bi-infinite word and assume that each  factor  $v \in {\mathcal L}(u)$ admits a  frequency $\mu_v$ in $u$. The \emph{  discrepancy} $\Delta_v(u) $ of $u$  with respect to $v$ is
defined as 
\[
\Delta_v(u) =\sup _{n \in \mathbb{N}} | | u_{-n}  \cdots  u_0 \cdots u_{n}| _v -  (2n+1)  \mu_v|.  
\]
The quantity  $\Delta_v(u)$ is considered e.g. in \cite{Adam:03,Adam:04}.
One easily  checks that   $u$ is   balanced on the factor $v$  if and only if its discrepancy $\Delta_v(u)$ is finite.
If  $\Delta_v(u)$ is finite, then the cylinder    $[v]$ is  said to be a    {\em bounded remainder set}, according to the terminology developed in  classical  discrepancy  theory. 
These definitions extend to any  subshift $(X,T)$ in a  straightforward way.

\subsection{Balancedness and substitutions} \label{subsec:subs}

We now consider the special case of substitutions. We follow  here \cite{Queffelec:2010}. Let $\sigma$ be a primitive substitution. Letter frequencies for a primitive  substitution  $\sigma$  are known  to  be provided by the  normalized  positive eigenvector  (whose sum of coordinates equal $1$) associated with the Perron 
 eigenvalue  of the substitution matrix $M_{\sigma}$.  We call it the {\em renormalized  Perron eigenvector}. In other words, 
 the frequency $\mu_a$ of  the   letter $a$   is the  $a$th entry of this vector.  Similarly,  frequencies of   factors  in ${\mathcal L} (X_{\sigma})$ (not only letters)
 are  known to be provided by  eigenvectors of   a  matrix,  namely the two-letter  substitution matrix  that is  associated  with $\sigma$ as follows. This  construction relies on the  coding by  $k$-letter blocks  
 described in the proof of Proposition \ref{prop:decrease}.

\medskip

  {\bf Two-letter factor  substitutions.} 
Given a substitution $\sigma$, consider the finite set $\mathcal{L}_2(X_\sigma)$ as an alphabet and define the 
{\em two-letter factor substitution} $\sigma_2$ on $\mathcal{L}_2(X_\sigma)$ as follows (it is also called {\em induced substitution} in \cite{BaakeGrimm}): for every $u=ab\in \mathcal{L}_2(X_\sigma)$, $\sigma_2(u)$ is the word  over the alphabet  $\mathcal{L}_2(X_\sigma)$ made of the first $|\sigma(a)|$ factors of length 2 in $\sigma(u)$. For instance, if $ab\in \mathcal{L}_2(X)$ with $\sigma(a)=a_0\cdots a_r$, $\sigma(b)=b_0\cdots b_s$, then
$$\sigma_2(ab)=(a_0a_1)(a_1a_2)\cdots (a_{r-1}a_r)(a_rb_0).$$
We recall from   \cite[Lemma 5.3--5.4]{Queffelec:2010} that if the substitution $\sigma$ is primitive, then $\sigma _2$ is also primitive, and $\sigma_2$ has the same 
Perron eigenvalue as $\sigma$.
 Frequencies of factors of length $2$ are  provided by the  renormalized Perron eigenvector of $M_{\sigma_2}$.

One then  checks that the  substitution  $\sigma_2$ admits  a  bi-infinite fixed  point that is composed   by all the factors of length  $2$ of  $v$  without repetition and in the same order as in $v$.

\begin{example}\label{ex:TMsigma2}
We continue Example \ref{ex:TMcode}
and  consider the Thue--Morse substitution $\sigma_{TM}$ defined over $\{0,1\}$ as  $\sigma_{TM} \colon 0 \mapsto 01$ and $\sigma_{TM} \colon 1 \mapsto 10.$ One has ${\mathcal L}_2 (\sigma_{TM})= \{00,01,10,11\}$.
One  has in particular
$\sigma(00)= 0101$ and  $\sigma_2(00)= (01)(10)$.
One  checks that  $\sigma^{(2)} (a)=bc$,  $\sigma^{(2)} (b)=bd$,  $\sigma^{(2)} (c)=ca$, $\sigma^{(2)} (d)=cb$,  by setting
$a=00$, $b=01 $, $c=10$, $d=11$. Observe that the image of $0$ by $\sigma^2$ begins and ends  with $0$. We  thus can  consider the  bi-infinite word 
$(\sigma^2)^{\omega}( 0)\cdot (\sigma^2)^{\omega} (0)$. Similarly,   the image of $b$ by $\sigma_2^2$ begins  with $b$, 
the image of $a$ by $\sigma_2^2$ ends  with $a$,
and we can consider  the  bi-infinite word 
$(\sigma_2^2)^{\infty}( a)\cdot (\sigma_2^2)^{\infty} (b)$.
Note that  powers of  the substitution $\sigma$  generate the same  subshift as $\sigma$.
The  central  letters of the two-sided  word  $(\sigma^2)^{\omega}( 0)\cdot (\sigma^2)^{\omega} (0)$ are
$$ \cdots 0110  1001  1001 0110   \cdot 0110 \, 1001 \, 1001\,  0110 \cdots $$
whose coding is provided by 
the  central  letters of the two-sided  word  $(\sigma_2^2)^{\omega}  (a)\cdot (\sigma_2^2)^{\omega} (b)$ 
$$ \cdots bdcb \,    cabd \,  cabc \,   bdca \cdot bdcb \, cabd \, cabc \,  bdcb \cdots $$
The  eigenvalues of $M_{\sigma}$ are  $2$ and $0$, and the eigenvalues of $M_{\sigma_2}$  are $0$, $1$, $-1$  and $2$.
\end{example}

 We can similarly define a notion of  higher-order   factor  substitution $\sigma_k$   with respect to factors of length $k$. 
 From $M_{\sigma}$ to  $M_{\sigma_2}$,  eigenvalues of modulus $1$ can be added, such as illustrated by  the example of the Thue--Morse substitution (see Example~\ref{ex:TMsigma2}). 
One crucial property is that 
the set of  eigenvalues of $M_{\sigma_2}$ is   a subset  of the  spectrum of $M_{\sigma_k}$ for $ k \geq 2$; only the eigenvalue $0$ is added, according to 
\cite[Corollary 5.5]{Queffelec:2010}.

\begin{example}\label{ex:TMsigma3}
We continue Example \ref{ex:TMcode} and \ref{ex:TMsigma2} with  the Thue--Morse substitution $\sigma_{TM}$.
One  has ${\mathcal L}_3 (\sigma_{TM})= \{001,010, 011,100,101,110\}$. One checks that 
 $\sigma^{(3)} (a)=be$,  $\sigma^{(3)} (b)=cf$,  $\sigma^{(3)} (c)=cf$,  $\sigma^{(3)} (d)=da$,  $\sigma^{(3)} (e)=da$,  $\sigma^{(3)} (f)=eb$, by setting $a=001$, $b=010$, $c=011$, $d=100$, $e=101$, $f=110$.
The  eigenvalues of  $M_{\sigma_3}$  are $0$, $1$, $-1$  and $2$, and are the  same as for $M_{\sigma_2}$; the eigenvalue $0$ has  multiplicity  $3$ for  $M_{\sigma_3}$
and $1$ for  $M_{\sigma_2}$,  and the other eigenvalues have the same multiplicity for  both matrices.
\end{example}

Moreover, we can compute the frequency of any factor thanks to  $M_{\sigma}$ and $M_{\sigma_2}$.    Roughly, given  a factor, it occurs in the image  by some    power $\sigma^n$ of $\sigma$  of   either  a letter or of a factor of length $2$. It  thus can be determined by
the  normalized Perron eigenvectors  of $M_{\sigma}$ and $M_{\sigma_2}$.
As a consequence,   balancedness for  factors can  be described  in terms of  the spectrum of the substitution matrices
$M_{\sigma}$ and $M_{\sigma_2}$,  which  even  provide estimates on the symbolic discrepancy and on the balance  function $B_n(u)$ \cite{Adam:03,Adam:04}.

\begin{theorem}[\cite{Adam:03,Adam:04}]\label{theo:Adam}
Let $\sigma$ be a primitive substitution. If $\sigma $  (resp. ${\sigma_2}$)  is a  Pisot substitution, 
  then  the subshift $X_{\sigma}$ is  balanced  on letters (resp.  on factors).
Conversely, if   $X_{\sigma}$ is  balanced  on letters (resp. on factors), then  the  Perron eigenvalue  of   $M_{\sigma}$   (resp.  $M_{\sigma_2}$) is the unique eigenvalue  of $M_{\sigma}$   (resp.  $M_{\sigma_2}$) that is  larger than  $1$ in modulus, and  all possible  eigenvalues of modulus one  of $M_{\sigma}$  (resp.  $M_{\sigma_2}$) are  roots of unity.
\end{theorem}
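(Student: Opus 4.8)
The plan is to recast balancedness as the boundedness of a spectral projection of abelianized factors, and then read off the conclusion from the geometry of the powers of the substitution matrix. Write $\ell(w)=(|w|_a)_{a\in\mathcal{A}}\in\mathbb{R}^{\mathcal A}$ for the abelianization of a word $w$, and let $f$ be the renormalized Perron eigenvector, so that $\mu_a=f_a$ and $\tr{\mathbf 1}\,f=1$. Since $\beta$ is a simple dominant eigenvalue, there is an $M_\sigma$-invariant splitting $\mathbb{R}^{\mathcal A}=\mathbb{R}f\oplus W$, where $W$ is the sum of the generalized eigenspaces for the eigenvalues different from $\beta$; let $\pi$ be the projection onto $W$ along $\mathbb{R}f$ and $M_W:=M_\sigma|_W$. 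The first step is the elementary equivalence: since each factor has a frequency, $X_\sigma$ is balanced on letters iff $\sup_w\bigl\||w|_a-|w|\mu_a\bigr\|<\infty$, and because $\pi(f)=0$ and $\tr{\mathbf 1}\,\ell(w)=|w|$, this holds iff $\{\pi(\ell(w)):w\in\mathcal L(X_\sigma)\}$ is a bounded subset of $W$. The factor case follows by running the same argument for the two-letter factor substitution $\sigma_2$ on the coding $u^{(2)}$ of Proposition~\ref{prop:decrease}: balancedness on factors of length $2$ is balancedness on letters of $u^{(2)}$, hence governed by $M_{\sigma_2}$, while balancedness on longer factors is governed by the matrices $M_{\sigma_k}$ whose nonzero spectrum coincides with that of $M_{\sigma_2}$ by \cite[Corollary 5.5]{Queffelec:2010}; so the factor case reduces verbatim to the letter case with $M_\sigma$ replaced by $M_{\sigma_2}$.

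The second step produces a usable formula for $\pi(\ell(w))$. By recognizability of primitive substitutions \cite{Mosse:92,Queffelec:2010}, any sufficiently long factor $w$ can be written as $p\,\sigma(w')\,s$ with $w'\in\mathcal L(X_\sigma)$ and $p,s$ in a finite set of prefixes and suffixes of images of letters, so $\ell(w)=M_\sigma\ell(w')+e$ with $e$ ranging over a finite set; applying $\pi$ and using $\pi M_\sigma=M_W\pi$ gives $\pi(\ell(w))=M_W\pi(\ell(w'))+\pi(e)$. Iterating the desubstitution (which divides lengths by roughly $\beta>1$) down to bounded words yields a finite digit expansion $\pi(\ell(w))=\sum_{k=0}^{K}M_W^{k}e_k$, where the $e_k$ lie in a fixed finite set $\mathcal{E}\subset W$ and $K\asymp\log_\beta|w|$. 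This single identity drives both directions.

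For the sufficient direction, assume $\sigma$ (resp. $\sigma_2$) is Pisot, that is $\rho(M_W)<1$. Choosing $\rho(M_W)<r<1$ we have $\|M_W^{k}\|\le C_r\, r^{k}$, so $\|\pi(\ell(w))\|\le C_r\,(\max_{e\in\mathcal{E}}\|e\|)\sum_{k\ge 0}r^{k}$, a bound uniform in $w$; hence $\{\pi(\ell(w))\}$ is bounded and $X_\sigma$ is balanced on letters (resp. on factors). For the necessary direction I argue by contraposition on $M_W$. If $M_W$ has an eigenvalue $\lambda$ with $|\lambda|>1$, then since the vectors $\pi(e_a)$, $a\in\mathcal A$, span $W$, I may pick a letter $a$ whose $\pi(e_a)$ has nonzero component in the generalized $\lambda$-eigenspace; then $\pi(\ell(\sigma^{n}(a)))=M_W^{n}\pi(e_a)$ has norm growing at least like $|\lambda|^{n}\to\infty$, so $X_\sigma$ is not balanced. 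This forces $\beta$ to be the unique eigenvalue of modulus larger than $1$.

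The remaining and genuinely delicate case is an eigenvalue $\lambda$ of $M_W$ with $|\lambda|=1$ that is not a root of unity, where the crude growth estimate fails because a single orbit $M_W^{n}\pi(e_a)$ stays bounded. The plan is instead to exploit the digit expansion: on the two-dimensional real block where $M_W$ acts as the irrational rotation $e^{2\pi i\theta}$ with $\theta\notin\mathbb Q$, the projected deviations along prefixes of the fixed point are the partial sums of the affine cocycle $x\mapsto\lambda x+e_k$, and boundedness of all these partial sums amounts to the digit contributions forming a bounded coboundary for the rotation. Using that the digit sequence genuinely varies (the fixed point is aperiodic and the $e_k$ are not eventually constant along the self-similar structure), together with the fact that an irrational rotation admits no nontrivial bounded coboundary of this form, one produces prefixes with $\pi(\ell(w))$ arbitrarily large, contradicting balancedness. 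This coboundary and equidistribution argument is the main obstacle; the eigenvalue-larger-than-one case and the Pisot estimate are comparatively routine once the expansion $\pi(\ell(w))=\sum_k M_W^{k}e_k$ is in place.
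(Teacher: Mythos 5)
The paper does not actually prove this statement: Theorem~\ref{theo:Adam} is imported verbatim from \cite{Adam:03,Adam:04}, so the comparison is with the argument of those references rather than with an in-paper proof. Your framework is the standard one and matches theirs: deviation vectors $\ell(w)-|w|f$, the $M_\sigma$-invariant splitting $\mathbb{R}^{\mathcal A}=\mathbb{R}f\oplus W$, the prefix--suffix (Dumont--Thomas) digit expansion $\pi(\ell(w))=\sum_k M_W^k e_k$, and the reduction of the factor case to the letter case for $\sigma_2$ via the block codings of Proposition~\ref{prop:decrease} and \cite[Cor.~5.5]{Queffelec:2010}. Your Pisot (sufficiency) direction and your treatment of an eigenvalue of modulus strictly larger than $1$ are correct as written.

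The genuine gap is the final case, a unimodular eigenvalue $\lambda$ that is not a root of unity, which you flag but do not close, and the sketch you offer would not close it. First, the assertion that ``an irrational rotation admits no nontrivial bounded coboundary of this form'' is not usable: partial sums $\sum_{k\le n}\lambda^k e_k$ with digits from a finite set can perfectly well stay bounded for suitably chosen digit sequences (greedily pick each digit to oppose the current partial sum, using density of $(\lambda^k)$ on the circle), and the index $k$ here runs over levels of the expansion, not over an orbit of a rotation, so the coboundary analogy is loose. What must be exploited is that the admissible digit sequences are precisely the paths of the prefix automaton of $\sigma$, and that primitivity gives enough freedom among these paths to force unboundedness; your sketch does not engage with this. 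Second, you miss the arithmetic reduction that makes the case tractable: $\lambda$ is an algebraic integer, so if $|\lambda|=1$ and $\lambda$ is not a root of unity, Kronecker's theorem yields a Galois conjugate $\lambda'$ with $|\lambda'|>1$, and $\lambda'$ is again an eigenvalue of the integer matrix $M_\sigma$; combined with your modulus-greater-than-one case this forces $\lambda'=\beta$, i.e.\ only the Salem-like situation where $\lambda$ is a conjugate of the Perron eigenvalue survives. That residual case is the one requiring real work, and the known proofs handle it not by an equidistribution argument on a rotation but either by a Galois-conjugate analysis of the prefix--suffix sums or through the link between balancedness and continuous eigenvalues (the coboundary criterion of Proposition~\ref{eq_cob} together with the characterization of additive eigenvalues of substitution systems). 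As it stands, your proposal proves the theorem only under the additional hypothesis that every unimodular eigenvalue of $M_\sigma$ (resp.\ $M_{\sigma_2}$) is a root of unity.
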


\begin{remark}\label{rem:adam}
In the case where  the matrix $M_{\sigma}$ admits a root of unity as an eigenvalue,     we cannot decide whether balancedness  holds or not just  by  inspecting  the matrix $M_{\sigma}$. An example of a  primitive and aperiodic  substitution  $\sigma$ that is balanced on  factors is given in Example~\ref{ex:theta2} having 
$1$  as an eigenvalue (aperiodic means that   $X_{\sigma}$  contains no periodic word). 
Note also that a  substitution can be balanced on letters but    unbalanced on factors if there exists an eigenvalue  of   modulus $1$  that occurs 
in the spectrum  of $M_{\sigma_2}$. This is the case of the Thue--Morse substitution   (see Example  \ref{ex:TMsigma2} and Corollary \ref{cor:TM}).
We   also deduce from Theorem~\ref{theo:Adam}  that 
primitive Pisot substitutions     have finite   discrepancy with respect to letters.
\end{remark}

      \begin{example} \label{ex:theta2} This example has been  communicated by J. Cassaigne and M. Minervino.
We consider the substitution $\sigma$ over the alphabet $\{1,2,3\}$ defined by 
   $ \sigma \colon 1 \mapsto 121, 2 \mapsto 32, 3 \mapsto 321$.
   Its spectrum is $\{1, \frac{3 \pm \sqrt 5 }{2}\}$ and it is  balanced on factors. 
   This  thus provides  an example of  a  substitution that  admits in the spectrum of its matrix   the eigenvalue $1$ and that is  balanced on   factors. 
   Indeed, consider the  Sturmian substitution
   $\tau  \colon 3 \mapsto 30, 0 \mapsto 300$.
   The subshift $(X_{\sigma},T)$ is  deduced from the  Sturmian shift  $(X_{\tau},T)$ (which is balanced on factors by \cite{FagnotVuillon:02}) by applying  the  substitution     $\varphi \colon 0 \mapsto 21, 3 \mapsto 3$, which preserves balancedness.
    \end{example}

\subsection{Coboundaries and  topological eigenvalues } \label{subsec:coboundaries}
We  now recall   a convenient  topological interpretation of the notion of  balancedness in terms of   coboundaries and   topological eigenvalues.

Let $(X,T)$  be a topological dynamical system, that is, $T$ is a homeomorphism acting on the compact   space $X$. Consider e.g. $(X_{\sigma},T)$ for some primitive substitution  $\sigma$ or some subshift $(X,T)$.  Recall that it is said to be \emph{minimal}  if  every non-empty   closed  $T$-invariant subset   of $X$ is  equal to $X$.  We  let denote  by $C(X,\mathbb{R})$ the additive group of continuous maps from $X$ to $\mathbb{R}$ and by  $C(X,\mathbb{Z})$ the additive group of continuous maps from $X$ to $\mathbb{Z}$. We take $\Z$ as a topological space with the discrete topology, so that any $f\in C(X,\Z)$ is {\it locally constant}: for all $x\in X$, there is an open neighborhood $U$ of $x$ such that $y\in U$ implies $f(y)=f(x)$. Since $X$ is a compact space, any $f\in C(X,\Z)$ takes a finite number of values. These two conditions imply that, for any $f\in C(X,\Z)$, there exists $R>0$ such that for all $x,y\in X$, $d(x,y)\leq R $ implies $  f(x)=f(y)$. If $(X,T)$ is a subshift, then, for all $f\in C(X,\Z)$, there exists a positive integer $k$ such that   $x_{[-k,k]}= y_{[-k,k]}$
 implies $  f(x)=f(y)$.

For every $g\in C(X,\mathbb{R})$, we define the {\it coboundary} of $g$ by
\begin{equation}\label{coboundary}
\partial g= g\circ T - g. 
\end{equation} 
The map $g\mapsto \partial g$ is an endomorphism of $C(X,\mathbb{R})$. When an element $f$ belongs to $\partial C(X,\mathbb{R})$, we say that {\it f is a coboundary}. Two functions $f,g$ are said to be {\it cohomologous} if $f-g$ is a coboundary.
For any  non-negative integer $n$,  $f^{(n)}$ stands for  the  map    in $C(X,\mathbb{R})$  defined    for any $x \in X $ as    $$f^{(n)} (x): =f(x)+f\circ T(x)+\cdots+f\circ T^j(x)+\cdots+ f\circ T^{n-1}(x).$$
The family   of maps  $(f^{(n)})_{n \in {\mathbb N}}$  is  called the {\em  cocycle}    of $f$.
The following    theorem states that   being a coboundary means  having    a bounded cocycle.

\begin{theorem}[Gotshalk-Hedlund's Theorem \cite{GotHed:55}]\label{GH} Let $(X,T)$  be a minimal topological dynamical system. The map  $f\in C(X,\mathbb{R})$ is a coboundary if and only in there exists $x_0\in X$ such that the sequence $(f^{(n)}(x_0))_{n\in\mathbb{N}}$ is bounded.
\end{theorem}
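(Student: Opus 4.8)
The plan is to give the standard skew-product proof. The forward implication is immediate: if $f=\partial g=g\circ T-g$, then the cocycle telescopes, $f^{(n)}(x)=g(T^nx)-g(x)$, and since $g$ is continuous on the compact space $X$ it is bounded, so $|f^{(n)}(x)|\le 2\sup_X|g|$ for every $x$ and every $n$. Thus not only is $(f^{(n)}(x_0))_{n}$ bounded at one point $x_0$, it is bounded uniformly in $x_0$ and $n$.

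For the converse, the idea is to realize $g$ as the fibre coordinate of an $F$-invariant graph inside a skew product. I would introduce the homeomorphism $F\colon X\times\mathbb{R}\to X\times\mathbb{R}$, $F(x,t)=(Tx,\,t+f(x))$, so that $F^n(x,t)=(T^nx,\,t+f^{(n)}(x))$. By hypothesis the forward orbit of $(x_0,0)$ is $\{(T^nx_0,f^{(n)}(x_0))\}_{n\ge 0}$, which lies in the compact set $X\times[-K,K]$ with $K=\sup_n|f^{(n)}(x_0)|<\infty$. Let $\Omega$ be the $\omega$-limit set of $(x_0,0)$ under $F$; it is nonempty, compact, fully $F$-invariant, and contained in $X\times[-K,K]$. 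Since $(X,T)$ is minimal the orbit of $x_0$ is dense, and from this one checks the projection $\pi\colon\Omega\to X$ is onto. Finally I would select, by Zorn's lemma, an $F$-minimal nonempty compact subset $M\subseteq\Omega$; as $\pi(M)$ is closed and $T$-invariant, minimality of $(X,T)$ gives $\pi(M)=X$.

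The heart of the matter, and the step I expect to be the main obstacle, is to show that $M$ is the graph of a function $g\colon X\to\mathbb{R}$, i.e. that $\pi|_M$ is injective. Here I would exploit the vertical translations $S_c(x,t)=(x,t+c)$, which commute with $F$, so that each $S_cM$ is again $F$-minimal, and two minimal sets are either equal or disjoint. Set $H=\{c\in\mathbb{R}:\,S_cM=M\}$. If $(x,t_1),(x,t_2)\in M$, then $S_{t_2-t_1}M$ meets $M$ and hence equals it, so $t_2-t_1\in H$. But a nonzero $c\in H$ would force every fibre $M_x=\{t:(x,t)\in M\}$ to satisfy $M_x+c=M_x$, making each nonempty fibre unbounded; this contradicts $M\subseteq X\times[-K,K]$ together with $\pi(M)=X$. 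Therefore $H=\{0\}$, the fibres of $M$ are singletons, and $M=\{(x,g(x)):x\in X\}$.

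It then remains only to read off the conclusion. Since $M$ is compact and $X$ is Hausdorff, $\pi|_M$ is a homeomorphism and $g=\mathrm{pr}_2\circ(\pi|_M)^{-1}$ is continuous. Full $F$-invariance of the graph gives $(Tx,\,g(x)+f(x))\in M$, and as the fibre over $Tx$ is the single point $g(Tx)$, we obtain $g(Tx)=g(x)+f(x)$, that is $f=g\circ T-g=\partial g$. Hence $f$ is a coboundary. The only points requiring care beyond the graph step are the verifications that $\pi(\Omega)=\pi(M)=X$ (which is exactly where minimality is used) and the existence of the minimal set $M$.
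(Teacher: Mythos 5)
The paper does not prove this statement: it is quoted as the classical Gottschalk--Hedlund theorem with a citation to the original reference, so there is no in-paper argument to compare against. Your skew-product proof is the standard textbook argument for this theorem and is correct: the telescoping identity handles the forward direction, and in the converse the boundedness hypothesis confines the $F$-orbit of $(x_0,0)$ to $X\times[-K,K]$, the minimality of $(X,T)$ forces $\pi(M)=X$, and the vertical-translation argument (two $F$-minimal sets are equal or disjoint, and a nontrivial period $c$ would make the fibres unbounded) correctly shows that the minimal set $M$ is a graph of a continuous $g$ with $\partial g=f$.
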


Let $(X,T)$ be a 
topological dynamical system, where $T$ is a homeomorphism.
A non-zero complex-valued continuous function $f\in C(X,\mathbb{C})$
is an {\em eigenfunction} for $(X,T)$ if there exists
$\lambda \in  {\mathbb C} $ such that
$\forall x \in X, \ f(Tx)=\lambda f(x).$
The  eigenvalues corresponding to those eigenfunctions
are called the {\em continuous
eigenvalues} of $(X,T)$. If $\theta $ is such that $e^{2i \pi \theta}$ is an   eigenvalue, 
$\theta$ is said to be an {\em additive topological   eigenvalue}. 

As a direct consequence of  Theorem \ref{GH},  we now can reformulate     balancedness in spectral terms.

\begin{proposition}\label{eq_cob}
Let  $(X,T)$  be a  minimal and uniquely ergodic  subshift and let $\mu$ stand for its invariant measure. Given a factor $v\in\mathcal{L}_X$, define
\begin{equation}\label{fv}
f_v = \chi_{[v]}-\mu([v])\in C(X,\mathbb{R}),
\end{equation}
where $\chi_{[v]}$ stands for the characteristic function of the cylinder $[v]$.
Then, $(X,T)$ is balanced on the factor $v$ if and only if the map  $f_v$ is a coboundary. 
 If $\sigma$ is  balanced on the factor $v$, then $\mu[v]$ is  an additive  topological  eigenvalue of $(X,T)$.
\end{proposition}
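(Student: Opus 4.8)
The plan is to translate both equivalences into statements about the Birkhoff cocycle $f_v^{(n)}$ and then invoke Gottschalk--Hedlund (Theorem~\ref{GH}). First I would record the combinatorial meaning of the cocycle: for $x\in X$ and $n\geq 1$,
$$f_v^{(n)}(x)=\sum_{j=0}^{n-1}\chi_{[v]}(T^jx)-n\,\mu([v])=|x_{[0,n+|v|-2]}|_v-n\,\mu([v]),$$
since $\chi_{[v]}(T^jx)=1$ exactly when $v$ occurs in $x$ at position $j$, and the occurrences counted for $0\leq j\leq n-1$ are precisely those lying in the window $x_{[0,n+|v|-2]}$ of length $n+|v|-1$. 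Conversely, any factor $w\in\mathcal{L}(X)$ of length $m$ is realized as $x_{[0,m-1]}$ for some $x\in X$, and then $|w|_v=f_v^{(p)}(x)+p\,\mu([v])$ with $p=m-|v|+1$. These two identities are the only bridge I need between the combinatorial side (occurrence counts in equal-length factors) and the dynamical side (the cocycle).

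For the direction ``coboundary $\Rightarrow$ balanced'', suppose $f_v=\partial g=g\circ T-g$. Then the cocycle telescopes, $f_v^{(n)}(x)=g(T^nx)-g(x)$, so $|f_v^{(n)}(x)|\leq 2\|g\|_\infty$ uniformly in $n$ and $x$ by compactness of $X$. Using the realization $|w|_v=f_v^{(p)}(x)+p\,\mu([v])$, two factors $w,w'$ of a common length $m$ share the same $p$, whence $||w|_v-|w'|_v|\leq 4\|g\|_\infty$ and balancedness holds with $C_v=4\|g\|_\infty$. For the eigenvalue statement I would set $h=e^{2\pi i g}\in C(X,\mathbb{C})$; since $\chi_{[v]}$ is integer-valued, $e^{2\pi i\chi_{[v]}}\equiv 1$, so $h\circ T=e^{2\pi i(g+f_v)}=e^{-2\pi i\mu([v])}h$. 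Thus $\bar h$ is a non-zero continuous eigenfunction for the eigenvalue $e^{2\pi i\mu([v])}$, and $\mu([v])$ is an additive topological eigenvalue.

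The substantive direction is ``balanced $\Rightarrow$ coboundary'', and here the main obstacle is that balancedness only controls \emph{differences} of cocycle values, not the cocycle itself, whereas Theorem~\ref{GH} requires a single orbit along which $(f_v^{(n)}(x_0))_n$ is bounded. Concretely, balancedness yields the uniform oscillation bound $|f_v^{(n)}(x)-f_v^{(n)}(y)|=\big||x_{[0,n+|v|-2]}|_v-|y_{[0,n+|v|-2]}|_v\big|\leq C_v$ for all $x,y\in X$ and all $n$. To upgrade this to an absolute bound I would fix $x_0$, set $a_n=f_v^{(n)}(x_0)$, and combine the cocycle identity $f_v^{(n+m)}(x_0)=a_n+f_v^{(m)}(T^nx_0)$ with the oscillation bound to get the quasi-additivity $|a_{n+m}-a_n-a_m|\leq C_v$. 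Unique ergodicity supplies the missing ingredient, namely $a_n/n\to\int f_v\,d\mu=0$; then $a_n+C_v$ is subadditive and $a_n-C_v$ is superadditive, so Fekete's lemma forces $-C_v\leq a_n\leq C_v$ for every $n$. With $(a_n)_n$ bounded, Theorem~\ref{GH} shows that $f_v$ is a coboundary, and the eigenvalue conclusion then follows from the previous paragraph. I expect the only delicate point to be this passage from bounded oscillation to a genuinely bounded cocycle, which is exactly where unique ergodicity, rather than mere minimality, is used.
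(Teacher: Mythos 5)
Your proposal is correct, and its overall architecture (Gottschalk--Hedlund plus exponentiating the transfer function $g$ to get the eigenfunction) is exactly that of the paper. The one place where you genuinely diverge is the step ``balanced $\Rightarrow$ bounded cocycle'': the paper does not argue this at all in the proof of Proposition~\ref{eq_cob} --- it simply invokes Theorem~\ref{GH}, implicitly relying on the remark in Section~\ref{subsec:dis} that balancedness is equivalent to finite discrepancy $\Delta_v(u)$ (the usual argument there being to cut a long factor into $k$ consecutive blocks of length $n$, compare each block's count of $v$ to the average, and let $k\to\infty$ so that the average tends to $n\mu_v$). You instead derive the quasi-additivity $|a_{n+m}-a_n-a_m|\leq C_v$ from the cocycle identity and the oscillation bound, feed in $a_n/n\to\int f_v\,d\mu=0$ from unique ergodicity, and squeeze $|a_n|\leq C_v$ out of Fekete's lemma applied to $a_n\pm C_v$. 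Both routes use unique ergodicity in the same essential way (to know the frequency exists and to center the cocycle); yours is slightly more abstract but self-contained and makes explicit the point the paper glosses over, namely that bounded oscillation of the cocycle must be upgraded to boundedness of the cocycle itself before Gottschalk--Hedlund applies. The remaining pieces --- the identity $f_v^{(n)}(x)=|x_{[0,n+|v|-2]}|_v-n\mu([v])$, the telescoping for the converse direction, and the computation $h\circ T=e^{-2\pi i\mu([v])}h$ --- match the paper's (mostly implicit) reasoning.
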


\begin{proof}
We assume that  $X$ is balanced on the factor $v$.  By  Theorem \ref{GH},  there exists $g \in  C(X,\mathbb{R})$  such that 
$f_v = g \circ T-g $.
Note that $e^{2i\pi  \chi_{[v]} (u)}=1$ for  any $u\in X$.
This yields
$$\exp^{2i\pi g\circ T}= \exp ^{2i\pi  \mu_v } \exp  ^{ 2i \pi  g}.$$
Hence  $\exp  ^{ 2i \pi  g}$ is a topological   eigenfunction  associated with the  additive topological  eigenvalue $ \mu_v $.
\end{proof}

We will also  need the following statement in  Section \ref{sec:sub}.

\begin{proposition}\label{entiers}
Let $(X,T)$ be a minimal topological dynamical system. If $f\in C(X,\mathbb{Z})$ is a coboundary of some  function $g\in C(X,\mathbb{R})$, then it is the coboundary of some function  $h\in C(X,\mathbb{Z})$.
\end{proposition}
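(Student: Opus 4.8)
The plan is to exploit the integrality of $f$ to reduce the real-valued transfer function $g$ to an integer-valued one, using minimality as the essential lever. First I would pass to the circle by setting $\phi = e^{2 i \pi g}$, a continuous map from $X$ to the unit circle (continuity is inherited from that of $g$). The key point is that adding an integer does not change the value of the exponential: since $\partial g = f$ and $f(x)\in\mathbb{Z}$ for every $x\in X$, I would compute
$$\phi \circ T = e^{2 i \pi (g \circ T)} = e^{2 i \pi (g + f)} = e^{2 i \pi g} = \phi,$$
so that $\phi$ is $T$-invariant.

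Next I would invoke minimality. A continuous $T$-invariant function on a minimal system is necessarily constant: fixing any $x_0\in X$, the orbit $\{T^n x_0 : n\geq 0\}$ is dense, $\phi$ is constant along this orbit by invariance, and continuity then forces $\phi\equiv\phi(x_0)$ on all of $X$. Hence $\phi$ equals a constant $e^{2 i \pi \alpha}$ for some $\alpha\in\mathbb{R}$, which is exactly the statement that $g(x)-\alpha\in\mathbb{Z}$ for every $x\in X$.

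Finally I would set $h = g-\alpha$. By the previous step $h$ is real-valued, continuous, and takes all its values in the discrete set $\mathbb{Z}$; being continuous into $\mathbb{R}$ with image contained in $\mathbb{Z}$, it is locally constant, hence $h\in C(X,\mathbb{Z})$. A direct computation gives $\partial h = (g\circ T-\alpha)-(g-\alpha) = g\circ T - g = f$, so $f$ is indeed the coboundary of an integer-valued function. The only genuinely non-formal step is the passage to a constant via minimality; everything else is a formal manipulation of the coboundary relation, and the hypothesis $f\in C(X,\mathbb{Z})$ is used precisely to ensure $e^{2 i \pi f}\equiv 1$, which is what makes $\phi$ invariant.
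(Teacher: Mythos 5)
Your proof is correct and follows essentially the same route as the paper's: the paper projects $g$ to the torus $\mathbb{T}=\mathbb{R}/\mathbb{Z}$ and shows the projection is an invariant, hence constant, function by minimality, which is exactly your argument with $e^{2i\pi g}$ in place of $\pi\circ g$. The only cosmetic difference is that the paper proves constancy of the invariant function via closed invariant level sets rather than dense orbits; both are standard consequences of minimality.
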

\begin{proof} We recall the proof of \cite{Host:1995,DHP}.  Let $\mathbb{T}=\mathbb{R}/\mathbb{Z}$ be the one-dimensional torus and $\pi:\mathbb{R}\to \mathbb{T}$ the canonical projection. Let $\tilde{\partial}$ denote the coboundary map defined on $C(X,\mathbb{T})$ in the same way as in \eqref{coboundary}. Note first that if $\gamma\in C(X,\mathbb{T})$ and $\tilde{\partial}\gamma=0$, then $\gamma$ is constant. Indeed, let $\tilde{c}\in\mathbb{T}$ and set $Y=\gamma^{-1}(\{\tilde{c}\})$. The subset $Y$ is closed since $\gamma$ is continuous and it is $T-$invariant since $\tilde{\partial}\gamma=0$. The system being minimal, if $Y$ is nonempty, it is necessarily the whole space $X$.\\
Suppose $f\in C(X,\mathbb{Z})$ is the coboundary of some $g\in C(X,\mathbb{R})$. Then, $g\circ T(x)-g(x)\in \mathbb{Z}$ for all $x\in X$. This implies that $\tilde{\partial}(\pi\circ g)=0$ and then there exists $\tilde{c}\in\mathbb{T}$ such that $\pi\circ g(x)=\tilde{c}$  for all $x\in X$. Let $c$ be any element in $\pi^{-1}(\{\tilde{c}\})$ and define $h(x):=g(x)-c$. Since $\pi\circ h=0$, $h\in C(X,\mathbb{Z})$, and it is clear that $\partial h=\partial g= f$.  
\end{proof}

\section{Balancedness   of  dendric words} \label{sec:dendric}

Dendric subshifts are minimal   subshifts defined with respect to combinatorial properties of their language expressed  in terms of  extension graphs, such as  recalled in Section \ref{subsec:dendric}.  Elements  of dendric subshifts   are also called  tree words (see e.g.~\cite{BDFDLPRR:15,BDFDLPRR:15bis,BFFLPR:2015,BDFDLPR:16,Rigidity:17}).  We use the terminology  dendric subshifts   in order to avoid  any ambiguity with respect to the notion of   tree  shift that   refers to 
shifts defined on trees  (see e.g. \cite{AubrunBeal}). We   consider balancedness  for dendric subshifts in Section \ref{subsec:baldend} and prove 
Theorem~\ref{theo:equilibre}.
This class of subshifts   encompasses  subshifts generated by interval exchanges,  as well as Sturmian and Arnoux--Rauzy subshifts  discussed in Section  \ref{subsec:exdendric}.  They  have linear factor complexity.

\subsection{Dendric subshifts} \label{subsec:dendric} 
Extension graphs  are bipartite graphs that   describe  the left and right extensions of factors and dendric subshifts are such that all their extension graphs are trees. 
More precisely, let $(X,T)$ be a subshift on the alphabet ${\mathcal A}$.
For $w \in {\mathcal L} (X) $, we  let denote as 
$$
 \begin{array}{lcl}
 L(w) = \{ a \in {\mathcal A} \mid aw \in  {\mathcal L} (X) \},		\\
R(w) = \{ a \in {\mathcal A} \mid wa \in {\mathcal L} (X)\},	\\
 E(w) = \{ (a,b) \in {\mathcal A} \times {\mathcal A} \mid awb \in {\mathcal L} (X)\}.
 \end{array}
$$

For a word $w \in F$, we consider the undirected bipartite graph ${\mathcal E}(w)$, called its \emph{extension graph},  defined as follows:
its set of vertices is the disjoint union of $L(w)$ and $R(w)$ and its edges are the pairs $(a,b) \in E(w)$.
For an illustration, see Example~\ref{ex:fibo} below.
A  minimal subshift $(X,T)$ is a \emph{dendric subshift} if, for every word $w \in {\mathcal L}(X)$, the graph $\E(w)$ is a tree.

\begin{example}
\label{ex:fibo} Let $\sigma_F$ be the Fibonacci substitution defined  over $\{0,1\}$ by  $\sigma_F \colon 0 \mapsto 01, 1 \mapsto 0$. 
The extension graphs of the empty word and of the  letters  $0$ and $1$  are depicted in Figure~\ref{fig:fibo-ext}.

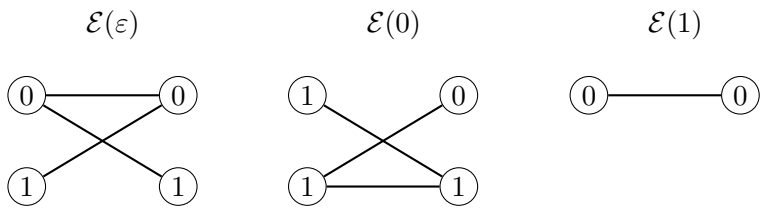
\begin{figure}[h]
 \tikzset{node/.style={circle,draw,minimum size=0.5cm,inner sep=0pt}}
 \tikzset{title/.style={minimum size=0.5cm,inner sep=0pt}}

 \begin{center}
  \begin{tikzpicture}
   \node[title](ee) {$\E(\varepsilon)$};
   \node[node](eal) [below left= 0.5cm and 0.6cm of ee] {$0$};
   \node[node](ebl) [below= 0.7cm of eal] {$1$};
   \node[node](ear) [right= 1.5cm of eal] {$0$};
   \node[node](ebr) [below= 0.7cm of ear] {$1$};
   \path[draw,thick]
    (eal) edge node {} (ear)
    (eal) edge node {} (ebr)
    (ebl) edge node {} (ear);
   \node[title](ea) [right = 3cm of ee] {$\E(0)$};
   \node[node](aal) [below left= 0.5cm and 0.6cm of ea] {$1$};
   \node[node](abl) [below= 0.7cm of aal] {$1$};
   \node[node](aar) [right= 1.5cm of aal] {$0$};
   \node[node](abr) [below= 0.7cm of aar] {$1$};
   \path[draw,thick]
    (aal) edge node {} (abr)
    (abl) edge node {} (aar)
    (abl) edge node {} (abr);
   \node[title](eb) [right = 3cm of ea] {$\E(1)$};
   \node[node](bal) [below left= 0.5cm and 0.6cm of eb] {$0$};
   \node[node](bar) [right= 1.5cm of bal] {$0$};
   \path[draw,thick]
    (bal) edge node {} (bar);
  \end{tikzpicture}
 \end{center}

 \caption{The extension graphs of $\varepsilon$ (on the left), $0$ (on the center) and $1$ (on the right) are trees.}
 \label{fig:fibo-ext}
\end{figure}
\end{example}

\subsection{Balancedness for  dendric subshifts} \label{subsec:baldend}
The main result    of this section  is  Theorem \ref{theo:equilibre}, whose proof  relies on  Lemmas \ref{biptree} and \ref{lem:H} stated and proved  below. 
\begin{lemma}\label{biptree}
Let $\T$ be a finite tree, with a bipartition  $X$ and $Y$ of  its  set of vertices,  with $\Card(X),\Card(Y)\geq 2$. Let $E$ stand for   its set of edges. For all $x\in X$, $y\in Y$, define
$$Y_x:=\{y\in Y: (x,y)\in E\} \qquad X_y:=\{x\in X: (x,y)\in E\}.$$
Let $(G,+)$ be an abelian group and $H$ a subgroup of $G$. Suppose  that there exists a function $g:X\cup Y\cup E\to G$ satisfying the following conditions:
\begin{itemize}
\item[(1)]$g(X\cup Y)\subseteq H$;
\item[(2)] for all $x\in X$, $g(x)=\sum_{y\in Y_x}g(x,y)$, and for all $y\in Y$, $g(y)=\sum_{x\in X_y}g(x,y)$.
\end{itemize}
Then, for all $(x,y)\in E$, $g(x,y)\in H$.
\end{lemma}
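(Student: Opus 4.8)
The plan is to exploit the defining property of trees --- the existence of leaves --- via induction on the number of edges $\Card(E)$. The base case $\Card(E)=0$ is vacuous, so assume $\Card(E)\geq 1$. A finite tree with at least one edge has a leaf $v$; by the bipartite structure $v$ lies in $X$ or in $Y$, and the two cases are symmetric, so suppose $v\in X$. Then $v$ is incident to a single edge, say $(v,y_0)$ with $y_0\in Y$, so that $Y_v=\{y_0\}$ and condition (2) at $v$ collapses to $g(v)=g(v,y_0)$. Since $g(v)\in H$ by condition (1), we immediately obtain $g(v,y_0)\in H$. This is the engine of the proof: at a leaf, the unique incident edge inherits membership in $H$ directly from the vertex.

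First I would then pass to the smaller tree $\mathcal{T}'$ obtained by deleting $v$ and the edge $(v,y_0)$; removing a leaf from a tree yields a tree, with bipartition $X\setminus\{v\}$ and $Y$. The point is to produce a function $g'$ on $\mathcal{T}'$ still satisfying (1) and (2), so that the induction hypothesis applies. I set $g'$ equal to $g$ everywhere except at $y_0$, where I put $g'(y_0):=g(y_0)-g(v,y_0)$. Because both $g(y_0)$ and $g(v,y_0)$ lie in $H$ and $H$ is a subgroup, $g'(y_0)\in H$, so (1) persists. For (2), the neighbourhoods of all vertices other than $y_0$ are unchanged in $\mathcal{T}'$, while at $y_0$ the defining sum loses exactly the term $g(v,y_0)$, which is precisely what the redefinition of $g'(y_0)$ compensates for. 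Hence $g'$ satisfies both hypotheses on $\mathcal{T}'$, and by induction every edge of $\mathcal{T}'$ takes a value in $H$; since $g'$ and $g$ agree on $E\setminus\{(v,y_0)\}$ and we already placed $g(v,y_0)$ in $H$, all edges of $\mathcal{T}$ take values in $H$.

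The step I expect to be the only real point of care is the preservation of condition (2) when passing to $\mathcal{T}'$: one must redistribute the now-known edge value $g(v,y_0)$ into the vertex value at $y_0$ so that the balance equation at $y_0$ survives in the amputated tree, and this is exactly where the subgroup hypothesis on $H$ (closure under subtraction) is used. I note in passing that the cardinality assumptions $\Card(X),\Card(Y)\geq 2$ play no role in this argument and could be dropped.

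Finally, a slicker non-inductive alternative: for a fixed edge $e=(x,y)$, deleting $e$ splits $\mathcal{T}$ into two components, and if $T_x$ denotes the one containing $x$, then summing the equations of (2) with alternating signs gives
$$g(x,y)=\sum_{x'\in T_x\cap X} g(x')-\sum_{y'\in T_x\cap Y} g(y'),$$
because every edge internal to $T_x$ is counted once with each sign and cancels, while $e$ is the unique edge leaving $T_x$ and its $X$-endpoint $x$ lies in $T_x$. The right-hand side lies in $H$ since every vertex value does, yielding $g(x,y)\in H$ at once.
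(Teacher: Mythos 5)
Your proof is correct, and it is genuinely simpler than the one in the paper. The paper inducts on $k=\max\{\Card(X),\Card(Y)\}$, which forces a counting argument (via $\Card(E)=\Card(X)+\Card(Y)-1$) to locate a leaf on the larger side, and then a case split: when $\Card(X)>\Card(Y)$ one leaf is removed, but when $\Card(X)=\Card(Y)$ two leaves (one on each side) must be removed simultaneously, with the vertex values adjusted at both of their neighbours. Your induction on $\Card(E)$ removes any leaf whatsoever, so there is a single uniform inductive step and no case analysis; the core move --- transferring the known edge value $g(v,y_0)=g(v)\in H$ into the vertex value at $y_0$ so that condition (2) survives the amputation --- is identical to the paper's. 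You are also right that the hypothesis $\Card(X),\Card(Y)\geq 2$ is never used; the paper needs it only to make its base case $k=2$ well posed. Your closed-form alternative is a further improvement worth keeping: for an edge $e=(x,y)$, the signed sum $\sum_{x'\in T_x\cap X}g(x')-\sum_{y'\in T_x\cap Y}g(y')$ telescopes to $g(x,y)$ because every edge internal to the component $T_x$ appears once with each sign while $e$ is the unique edge leaving $T_x$ and is counted only from its $X$-endpoint. This gives $g(x,y)$ explicitly as a $\{\pm1\}$-combination of vertex values, which is both a one-line proof and strictly more information than the membership statement.
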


\begin{proof} Observe  first that Conditions (1) and (2) imply that the image under $g$ of any edge connected to a leaf belongs to $H$.

We proceed by induction on $k:=\max\{\Card(X),\Card(Y)\}$
and we first assume  $k=2$.  Such as illustrated in Figure~\ref{fig:biptree}, there is only one possibility for the graph  $\T$ (modulo a relabeling of  the vertices), since $\T$ is connected and has no cycles, which is
$$X=\{x_1,x_2\}, Y=\{y_1,y_2\}, E=\{(x_1,y_1),(x_2,y_1),(x_2,y_2)\}.$$

\begin{figure}[h]
 \tikzset{node/.style={circle,draw,minimum size=0.7cm,inner sep=0pt}}
 \tikzset{title/.style={minimum size=0.1cm,inner sep=0pt}}

 \begin{center}
  \begin{tikzpicture}
   \node[title](ee) {};
   
   \node[node](eal) [below left= 0.5cm and 0.6cm of ee] {$x_1$};
   \node[node](ebl) [below= 0.7cm of eal] {$x_2$};
   \node[node](ear) [right= 1.5cm of eal] {$y_1$};
   \node[node](ebr) [below= 0.7cm of ear] {$y_2$};
   \path[draw,thick]
    (eal) edge node {} (ear)
    (ebl) edge node {} (ear)
	(ebl) edge node {} (ebr);

	 	 \node[title](eal) [left = 0.3cm of eal] {\bf \textleaf};
	 \node[title](ebr) [right = 0.3cm of ebr] {\bf \textleaf};
  \end{tikzpicture}
 \end{center}

 \caption{The tree $\mathcal{T}$ when $k=2$.}
 \label{fig:biptree}
\end{figure}
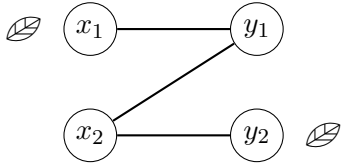

Both $g(x_1,y_1)$ and $g(x_2,y_2)$ are in $H$ because $x_1$ and $y_2$ are leaves. By Condition (2), one has
$g(x_2)=g(x_2,y_1)+g(x_2,y_2)$, and then $g(x_2,y_1)=g(x_2)-g(x_2,y_2)$. 
Since $g(x_2)\in H$ by Condition (1) and $H$ is a group, then  $g(x_2,y_1)\in H$.

We now assume  $k>2$ and that the induction hypothesis  holds   for $k-1$. Suppose also {\em wlog}  that $\Card(X)\geq \Card(Y).$
Note that in this case there exists a leaf in $X$. Indeed, if all vertices in $X$ have degree at least 2, then
$$\Card(E)=\sum_{x\in X}\deg(x)\geq 2\Card(X)$$
because $\T$ is a bipartite graph. On the other hand, since $\T$ is a tree,
$$\Card(E)=\Card(X)+\Card(Y)-1<\Card(X)+\Card(Y)\leq 2\Card(X)$$
which yields the desired  contradiction. The same argument shows that if $X$ and $Y$ have the same cardinality, then both  $X$  and $Y$  have at least one leaf.
We  distinguish two cases, namely  $\Card(X)>\Card(Y)$ and $\Card(X)=\Card(Y)$.

\bigskip

 We first  assume   that $\Card(X)>\Card(Y)$.
Take a leaf in $X$, and  call it $x_0$. Consider the graph $\widetilde{\T}$ obtained from $\T$ by removing the vertex $x_0$ and the edge $(x_0,y_0)$, where $y_0$ is the only vertex in $Y$ connected with $x_0$. This new graph is also a tree, with bipartition of vertices $\widetilde{X}=X-\{x_0\}$, $\widetilde{Y}=Y$, and set of edges $\widetilde{E}=E-\{(x_0,y_0)\}$. Since $\Card(\widetilde{X})=k-1$ and $\Card(\widetilde{Y})=\Card(Y)$, then $\max\{\Card(\widetilde{X}),\Card(\widetilde{Y})\}=k-1$. 

We define $\widetilde{g}$ in  $\widetilde{X}\cup \widetilde{Y}\cup \widetilde{E}$ as follows. On $(\widetilde{X}\cup \widetilde{Y}\cup \widetilde{E})-\{y_0\}$, $\widetilde{g}=g$; on $y_0$, define $\widetilde{g}(y_0)=g(y_0)-g(x_0,y_0)$.
 Let us verify that $\widetilde{g}$ satisfies Conditions (1) and (2) with respect to  $\widetilde{\T}$.

 \begin{itemize}
  
\item[(1)] If $x\in\widetilde{X}$, $\widetilde{g}(x)=g(x)\in H$. If $y\in\widetilde{Y}$ and $y\neq y_0$, $\widetilde{g}(y)=g(y)\in H$. If $y=y_0$, then $\widetilde{g}(y_0)=g(y_0)-g(x_0,y_0)$, but both $g(y_0)$ and $g(x_0,y_0)$ are in $H$, since $g$ satisfies Conditions (1) and (2), and $x_0$ is a leaf. Therefore, the image under $\widetilde{g}$ of any vertex of $\widetilde{\T}$ is in $H$.

\item[(2)] We need a more precise notation here. For a vertex $x\in \widetilde{X}$, we define
$Y_x^{\T}:=\{y\in Y: (x,y)\in E\}$ and $ Y_x^{\widetilde{\T}}:=\{y\in \widetilde{Y}: (x,y)\in \widetilde{E}\}.$
If $x\in \widetilde{X}$, then $Y_x^{\T}=Y_x^{\widetilde{\T}}$, and for all $y\in Y_x^{\T}$, $\widetilde{g}(x,y)=g(x,y)$. Therefore,
$$\widetilde{g}(x)=g(x)=\sum_{y\in Y_x^{\T}} g(x,y)=\sum_{y\in Y_x^{\widetilde{\T}}} \widetilde{g}(x,y).$$
We use  analogously  the  notation $X_y^{\T}$ and $X_y^{\widetilde{\T}}$ for a vertex $y\in \widetilde{Y}$. Let be $y\in \widetilde{Y}$. If $y\neq y_0$, then $X_y^{\T}=X_y^{\widetilde{\T}}$ and for all $x\in X_y^T$, $\widetilde{g}(x,y)=g(x,y)$. Hence,
$$\widetilde{g}(y)=g(y)=\sum_{x\in X_y^{\T}} g(x,y)=\sum_{x\in X_y^{\widetilde{\T}}} \widetilde{g}(x,y).$$
Finally, if $y\in \widetilde{Y}$ and $y=y_0$, then $X_y^{\T}=X_y^{\widetilde{\T}}\cup \{x_0\}$.  We thus have 
$$\begin{array}{ll}
\widetilde{g}(y)=g(y_0)-g(x_0,y_0)&=-g(x_0,y_0)+\sum_{x\in X_y^{\T}} g(x,y)\\&=-g(x_0,y_0)+g(x_0,y_0)+\sum_{x\in X_y^{\widetilde{\T}}} \widetilde{g}(x,y)
= \sum_{x\in X_y^{\widetilde{\T}}} \widetilde{g}(x,y),
\end{array}$$
which ends the proof of the fact  that $\widetilde{g}$ satisfies Conditions (1) and (2).
\end{itemize}

Hence, by  induction,  for all $(x,y)\in \widetilde{E}$, $\widetilde{g}(x,y)\in H$. But in $\widetilde{E}$ one has $\widetilde{g}=g$, which implies that for all $(x,y)\in \widetilde{E}$, $g(x,y)\in H$. Since $x_0$ is a leaf in $X$, $g(x_0,y_0)\in H$, and then for all $(x,y)\in E$, $g(x,y)\in H$.   This ends the   case $\Card(X)>\Card(Y)$.

\bigskip

 We now assume  that $\Card(X)=\Card(Y)$.   Then, both $X$ and $Y$ have at least one leaf; let us call them $x_0$ and $y_0$, respectively. Let $x_{y_0}$ and $y_{x_0}$ denote the only vertices connected with $x_0$ and $y_0$, respectively. It is not difficult to see that $y_0\neq y_{x_0}$ and $x_0\neq x_{y_0}$, since $\T$ is connected and has no cycles.

Consider the graph $\widetilde{\T}$ obtained from $\T$ by removing the vertices $x_0$ and $y_0$, and the edges $(x_0,y_{x_0})$ and $(x_{y_0},y_0)$. This new graph is again a  tree, with bipartition of vertices $\widetilde{X}=X-\{x_0\}$, $\widetilde{Y}=Y-\{y_0\}$, and set of edges $\widetilde{E}=E-\{(x_0,y_{x_0}),(x_{y_0},x_0)\}$. Since $\Card(\widetilde{X})=k-1$ and $\Card(\widetilde{Y})=k-1$, then $\max\{\Card(\widetilde{X}),\Card(\widetilde{Y})\}=k-1$.

 On the new set $\widetilde{X}\cup \widetilde{Y}\cup \widetilde{E}$, define the function $\widetilde{g}$ as follows. On $(\widetilde{X}\cup \widetilde{Y}\cup \widetilde{E})-\{x_{y_0},y_{x_0}\}$, $\widetilde{g}=g$; on $x_{y_0}$, define $\widetilde{g}(x_{y_0})=g(x_{y_0})-g(x_{y_0},y_0)$, and on $y_{x_0}$, $\widetilde{g}(y_{x_0})=g(y_{x_0})-g(x_0,y_{x_0})$.

 Following the same strategy as in the case $\Card(X)>\Card(Y)$, one can see that $\widetilde{g}$ satisfies Conditions (1) and (2) in $\widetilde{\T}$, and since $\max\{\Card(\widetilde{X}),\Card(\widetilde{Y})\}=k-1$, we conclude by induction
 that for any edge $(x,y)\in \widetilde{E}$, $\widetilde{g}(x,y)$ belongs to $H$, which implies that $g(x,y)\in H$. Since $x_0$ and $y_0$ are leaves in $X$ and $Y$, $g(x_0,y_{x_0}), g(x_{y_0},y_0)\in H$. We conclude that for all $(x,y)\in E$, $g(x,y)\in H$.    
\end{proof}

\begin{lemma}\label{lem:H}
Let $(X,T)$ be a minimal dendric subshift. 
 Let $H$ be the following subset of $C(X,\mathbb{Z})$:
$$H=\left\lbrace\sum_{a\in\mathcal{A}}\sum_{k\in K_a}\alpha(a,k)\chi_{T^k([a])}: K_a\subseteq \mathbb{Z},|K_a|<\infty, \alpha(a,k)\in\mathbb{Z}\right\rbrace,$$
where $\chi_{A}$ denotes the characteristic function of the set  $A$, for all $A\subseteq X$. 
Then, for all $v\in\mathcal{L}(X)$, the characteristic function $\chi_{[v]}$ belongs to $H$.
\end{lemma}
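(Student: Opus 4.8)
The plan is to prove the statement by induction on the length $|v|$, exploiting that the group $H$ is invariant under the shift. Indeed, since $\chi_{T^k[a]}\circ T^{-1}=\chi_{T^{k+1}[a]}$ and $\chi_{T^k[a]}\circ T=\chi_{T^{k-1}[a]}$, composition with $T^{\pm 1}$ permutes the generators of $H$, so that $f\in H$ implies $f\circ T^{\pm 1}\in H$. Consequently, for any factor $u$, one has $\chi_{[u]}\in H$ if and only if $\chi_{T^k([u])}\in H$ for every $k\in\mathbb{Z}$. The base case $|v|=1$ is immediate, since $\chi_{[a]}=\chi_{T^0([a])}$ is a generator of $H$.

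For the induction step I would first record two elementary partition identities. Writing $[w]=\bigsqcup_{b\in R(w)}[wb]$ (the next symbol is determined) gives $\chi_{[w]}=\sum_{b\in R(w)}\chi_{[wb]}$, and writing $[w]=\bigsqcup_{a\in L(w)}T([aw])$ (the previous symbol is determined) gives $\chi_{[w]}=\sum_{a\in L(w)}\chi_{T([aw])}$. Now fix a factor $v$ of length $m\geq 2$ and decompose it as $v=awb$ with $a,b\in\mathcal A$ and $w$ of length $m-2$ (possibly $w=\varepsilon$, which is legitimate since $\mathcal{E}(\varepsilon)$ is a tree). Applying the first identity to $aw$ and composing with $T^{-1}$ yields $\chi_{T([aw])}=\sum_{b'\in R(aw)}\chi_{T([awb'])}$, and applying the second to $wb$ yields $\chi_{[wb]}=\sum_{a'\in L(wb)}\chi_{T([a'wb])}$, where $R(aw)=\{b'\in R(w):awb'\in\mathcal L(X)\}$ and $L(wb)=\{a'\in L(w):a'wb\in\mathcal L(X)\}$ are precisely the neighbourhoods $Y_a$ and $X_b$ of the extension graph $\mathcal E(w)$, in the notation of Lemma~\ref{biptree}.

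This sets up an application of Lemma~\ref{biptree} with the abelian group $C(X,\mathbb Z)$, its subgroup $H$, and the tree $\mathcal E(w)$ (a tree because the subshift is dendric), whose bipartition into $L(w)$ and $R(w)$ plays the role of the two parts in the lemma: I would put $g(a)=\chi_{T([aw])}$ for $a\in L(w)$, $g(b)=\chi_{[wb]}$ for $b\in R(w)$, and $g(a,b)=\chi_{T([awb])}$ on each edge. The two identities above are exactly Condition (2), while Condition (1) holds because $\chi_{T([aw])}$ and $\chi_{[wb]}$ are characteristic functions of (shifted) cylinders on words of length $m-1$, hence lie in $H$ by the induction hypothesis together with the shift-invariance of $H$. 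The lemma then gives $g(a,b)=\chi_{T([awb])}\in H$ for every edge, in particular $\chi_{T([v])}\in H$, whence $\chi_{[v]}\in H$.

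The main obstacle is that Lemma~\ref{biptree} requires $\Card(L(w)),\Card(R(w))\geq 2$, so the degenerate cases must be treated separately; fortunately these are easier. If $R(w)=\{b\}$ is a singleton then $R(aw)=\{b\}$ as well, so $[aw]=[awb]=[v]$ and $\chi_{[v]}=\chi_{[aw]}\in H$ directly by induction; symmetrically, if $L(w)=\{a\}$ then $L(wb)=\{a\}$ and $[wb]=T([awb])=T([v])$, giving $\chi_{T([v])}=\chi_{[wb]}\in H$ and hence $\chi_{[v]}\in H$. Beyond this, the only delicate point is the bookkeeping of the shift $T$: one must check that composing the right and left extension partitions with the correct power of $T$ matches Conditions (1)--(2) on the nose, which is exactly what forces the asymmetric choices $g(a)=\chi_{T([aw])}$ and $g(b)=\chi_{[wb]}$.
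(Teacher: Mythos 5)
Your proof is correct and follows essentially the same route as the paper: induction on $|v|$, with the inductive step handled by applying Lemma~\ref{biptree} to the extension graph of the middle word $w=v_1\cdots v_{|v|-2}$ with $g$ given by characteristic functions of (shifted) cylinders, and with the degenerate one-extension cases treated directly. The only differences are cosmetic: your $g$ is the paper's composed with an extra power of $T$ (the paper uses $g(b)=\chi_{T^{-1}([\bar v b])}$ where you use $g(a)=\chi_{T([aw])}$), and you key the degenerate cases on $\Card(L(w)),\Card(R(w))$ rather than on the extensions of the length-$(|v|-1)$ prefix and suffix, both of which are valid.
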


\begin{proof} One first easily checks that $H$ is a subgroup.  We now  proceed by induction on the length of $v$. 
The claim  is clearly  true if $|v|=1$, that is, when $v$ is a letter of $\mathcal{A}$, by setting  $K_a=\{0\}$ and $\alpha(a,k)=1$ if $a=v$, $0$ otherwise.
Now suppose that for all $u\in \mathcal{L}(X)$ with $|u|\leq n$, one has $\chi_{[u]}\in H$. Let $v$ be a word  of length  $n+1$.  We write $$v=v_0\cdots v_n \mbox{  and define } \widetilde{v}=v_1\cdots v_n, \  v'=v_0\cdots v_{n-1}, \ \bar{v}=v_1\cdots v_{n-1}.$$ We analyze separately three cases depending on the right/left extensions of $\widetilde{v}$ and $v'$, namely  $l(\widetilde{v})=1$, $r(v')=1$, and  then, finally, $l(\widetilde{v})\geq 2$ and $r(v')\geq 2$, with this latter case  being handled thanks to Lemma  \ref{biptree}.
\begin{itemize}
\item Suppose first  that  $l(\widetilde{v})=1$.
The only left extension of $\widetilde{v}$ is $v_0$, and thus, for all $x\in X$, $\chi_{[v]}(x)=\chi_{[\widetilde{v}]}(Tx)$. By induction hypothesis we have that $\chi_{[\widetilde{v}]}$ belongs to $H$, so we obtain that for all $x\in X$,
$$\chi_{[v]}(x)=\sum_{a\in\mathcal{A}}\sum_{k\in K_a}\alpha(a,k)\chi_{T^{k-1}([a])}(x).$$
Defining $K_a':=\{k-1:k\in K_a\}$ for all $a\in \mathcal{A}$, and $\beta(a,k)=\alpha(a,k+1)$ for all $k\in K_a'$, we conclude that, for all $x\in X$,
$$\chi_{[v]}(x)=\sum_{a\in\mathcal{A}}\sum_{k\in K_a'}\beta(a,k)\chi_{T^{k}([a])},$$
and then $\chi_{[v]}$ belongs to $H$.

\item Now suppose that $r(v')=1$.  The only right extension of $v'$ is $v_n$,  and thus,  for all $x\in X$, $\chi_{[v]}(x)=\chi_{[v']}(x)$. We conclude by applying the induction  hypothesis.

\item  Finally, we assume $l(\widetilde{v})\geq 2$ and $r(v')\geq 2$. Let   $\mathcal{E}(\bar{v})$ be the extension graph of $\bar{v}$ (as defined in Section \ref{subsec:dendric}). It is a tree by definition,
 and each of the sets in its bipartition of vertices has cardinality at least two.

 Define $g:L(\bar{v})\cup R(\bar{v})\cup E(\bar{v})\to G$ as follows.
For $a\in L(\bar{v})$, $g(a)=\chi_{[a\bar{v}]}$, for $b\in R(\bar{v})$, $g(b)=\chi_{T^{-1}[\bar{v}b]}$, and for $(a,b)\in E(\bar{v})$, $g(a,b)=\chi_{[a\bar{v}b]}$. 
Condition (1)  of  Lemma \ref{biptree} holds by induction hypothesis.  

Let us check  that (2) holds. 
Let $a\in L(\bar{v})$. One has 
$$\chi_{[a\bar{v}]}=\sum_{b\in R(\bar{v}),(a,b)\in E(\bar{v})}\chi_{[a\bar{v}b]}(x) \quad  \mbox{ and thus }  \quad
g(a)=\sum_{b\in R(\bar{v}),(a,b)\in E(\bar{v})} g(a,b).$$
Similarly, let $b\in R(\bar{v})$ and $x\in X$.
One has $$\chi_{T^{-1}[\bar{v}b]}(x)=\chi_{[\bar{v}b]}(Tx)=\sum_{a\in L(\bar{v}),(a,b)\in E(\bar{v})}\chi_{[a\bar{v}b]}(x).$$
We conclude that for all $b\in R(\bar{v})$, $g(b)=\sum_{a\in L(\bar{v}),(a,b)\in E(\bar{v})} g(a,b).$
We now can  apply Lemma  \ref{biptree} which yields that $\chi_{[a\bar{v}b]}\in H$, for any biextension $a\bar{v}b$ of $\bar{v}$. In particular, since $(v_0,v_n)\in E(\bar{v})$,  then $\chi_{[v]}\in H$. 
\end{itemize} 
\end{proof}

\begin{proof} [Proof of Theorem \ref{theo:equilibre}]
We  assume  that  the  dendric subshift  $(X,T)$ is balanced on the letters. Let $ v \in {\mathcal L}(X)$. Let $C$ be a constant of balancedness for the letters. Let $n$ be a positive integer and let $u, w$ be two factors of $\mathcal{L}_X$ of length $n-1$ with  $n-1>|v|$. 
Pick  a  bi-infinite  word $x\in X$ such that $u=x_{[i,i+n)}$ and $w=x_{[j,j+n)}$ for some indices $i,j\in \mathbb{Z}$. We have 
$$||u|_v-|w|_v|=\left|\sum_{l=i}^{i+n-1-|v|}\chi_{[v]}(T^lx)-\sum_{l=j}^{j+n-1-|v|}\chi_{[v]}(T^ly)\right|.$$
Now,   according to Lemma \ref{lem:H}, for all $a\in\mathcal{A}$, let $K_a$ be a finite subset of $\mathbb{Z}$ such that, for all $k\in K_a$, there exists $\alpha(a,k)\in \mathbb{Z}$ verifying
$$\chi_{[v]}=\sum_{a\in\mathcal{A}}\sum_{k\in K_a}\alpha(a,k)\chi_{T^k([a])}.$$
Then,
\begin{eqnarray*}
||u|_v-|w|_v| &=&\left|\sum_{l=i}^{i+n-1-|v|}\sum_{a\in\mathcal{A}}\sum_{k\in K_a}\alpha(a,k)\chi_{T^k[a]}(T^lx)-\sum_{l=j}^{j+n-1-|v|}\sum_{a\in\mathcal{A}}\sum_{k\in K_a}\alpha(a,k)\chi_{T^k[a]}(T^lx)\right|\\
&=& \left|\sum_{a\in\mathcal{A}}\sum_{k\in K_a}\alpha(a,k)\left(\sum_{l=i}^{i+n-1-|v|}\chi_{T^k[a]}(T^lx)-\sum_{l=j}^{j+n-1-|v|}\chi_{T^k[a]}(T^lx)\right)\right|\\
&\leq &  \sum_{a\in\mathcal{A}}\sum_{k\in K_a}|\alpha(a,k)|\left | \sum_{l=i}^{i+n-1-|v|}\chi_{T^k[a]}(T^lx)-\sum_{l=j}^{j+n-1-|v|}\chi_{T^k[a]}(T^lx)\right |\\
&=& \sum_{a\in\mathcal{A}}\sum_{k\in K_a}|\alpha(a,k)|\left | \sum_{l=i}^{i+n-1-|v|}\chi_{[a]}(T^l(T^{-k}x))-\sum_{l=j}^{j+n-1-|v|}\chi_{[a]}(T^l(T^{-k}x))\right |\\
&=&  \sum_{a\in\mathcal{A}}\sum_{k\in K_a}|\alpha(a,k)|\cdot ||(T^{-k}x)_{[i,i+n-|v|)}|_a - |(T^{-k}x)_{[j,j+n-|v|)}|_a|.
\end{eqnarray*}
Note that $(T^{-k}x)_{[i,i+n-|v|)}$ and $(T^{-k}y)_{[j,j+n-|v|)}$ are two factors of length $n-1-|v|$ belonging to $\mathcal{L}(X)$, and then by balancedness on the letters, for all $a\in\mathcal{A}$
$$||(T^{-k}x)_{[i,i+n-|v|)}|_a - |(T^{-k}y)_{[j,j+n-|v|)}|_a|\leq C.$$
We obtain that
$||u|_v-|w|_v|\leq |\mathcal{A}|KC,$
where 
$K=\max_{a\in \mathcal{A}}{\left\lbrace\sum_{k\in K_a}|\alpha(a,k)|\right\rbrace}, $ which ends the proof of the balancedness on $v$.

Lastly, the result on additive   topological eigenvalues comes from Theorem \ref{GH}.
\end{proof}

\subsection{Examples}\label{subsec:exdendric}

Sturmian words, Arnoux-Rauzy words (introduced in \cite{Arnoux-Rauzy:91} and also called episturmian words), and  codings of regular interval exchanges are typical  examples of dendric words (see~\cite{BDFDLPR:16}). See also as an interesting  family of  dendric words, the   words  produced by the Cassaigne--Selmer  multidimensional continued fraction algorithm 
\cite{CasLabLer:17}. Note that dendric words   have  factor complexity  $(d-1)n + 1$ when defined over an alphabet of cardinality $d$  (see~\cite{BDFDLPR:16}). We recall that  the factor complexity of  a  bi-infinite word $u$  counts the number of factors of  $u$ a  given length.

Recall that Sturmian words  are known to be $1$-balanced on letters  \cite{Lothaire:2002}.   They are also known to be   balanced on  factors \cite{FagnotVuillon:02}.  Note that we also  recover this property as a direct consequence of Theorem \ref{theo:equilibre}. 
It was believed that Arnoux-Rauzy words would be 2-balanced on letters, as generalizations of Sturmian words.  But there exist Arnoux-Rauzy words  that are not balanced  on letters, such as  proved in~\cite{Cassaigne-Ferenczi-Zamboni:00}, also see \cite{Cassaigne-Ferenczi-Messaoudi:08}.

More precisely, Arnoux-Rauzy words   are  uniformly recurrent dendric words  that can be expressed as $S$-adic words as follows. Let $\mathcal{A} = \{1,2,\ldots,d\}$.
We  define the set~$\mathcal{S}_{AR}$  of substitutions defined  as $\mathcal{S}_{AR} = \{\sigma_i:\, i \in \mathcal{A}\}$, with
$
\sigma_i:\ i \mapsto i,\ j \mapsto ji\ \mbox{for}\ j \in \mathcal{A} \setminus \{i\}\,.
$
A bi-infinite word $u \in \mathcal{A}^\mathbb{Z}$ is an {\em Arnoux-Rauzy word}  if  its language  coincides with the  language of a word of the form 
$ \lim_{n\to\infty} \sigma_{i_0} \sigma_{i_1} \cdots \sigma_{i_n} (1), 
$
where the sequence ${\mathbf i}=(i_n)_{n\geq 0} \in \mathcal{A}^\mathbb{N}$ is such that every letter in~$\mathcal{A}$ occurs infinitely often in~${\mathbf i}=(i_n)_{n\ge0}$. 
In this latter case, the infinite word $u$ is uniformly recurrent and we can associate with it a two-sided subshift $(X_{\mathbf i},T)$
which contains all the bi-infinite words  having the same language as $u$. 
Furthermore, such a sequence ${\mathbf i}=(i_n)_{n\geq 0} \in \mathcal{A}^\mathbb{N}$ is uniquely defined for a given~$u$.
For any given Arnoux-Rauzy word, the sequence ${\mathbf i}=(i_n)_{n \geq 0}$ is called the $\mathcal{S}_{AR}$-\emph{directive word} of~$u$. 
All the Arnoux-Rauzy words that belong to the dynamical system $(X_{\mathbf i},T)$ have the same $\mathcal{S}_{AR}$-directive word.
An  {\em Arnoux-Rauzy substitution}   is  a  finite product  of substitutions in ${\mathcal S}_{AR}$. 
The following statement  is deduced from Theorem \ref{theo:equilibre}.

\begin{corollary} 
Let $\sigma$ be a primitive  Arnoux-Rauzy substitution. 
Then, $(X_{\sigma},T)$ is  balanced on factors.  

Let $(X_{\mathbf i},T)$ be an Arnoux-Rauzy  subshift  on a three-letter alphabet with   $\mathcal{S}_{AR}$-directive sequence  ${\mathbf i}= (i_n)_{n \geq 0}$. If   there exists some constant  $h$ such that we do not have $i_n= i_{n+1} = \cdots = i_{n+h}$ for any $n \ge 0$, then $(X_{\mathbf i},T)$ is balanced on factors.

In both cases,  frequencies   of  factors are  additive topological eigenvalues and   cylinders are  bounded remainder sets.

\end{corollary}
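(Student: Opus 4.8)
The plan is to derive both parts of the Corollary from Theorem~\ref{theo:equilibre} by producing, in each case, balancedness on \emph{letters}, which then upgrades automatically to balancedness on factors since Arnoux--Rauzy subshifts are dendric. The conclusion about additive topological eigenvalues and bounded remainder sets is then immediate from the second sentence of Theorem~\ref{theo:equilibre}. So the whole task reduces to establishing balancedness on letters in the two settings.

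For the first statement, let $\sigma$ be a primitive Arnoux--Rauzy substitution. Here I would appeal to Theorem~\ref{theo:Adam}: it suffices to show that the substitution matrix $M_\sigma$ has its Perron eigenvalue as its unique eigenvalue of modulus $>1$ and that every eigenvalue of modulus $1$ is a root of unity, i.e.\ that $\sigma$ is a Pisot substitution (or at least satisfies the spectral condition of that theorem). The key point is that each generator $\sigma_i \colon i\mapsto i,\ j\mapsto ji$ has a unimodular incidence matrix (it is an elementary-type matrix differing from the identity by adding one row of the identity to another, hence determinant $\pm 1$). A finite product $\sigma=\sigma_{i_1}\cdots\sigma_{i_k}$ therefore has $\det M_\sigma=\pm1$, so the product of the eigenvalues has modulus $1$. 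Combined with primitivity, which via Perron's theorem forces a single dominant positive eigenvalue, one shows that the remaining eigenvalues are forced into the closed unit disk; in fact Arnoux--Rauzy substitutions are known to be Pisot. The cleanest route is thus: invoke unimodularity and primitivity to conclude $\sigma$ is Pisot, and then apply the Pisot case of Theorem~\ref{theo:Adam} to get balancedness on letters, whence Theorem~\ref{theo:equilibre} gives balancedness on factors.

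For the second statement, the situation is $S$-adic rather than substitutive, so Theorem~\ref{theo:Adam} does not apply directly. Instead I would invoke the $S$-adic balancedness criterion of Berth\'e--Delecroix \cite[Theorem 5.8]{Berthe-Delecroix} cited in the introduction, which gives balancedness on letters in terms of convergence of products of the incidence matrices $M_{\sigma_{i_n}}$ along the directive sequence. The hypothesis that the directive sequence ${\mathbf i}$ never contains a run $i_n=i_{n+1}=\cdots=i_{n+h}$ of length $h+1$ is a uniform-recurrence-type condition on ${\mathbf i}$ that controls the products of matrices: it rules out the pathological directive sequences of \cite{Cassaigne-Ferenczi-Zamboni:00} where a single letter is iterated for unboundedly long stretches, which is exactly the mechanism producing unbounded imbalance. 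On a three-letter alphabet, the boundedness of run-lengths yields the required contraction/convergence of the matrix products (the cocycle stays in a bounded cone), giving finite discrepancy for letters, hence balancedness on letters. Theorem~\ref{theo:equilibre} then promotes this to balancedness on factors.

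I expect the main obstacle to lie in the second part: translating the purely combinatorial bounded-run-length condition on the directive sequence into the analytic statement that the associated products of incidence matrices converge (equivalently, that the letter-frequency cocycle is bounded), as required to apply \cite[Theorem 5.8]{Berthe-Delecroix}. One must verify that avoiding long constant runs is enough to force the needed primitivity-with-bounds along the sequence on a three-letter alphabet; the restriction to three letters is presumably what makes the elementary estimate tractable, and I would expect the argument to exploit the explicit form of the three matrices $M_{\sigma_1},M_{\sigma_2},M_{\sigma_3}$ together with the fact that any factor of ${\mathbf i}$ of bounded length in which all three letters appear yields a strictly positive product. The first part, by contrast, is essentially a direct citation of the Pisot case of Theorem~\ref{theo:Adam} once unimodularity and primitivity are noted, and should present no real difficulty.
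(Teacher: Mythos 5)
Your overall strategy---reduce everything to balancedness on letters and then invoke Theorem~\ref{theo:equilibre} to upgrade to balancedness on factors, with the statements about additive topological eigenvalues and bounded remainder sets following from the second sentence of that theorem---is exactly the paper's route. Two points need attention, however.

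First, your justification of the Pisot property for a primitive Arnoux--Rauzy substitution is not valid as stated: unimodularity of $M_\sigma$ together with primitivity does \emph{not} force the non-dominant eigenvalues into the open unit disk. A matrix with $\det = \pm 1$ and Perron eigenvalue $\lambda>1$ only constrains the \emph{product} of the remaining eigenvalues to have modulus $1/\lambda<1$; individually they may still have modulus $\geq 1$ (consider a spectrum such as $\{4,2,1/8\}$). The Pisot property of Arnoux--Rauzy substitutions is a genuine theorem, which the paper obtains by citation (Arnoux--Ito, Avila--Delecroix); your parenthetical remark that they are ``known to be Pisot'' is the correct move, but the unimodularity-plus-primitivity argument you offer as the ``cleanest route'' does not substitute for it.

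Second, in the $S$-adic statement you leave the crucial step---that the bounded-run-length condition on the directive sequence implies balancedness on letters---as an acknowledged obstacle, sketching a plan via the Berth\'e--Delecroix criterion on convergence of matrix products without carrying it out. The paper does not prove this step from scratch either, but it cites a specific result \cite{BerthCS:13} asserting that the condition implies $(2h+1)$-balancedness on letters; that citation is the entire content of the step, after which Theorem~\ref{theo:equilibre} concludes as you say. As written, your proposal has a genuine gap here: without either that citation or a completed matrix-product argument, letter-balancedness in the $S$-adic case is not established.
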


\begin{proof}
Arnoux-Rauzy substitutions are known to be Pisot  \cite{Arnoux-Ito:01,AvilaDelecroix:13} and thus to generate Arnoux-Rauzy words that are balanced on letters by Theorem \ref{theo:Adam},
and  consequently  on factors by  Theorem \ref{theo:equilibre}.
The  condition of the second statement  is   proved  in \cite{BerthCS:13} to  imply that  $(X_{\mathbf i},T)$ is  $({2h\!+\!1})$-balanced on  letters. We again  conclude thanks to   Theorem \ref{theo:equilibre}.
\end{proof}

It  is  proved in \cite{BerST:18} that on a three-letter alphabet,  a.e.  Arnoux-Rauzy subshift  is balanced on  letters  and  has  pure discrete spectrum in the measure-theoretic sense, that is, 
it   is measurably conjugate to a translation on the torus~$\mathbb{T}^2$.
Here  almost everywhere (a.e.)  refers to some invariant measure;  as an example of  such a   measure,   consider the   measure of maximal entropy for the suspension
flow of the Rauzy gasket    constructed  in \cite{AvilaHubSkripbis} (see also \cite{AvilaHubSkrip}). 
 As an application of 
Theorem \ref{theo:equilibre}, we deduce that   in case of pure discrete spectrum, all cylinders    provide bounded remainder sets  for the underlying  toral translation
and that  for  a.e.  Arnoux-Rauzy subshift, frequencies of factors  are additive  topological eigenvalues.

\section{Balancedness for  substitutions with rational frequencies} \label{sec:sub}
 Trough this section, $\sigma$ will be a primitive substitution on the alphabet $\mathcal{A}$ and $(X_\sigma,T)$ the minimal, uniquely ergodic subshift generated by $\sigma$. Let $\mu$ denote the unique invariant probability measure on $X_\sigma$.
We  first introduce a suitable partition  in  towers  for substitutions  in Section~\ref{subsec:towers},
 we then  provide  criteria for producing imbalancedness in Section~\ref{subsec:criteria}, and lastly, we discuss  several 
examples in Section~\ref{subsec:examples}.

\subsection{Two-letter   towers}\label{subsec:towers}

Let $(X,T)$ be a  subshift. A {\it partition in towers} of $(X,T)$ is a partition of the space $X$ of the form
$$\mathcal{P}=\{T^jB_i:1\leq i\leq m,0\leq j <h_i\}$$
where the $B_i$'s are clopen sets (i.e., closed and open sets)  and non-empty. The number $m$ is the {\it number of towers} of $\mathcal{P}$. For all $1\leq i\leq m$, the subset $\{T^jB_i:0\leq j<h_i\}$ is called the {\it i-th tower} of $\mathcal{P}$; $h_i$ is its {\it height} and $B_i$ its {\it base}. The elements of the partition are called {\it atoms}.
For an illustration, see Figure \ref{partition}.

We recall  below a  classical  description of the subshift $(X_{\sigma}, T)$ in terms of Kakutani-Rohlin partitions provided by  the substitution $\sigma$ which  will  play a  crucial role  in the following (in particular  for   Proposition  \ref{beta_1}).  We  provide the  proof  of the following folklore  result for the sake of self-containedness.   An illustration of the partition $\mathcal{P}_n$  defined  below is provided in Figure \ref{sigma}. 
\begin{lemma}\label{part}
Let $\sigma$ be a primitive substitution. For all $n\in\mathbb{N}$, define 
\begin{equation}\label{pn}
\mathcal{P}_n=\{T^j\sigma^n([ab]):ab\in \mathcal{L}_2(X), 0\leq j < |\sigma^n(a)|\}.
\end{equation}
The sequence $(\mathcal{P}_n)_{n\in\mathbb{N}}$ is a  nested sequence of partitions in towers of $(X_{\sigma},T)$, i.e.,  for all $n\in\mathbb{N}$, $\mathcal{P}_{n+1}$ is finer than $\mathcal{P}_{n}$ and 
$\bigcup _{n,ab\in \mathcal{L}_2(X_{\sigma})} \sigma^{n+1}([ab]) \subset \bigcup_{n ,ab\in \mathcal{L}_2(X_{\sigma})} \sigma^{n}([ab])$. \end{lemma}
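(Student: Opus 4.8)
The plan is to verify the three separate assertions packed into Lemma~\ref{part}: that each $\mathcal{P}_n$ is genuinely a partition in towers, that $\mathcal{P}_{n+1}$ refines $\mathcal{P}_n$, and that the bases are nested. The central tool throughout is the recognizability of primitive substitutions (stated in the excerpt after M\"oss\'e): every $x\in X_\sigma$ admits a unique pair $(y,k)$ with $y\in X_\sigma$ and $0\le k<|\sigma(y_0)|$ such that $x=T^k\sigma(y)$. This is exactly what forces the atoms of $\mathcal{P}_1$ to be disjoint, and iterating it gives the analogous unique-decomposition statement for $\sigma^n$.

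First I would show $\mathcal{P}_n$ covers $X_\sigma$ and has disjoint atoms. For the covering: given $x\in X_\sigma$, apply recognizability $n$ times to write $x=T^k\sigma^n(y)$ with $0\le k<|\sigma^n(y_0)|$; since $y_0y_1=:ab\in\mathcal{L}_2(X_\sigma)$ and $\sigma^n(y)$ begins with $\sigma^n(ab)$, one gets $x\in T^k\sigma^n([ab])$, so $x$ lies in an atom. For disjointness, suppose $T^j\sigma^n([ab])$ and $T^{j'}\sigma^n([a'b'])$ meet. A common point $x$ would then have two representations $x=T^j\sigma^n(w)=T^{j'}\sigma^n(w')$ with $w\in[ab]$, $w'\in[a'b']$, $0\le j<|\sigma^n(a)|$, $0\le j'<|\sigma^n(a')|$; the uniqueness clause of (iterated) recognizability forces $j=j'$ and $w=w'$ (up to the point where the two cylinders are read), whence the two atoms coincide. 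I would also note each atom is clopen because cylinders are clopen and $T$, $\sigma$ are continuous with $\sigma$ a homeomorphism onto its image on a subshift, and each is nonempty by primitivity/minimality. The reason the index $j$ must stay in the range $0\le j<|\sigma^n(a)|$ (using the first letter $a$ rather than the pair $ab$) is that $\sigma^n(ab)$ begins with the block $\sigma^n(a)$, so the $|\sigma^n(a)|$ shifts $T^j\sigma^n([ab])$ are precisely the distinct positions of the marker $ab$ inside one copy of $\sigma^n(a)$.

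Next, for refinement and nesting I would decompose each base of $\mathcal{P}_n$. The key identity is that $\sigma^n([ab])=\bigsqcup_{abc\in\mathcal{L}_3(X_\sigma)}\sigma^n([abc])$, and more usefully that applying $\sigma$ once more expresses $\sigma^{n}([ab])$ as a disjoint union of sets of the form $T^{j}\sigma^{n+1}([a'b'])$. Concretely, writing $x\in\sigma^n([ab])$ and applying recognizability one further level, $x=\sigma^n(y)$ with $y\in[ab]$, and $y$ itself decomposes via $y=T^{k}\sigma(z)$; pushing $\sigma^n$ through gives $x=T^{|\sigma^n(\cdots)|}\sigma^{n+1}(z)$ landing inside an atom of $\mathcal{P}_{n+1}$. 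Carrying this out letter-by-letter along the block $\sigma(a)\sigma(b)$ shows every $\mathcal{P}_n$-atom is a union of $\mathcal{P}_{n+1}$-atoms, which is refinement; restricting to the bases (the $j=0$ atoms) yields $\bigcup\sigma^{n+1}([a'b'])\subseteq\bigcup\sigma^n([ab])$, which is the stated base-nesting inclusion.

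The main obstacle I anticipate is purely bookkeeping: correctly tracking the shift exponents when pushing $T^k$ through $\sigma^n$, i.e.\ verifying $\sigma^n(T^k w)$ relates to $T^{\ell}\sigma^n(w)$ with $\ell$ the total length $|\sigma^n(w_{[0,k)})|$ of the image of the truncated prefix, and then checking these exponents fall into the correct height ranges so that the refined pieces are exactly atoms of $\mathcal{P}_{n+1}$ rather than overlapping or straddling two towers. This is where recognizability must be invoked with care, since the decomposition $x=T^k\sigma(y)$ is only valid with the canonical range $0\le k<|\sigma(y_0)|$, and one must confirm the induced exponents respect the analogous range at level $n+1$. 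Once the exponent arithmetic is pinned down the disjointness and covering claims follow formally, so I would allocate most of the write-up to stating the iterated recognizability decomposition cleanly and then reading off all three conclusions from it.
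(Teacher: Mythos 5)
Your proposal is correct and follows essentially the same route as the paper: covering and disjointness of each $\mathcal{P}_n$ are obtained from (iterated) recognizability of the primitive substitution, and refinement from the commutation $T^{m}\sigma^{n}(\sigma(y))=\sigma^{n}T^{j}(\sigma(y))$ with $m=|\sigma^{n}(\sigma(y)_{[0,j)})|$, together with the check that the resulting exponents fall in the correct height ranges. The only cosmetic differences are that the paper proves the covering by a Cantor diagonal argument from the definition of $X_{\sigma}$ rather than by iterating the existence half of recognizability, and phrases refinement as containment of each $\mathcal{P}_{n+1}$-atom in a $\mathcal{P}_n$-atom rather than as a decomposition of the $\mathcal{P}_n$-atoms; the exponent bookkeeping you flag as the main obstacle is exactly the content of the paper's proof.
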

\begin{proof}
First we show that for all $n\geq 1$, $\mathcal{P}_n$ covers $X_{\sigma}$.  We fix  $ n \geq 1$ and  let  $x\in X_{\sigma}$. By definition of $X_{\sigma}$, for all $\ell\geq 1$, there exist $N\geq 1$ and $a\in\mathcal{A}$ such that\footnote{We recall that  the notation $ u \prec v$ stands for $u$ being a factor of $v$.}  $x_{[-\ell,\ell)}\prec \sigma^N(a)$. For  all $\ell$ large, one has  $N>n$, and then $x_{[-\ell,\ell)}\prec \sigma^n(w)$ for some $w\in \mathcal{L}(X_{\sigma})$.
 This implies that there exist $0\leq j<|w|$ and $0\leq k,k'\leq \max_{a\in\mathcal{A}}\{|\sigma^n(a)|\}$ such that $x_{[-\ell+k,\ell-k')}=\sigma^n(w_j)$.  
Since $|w|\to\infty$ as $\ell\to\infty$, a Cantor diagonal argument provides a word $y\in X_{\sigma}$ and an integer $k$ with  $0\leq k< |\sigma^n(y_0)|$ such that $x=T^k\sigma^n(y)$. Setting $ab=y_0y_1$, we have that $ab\in \mathcal{L}_2(X_{\sigma})$ and $x\in T^k\sigma^n([ab])$.

We now prove that ${\mathcal P}_n$ is a partition. Suppose that  there exist $ab, cd\in \mathcal{L}_2(X_{\sigma}),$ with $  0\leq j<|\sigma^n(a)|$ and $0\leq k<|\sigma^n(c)|$, such that $x\in T^j\sigma^n([ab])\cap T^k\sigma^n([cd])$. Then, $x=T^j\sigma^n(y_1)=T^k\sigma^n(y_2)$, where $y_1\in [ab]$, $y_2\in [cd]$. By recognizability (we use the fact that $\sigma$ is primitive), $j=k$ and $ac=bd$, so in fact $T^j\sigma^n([ab])=T^k\sigma^n([cd])$. 

Finally, let us show that $\mathcal{P}_{n+1}$ is finer than $\mathcal{P}_{n}$. Let $T^k\sigma^{n+1}([ab])$ be an atom of $\mathcal{P}_{n+1}$, and  let $x$ belong to it. There exists $y\in [ab]$ such that $x=T^k\sigma^{n+1}(y)$, and therefore $x$ belongs also to $T^k\sigma^{n}([cd])$, where $c=\sigma(y)_0$, $d=\sigma(y)_1$. By definition of $\mathcal{P}_{n+1}$, one has  $0\leq k< |\sigma^{n+1}(a)|$. If $0\leq k<|\sigma^n(\sigma(a)_0)|=|\sigma^n(c)|$, then $T^k\sigma^{n}([cd])$ is an atom of $\mathcal{P}_n$ and we conclude that $T^k\sigma^{n+1}([ab])$ is contained in an atom of $\mathcal{P}_n$. If $|\sigma^n(\sigma(a)_0)|\leq k<|\sigma^{n+1}(a)|$, then there is a unique $j$ with  $1\leq j<|\sigma(a)|$ such that
$$|\sigma^{n}(\sigma(a)_{[0,j)})|\leq k < |\sigma^{n}(\sigma(a)_{[0,j+1)})|.$$
Define $m=|\sigma^{n}(\sigma(y)_{[0,j)})|$. We know that $T^m\sigma^n(\sigma(y))=\sigma^nT^j(\sigma(y))$, so we conclude that
$$x=T^k\sigma^{n}\sigma(y)=T^{k-m}\sigma^nT^j\sigma(y)\in T^{k-m}\sigma^n([\sigma(y)_j\sigma(y)_{j+1}]).$$
Since $0\leq k-m < |\sigma^n(\sigma(y)_j)|$ and $\sigma(y)_j\sigma(y)_{j+1}\in\mathcal{L}_2(X_{\sigma})$, the subset $T^{k-m}\sigma^n([\sigma(y)_j\sigma(y)_{j+1}])$ is an atom of $\mathcal{P}_n$, so we conclude again that $T^k\sigma^{n+1}([ab])$ is contained in an atom of $\mathcal{P}_n$.
\end{proof}
\begin{remark}\label{rem:part}
Note that we could have used a very similar proof to show that  the  sequence of partitions  $(\mathcal{Q}_n)_{n\in\mathbb{N}}$ with
$$\mathcal{Q}_n=\{T^j\sigma^n([a]):a\in \mathcal{A}, 0\leq j < |\sigma^n(a)|\}$$
is a  nested sequence of partitions in towers of $(X_{\sigma},T)$, see e.g. \cite[Proposition 14]{DHS:99}. However, we are not able to ensure that for every factor $v\in \mathcal{L}(X_{\sigma})$, the function $f_v=  \chi_{[v]}-\mu [v] \in C(X_{\sigma},\mathbb{R}),
$ as defined in Lemma \ref{eq_cob}, will be constant in the atoms of $\mathcal{Q}_{n}$ for  all $n$ large. Indeed, for all $n\geq 1$, the last level of any tower of $\mathcal{Q}_{n}$ determines only the first letter of its elements, unless we put some additional condition on $\sigma$, like being {\it proper} (see \cite{DHS:99}) for details).  We will see in Section \ref{subsec:criteria} that strategies to provide imbalancedness criteria relies on the fact that for any factor $v$ we can always find a positive integer $n$ such that $f_v$ is constant in the atoms of $\mathcal{P}_{n}$.   
\end{remark}

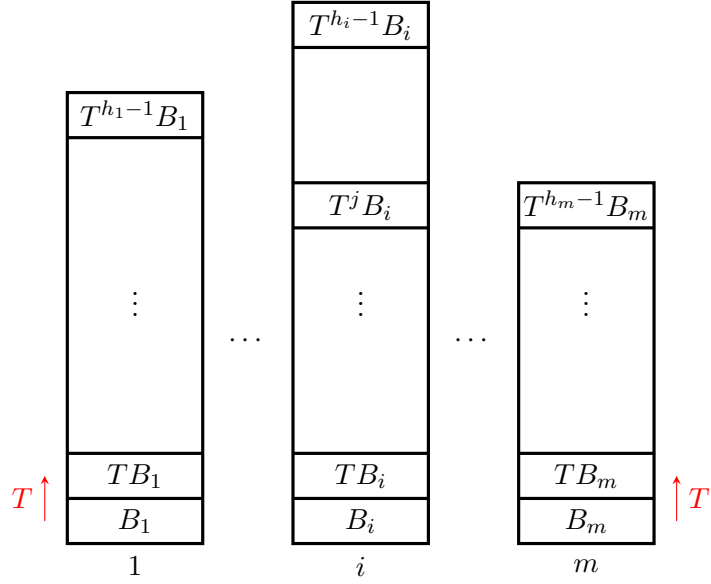
\begin{figure}[h]

 \begin{center}
  \begin{tikzpicture}[scale=0.6]
   
   \draw [very thick] (1,0) rectangle (4,10);
   \draw [very thick] (1,1) -- (4,1);
   \draw [very thick] (1,2) -- (4,2);
   \draw [very thick] (1,9) -- (4,9);
   \node at (2.5,0.5) {$B_1$};
   \node at (2.5,1.5) {$TB_1$};
   \node at (2.5,9.5) {$T^{h_1-1}B_1$};
   
   \draw [very thick] (6,0) rectangle (9,12);
   \draw [very thick] (6,1) -- (9,1);
   \draw [very thick] (6,2) -- (9,2);
   \draw [very thick] (6,7) -- (9,7);
   \draw [very thick] (6,8) -- (9,8);
   \draw [very thick] (6,11) -- (9,11);
   \node at (7.5,0.5) {$B_i$};
   \node at (7.5,1.5) {$TB_i$};
   \node at (7.5,7.5) {$T^{j}B_i$};
   \node at (7.5,11.5) {$T^{h_i-1}B_i$};

   \draw [very thick] (11,0) rectangle (14,8);
   \draw [very thick] (11,1) -- (14,1);
   \draw [very thick] (11,2) -- (14,2);
   \draw [very thick] (11,7) -- (14,7);
   \node at (12.5,0.5) {$B_m$};
   \node at (12.5,1.5) {$TB_m$};
   \node at (12.5,7.5) {$T^{h_m-1}B_m$};

   \node at (2.5,-0.5) {$1$};
   \node at (7.5,-0.5) {$i$};
   \node at (12.5,-0.5) {$m$};    
   
   \node [very thick] at (5,4.5) {$\cdots$};
   \node [very thick] at (10,4.5) {$\cdots$};
   \node at (2.5, 5.5) {$\vdots$};
  \node at (7.5, 5.5) {$\vdots$};
  \node at (12.5, 5.5) {$\vdots$};
  
  \draw [->, >=stealth, red] (0.5,0.5) -- (0.5,1.5); 
  \node [red] at (0,1) {$T$};
  \draw [->, >=stealth, red] (14.5,0.5) -- (14.5,1.5);
  \node [red] at (15,1) {$T$};
  
  \end{tikzpicture}
 \end{center}

 \caption{A partition in towers.}
 \label{partition}
\end{figure}

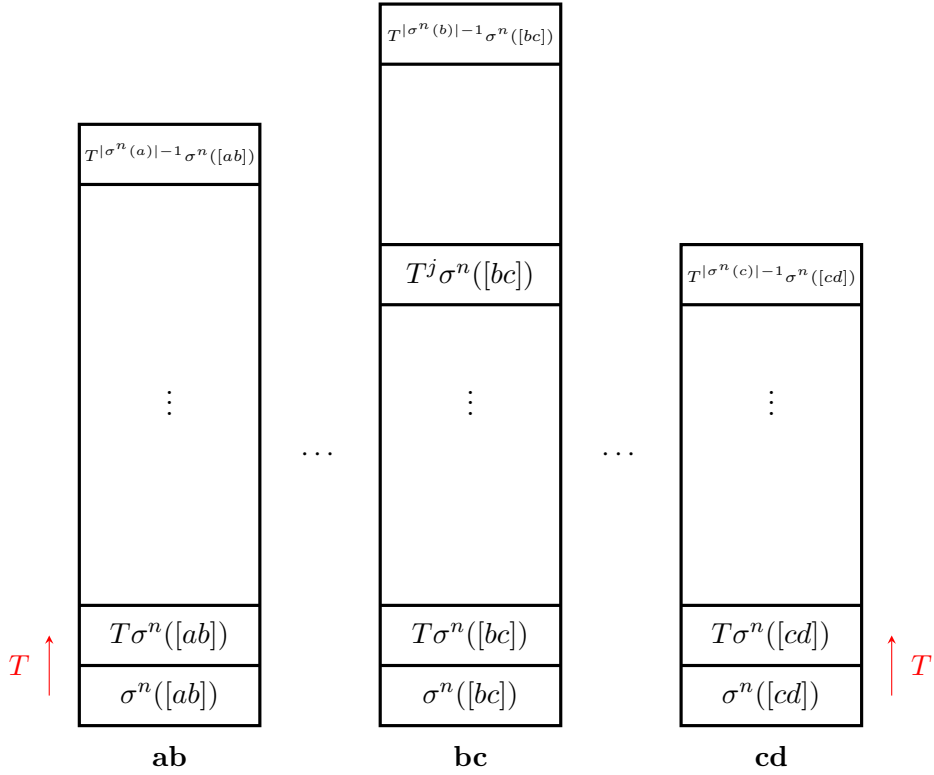
\begin{figure}[h]

 \begin{center}
  \begin{tikzpicture}[scale=0.8]
   
   \draw [very thick] (1,0) rectangle (4,10);
   \draw [very thick] (1,1) -- (4,1);
   \draw [very thick] (1,2) -- (4,2);
   \draw [very thick] (1,9) -- (4,9);
   \node at (2.5,0.5) {$\sigma^n([ab])$};
   \node at (2.5,1.5) {$T\sigma^n([ab])$};
   \node at (2.5,9.5) {\tiny $T^{|\sigma^n(a)|-1}\sigma^n([ab])$};
   
   \draw [very thick] (6,0) rectangle (9,12);
   \draw [very thick] (6,1) -- (9,1);
   \draw [very thick] (6,2) -- (9,2);
   \draw [very thick] (6,7) -- (9,7);
   \draw [very thick] (6,8) -- (9,8);
   \draw [very thick] (6,11) -- (9,11);
   \node at (7.5,0.5) {$\sigma^n([bc])$};
   \node at (7.5,1.5) {$T\sigma^n([bc])$};
   \node at (7.5,7.5) {$T^{j}\sigma^n([bc])$};
   \node at (7.5,11.5) {\tiny $T^{|\sigma^n(b)|-1}\sigma^n([bc])$};

   \draw [very thick] (11,0) rectangle (14,8);
   \draw [very thick] (11,1) -- (14,1);
   \draw [very thick] (11,2) -- (14,2);
   \draw [very thick] (11,7) -- (14,7);
   \node at (12.5,0.5) {$\sigma^n([cd])$};
   \node at (12.5,1.5) {$T\sigma^n([cd])$};
   \node at (12.5,7.5) {\tiny $T^{|\sigma^n(c)|-1}\sigma^n([cd])$};

   \node at (2.5,-0.5) {\bf ab};
   \node at (7.5,-0.5) {\bf bc};
   \node at (12.5,-0.5) {\bf cd};    
   
   \node [very thick] at (5,4.5) {$\cdots$};
   \node [very thick] at (10,4.5) {$\cdots$};
   \node at (2.5, 5.5) {$\vdots$};
  \node at (7.5, 5.5) {$\vdots$};
  \node at (12.5, 5.5) {$\vdots$};
  
  \draw [->, >=stealth, red] (0.5,0.5) -- (0.5,1.5); 
  \node [red] at (0,1) {$T$};
  \draw [->, >=stealth, red] (14.5,0.5) -- (14.5,1.5);
  \node [red] at (15,1) {$T$};

  \end{tikzpicture}
 \end{center}

 \caption{Partition $\mathcal{P}_n$ for a primitive substitution $\sigma$.  The last levels are mapped by $T$ on some  atoms of the base by injectivity. Note that   the elements of a same
 last level are not necessarily mapped to the same atom of the base.}
 \label{sigma}
\end{figure}

\begin{example}[Thue--Morse]\label{ex:TMtower}
We continue Example~\ref{ex:TMsigma2}.
Let $\sigma_{TM}$ be the Thue--Morse substitution on $\{0,1\}$ given by
$\sigma_{TM} \colon 0\mapsto 01$, $1\mapsto 10$.  Let $(\mathcal{P}_n)_{n\geq 1}$ the sequence of partitions in towers defined in \eqref{pn}. Each $\mathcal{P}_n$ has four towers with $2^n$ levels. Two elements in the same atom of $\mathcal{P}_n$ share at least their first $2^n+1$ letters.  The partition  $\mathcal{P}_1$ is depicted in Figure~\ref{fig:partition1TM}.
\end{example}

\begin{figure}[h]

 \begin{center}
  \begin{tikzpicture}[scale=0.6]
      
   \draw [very thick] (1,0) rectangle (4,2);
   \draw [very thick] (1,1) -- (4,1);
   \draw [very thick] (1,2) -- (4,2);
   \node at (2.5,0.5) {$\subseteq [0101]$};
   \node at (2.5,1.5) {$\subseteq  [101]$};
   
  \draw [very thick] (5,0) rectangle (8,2);
   \draw [very thick] (5,1) -- (8,1);
   \draw [very thick] (5,2) -- (8,2);
   \node at (6.5,0.5) {$[0110]$};
   \node at (6.5,1.5) {$ [110]$};

   \draw [very thick] (9,0) rectangle (12,2);
   \draw [very thick] (9,1) -- (12,1);
   \draw [very thick] (9,2) -- (12,2);
   \node at (10.5,0.5) {$[1001]$};
   \node at (10.5,1.5) {$[001]$};
  
   \draw [very thick] (13,0) rectangle (16,2);
   \draw [very thick] (13,1) -- (16,1);
   \draw [very thick] (13,2) -- (16,2);
   \node at (14.5,0.5) {$\subseteq [1010]$};
   \node at (14.5,1.5) {$\subseteq [010]$};    
  
   \node at (2.5,-0.5) {$\bf{00}$};
   \node at (6.5,-0.5) {$\bf 01$};
   \node at (10.5,-0.5) {$\bf 10$};    
   \node at (14.5,-0.5) {$\bf 11$};

  \draw [->, >=stealth, red] (0.5,0.5) -- (0.5,1.5); 
  \node [red] at (0,1) {$T$};
  
  \end{tikzpicture}
 \end{center}

 \caption{Partition $\mathcal{P}_1$ for  the Thue--Morse substitution.}
 \label{fig:partition1TM}
\end{figure}
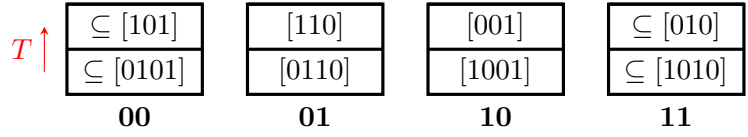

We end this section with  the following lemma which provides a convenient expression for the  entry  $(ab,cd)$  of the two-block   matrix $M_{\sigma_2}$. 

\begin{lemma}\label{m2}
Let $\sigma$   be a primitive substitution. Then, for all $ab, cd\in {\mathcal L}_2(X_{\sigma})$, and for 
all $ n \geq 1$,   \begin{equation}
\tr{M_{\sigma_2}}(ab,cd)=\Card \{0\leq k<|\sigma^{n+1}(a)|: T^k\sigma^{n+1}([ab])\subseteq \sigma^{n}([cd])\}.
\end{equation}
\end{lemma}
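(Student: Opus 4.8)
The plan is to identify both sides of the claimed identity with the number of occurrences of the two-block $cd$ among the letters of $\sigma_2(ab)$. First I would unwind the transpose: applying the defining relation $M_\sigma(a,b)=|\sigma(b)|_a$ to the two-letter substitution gives $\tr{M_{\sigma_2}}(ab,cd)=M_{\sigma_2}(cd,ab)=|\sigma_2(ab)|_{cd}$. Recalling the description of $\sigma_2$ from the paragraph preceding Example~\ref{ex:TMsigma2}, I write $c_t=(\sigma(a)\sigma(b))_t$ for the letters of the common prefix $\sigma(a)\sigma(b)$ shared by all points of $\sigma([ab])$, so that $c_0\cdots c_{|\sigma(a)|-1}=\sigma(a)$. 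Then the $i$-th letter of $\sigma_2(ab)$, for $0\leq i<|\sigma(a)|$, is exactly the block $c_ic_{i+1}=(\sigma(a)\sigma(b))_{[i,i+2)}$, whence $\tr{M_{\sigma_2}}(ab,cd)=\Card\{0\leq i<|\sigma(a)|:\ c_ic_{i+1}=cd\}$. It then remains to match this purely combinatorial count with the geometric count on the right-hand side, for every $n\geq 1$ (which also shows that the latter is independent of $n$).

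The core step is to locate each atom $T^k\sigma^{n+1}([ab])$ of $\mathcal{P}_{n+1}$ inside $\mathcal{P}_n$. Writing $\sigma^{n+1}=\sigma^n\circ\sigma$, I would exploit the block structure of $\sigma^{n+1}(a)=\sigma^n(\sigma(a))$: every $k$ with $0\leq k<|\sigma^{n+1}(a)|$ admits a unique decomposition $k=|\sigma^n(c_0\cdots c_{i-1})|+j$ with $0\leq i<|\sigma(a)|$ and $0\leq j<|\sigma^n(c_i)|$. For $z=\sigma^n(y)$ with $y\in\sigma([ab])$ (so that $y$ begins with $c_0c_1\cdots$), the block identity $T^{|\sigma^n(c_0\cdots c_{i-1})|}\sigma^n(y)=\sigma^n(T^iy)$ gives $T^kz=T^j\sigma^n(T^iy)$; since $T^iy$ begins with $c_ic_{i+1}\in\mathcal{L}_2(X_\sigma)$, this yields $T^k\sigma^{n+1}([ab])\subseteq T^j\sigma^n([c_ic_{i+1}])$, which is an atom of $\mathcal{P}_n$.

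To conclude, I use that $\mathcal{P}_n$ is a genuine partition (Lemma~\ref{part}): the containment $T^k\sigma^{n+1}([ab])\subseteq\sigma^n([cd])$ holds if and only if the enclosing atom is the base of the $cd$-tower, i.e. if and only if $j=0$ and $c_ic_{i+1}=cd$. The condition $j=0$ forces $k=|\sigma^n(c_0\cdots c_{i-1})|$, so the assignment $k\mapsto i$ is a bijection between the indices $k$ that land in some base and $\{0,\dots,|\sigma(a)|-1\}$; under it the right-hand side counts exactly those $i$ with $c_ic_{i+1}=cd$, giving $|\sigma_2(ab)|_{cd}$ as required. The main obstacle is the bookkeeping in the locating step: correctly expressing $k$ in block form and verifying, via the partition property, that an atom sits in the base of $\mathcal{P}_n$ precisely when $k$ marks the start of a $\sigma^n$-block. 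One should also handle the boundary block $i=|\sigma(a)|-1$, where $c_{i+1}$ is the first letter of $\sigma(b)$, so as to confirm that $c_ic_{i+1}$ is still a legitimate element of $\mathcal{L}_2(X_\sigma)$.
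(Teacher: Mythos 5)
Your proof is correct and takes essentially the same route as the paper's: both reduce $\tr{M_{\sigma_2}}(ab,cd)$ to the count of positions $j<|\sigma(a)|$ with $\sigma_2(ab)_j=cd$, i.e.\ with $T^j\sigma([ab])\subseteq[cd]$, and both match these with the indices $k$ on the right-hand side via the block decomposition $k=|\sigma^{n}(\sigma(a)_{[0,i)})|+j$ and the commutation identity $T^{|\sigma^n(w)|}\sigma^n=\sigma^nT^{|w|}$. The only organizational difference is that you discharge the ``only if'' direction by locating each atom of $\mathcal{P}_{n+1}$ inside an explicit atom of $\mathcal{P}_n$ and appealing to the disjointness of the atoms of $\mathcal{P}_n$ from Lemma~\ref{part}, whereas the paper invokes recognizability directly to force the offset to vanish --- these amount to the same fact.
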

\begin{remark}\label{rem:m2}
Lemma \ref{m2} means that the matrix $M_{\sigma_2}$ contains  all  the information for describing  the  transition  from $\mathcal{P}_n$ to $\mathcal{P}_{n+1}$, for all $n$. This corresponds to applying $\sigma$ once. It follows easily by induction that, for all $r\geq 0$, $M_{\sigma_2} ^r$ codes the transition from $\mathcal{P}_n$ to $\mathcal{P}_{n+r}$ and we have
$$(\tr{M_{2}}) ^r (ab,cd)=\Card \{0\leq k<|\sigma^{n+r}(a)|: T^k\sigma^{n+r}([ab])\subseteq \sigma^{n}([cd])\}.
$$
\end{remark}
\begin{proof}
Let $ab, cd\in  {\mathcal L}_2(X_{\sigma})$. We fix $ n \geq 1$.  We recall that $\sigma_2(ab)_j$ stands  for the $j$th letter of $\sigma_2(ab)$ on the alphabet ${\mathcal L}_2(X_{\sigma})$,  with the  first letter  being indexed by $0$.
By definition, one has 
$$\tr{M_{\sigma_2}}(ab,cd)=\Card\{0\leq j<|\sigma(a)|: \sigma_2(ab)_j=cd\}=  \Card\{0\leq j<|\sigma(a)|: T^j\sigma([ab])\subseteq [cd]\}.$$
We thus  want to show that
$$  \Card\{0\leq j<|\sigma(a)|: T^j\sigma([ab])\subseteq [cd]\}=\Card \{0\leq k<|\sigma^{n+1}(a)|: T^k\sigma^{n+1}([ab])\subseteq \sigma^{n}([cd])\}.$$

Suppose  that there exists $0\leq j<|\sigma(a)|$ such that $T^j\sigma([ab])\subseteq [cd]$. If $j=0$, $\sigma([ab])\subseteq [cd]$ and therefore $\sigma^{n+1}([ab])\subseteq \sigma^n([cd])$ ($k=0$).  If $1\leq j<|\sigma(a)|$,    set $k=|\sigma^{n}(\sigma(a)_{[0,j)})|$.  One has $ k<|\sigma^{n+1}(a)|.$ Now take $x\in [ab]$. We have
$T^k\sigma^n(\sigma(x))=\sigma^nT^j(\sigma(x))$.
By hypothesis, $T^j(\sigma(x))\in [cd]$, and then $T^k\sigma^{n+1}(x)\in \sigma^n([cd])$.  
Note also that by definition the $k$ associated with a given $j$ is unique, so we conclude that 
$$  \Card\{0\leq j<|\sigma(a)|: T^j\sigma([ab])\subseteq [cd]\}\geq \Card \{0\leq k<|\sigma^{n+1}(a)|: T^k\sigma^{n+1}([ab])\subseteq \sigma^{n}([cd])\}.$$

Conversely, suppose that  there exists $0\leq k<|\sigma^{n+1}(a)|$ such that $T^k\sigma^{n+1}([ab])$ is included in $\sigma^n([cd])$. Let  $x\in [ab]$  and let  $y=T^k\sigma^{n+1}(x)$. 

We first  assume $0\leq k<|\sigma^{n}(\sigma(a)_0)|$.   
By hypothesis, there exists $z\in [cd]$ such that $y=\sigma^n(z)$. By recognizability, $k=0$ and $\sigma(x)=z$, and thus $\sigma(x)  \in [cd]$. We conclude that $\sigma([ab])\subseteq [cd]$.

Now we assume that  $|\sigma^n(\sigma(a)_0)|\leq k<|\sigma^{n+1}(a)|$. 
 There exists  a unique $j$ with  $1\leq j<|\sigma(a)|$ such that
$$|\sigma^{n}(\sigma(a)_{[0,j)})|\leq k < |\sigma^{n}(\sigma(a)_{[0,j+1)})|.$$
Let  $m=|\sigma^{n}(\sigma(a)_{[0,j)})|$. One has  $T^m\sigma^n(\sigma(x))=\sigma^nT^j(\sigma(x))$,  and thus $y=T^k\sigma^{n}(\sigma(x))=T^{k-m}\sigma^n(T^j\sigma(x)).$
On the other hand, $y=T^k\sigma^{n+1}(x)\in \sigma^n([cd])$, and then, there exists $z\in [cd]$ such that $y=\sigma^n(z)$.  One has $ 0 \leq k-m <  |\sigma^{n}(\sigma(a)_{j})| $.  By recognizability, $k-m=0$ and $T^j\sigma(x)=z\in[cd]$. We conclude that $T^j\sigma([ab])\subseteq [cd]$.
Since the integer $j$ associated with a given $k$ is unique, we conclude that 
$$  \Card\{0\leq j<|\sigma(a)|: T^j\sigma([ab])\subseteq [cd]\}\leq \Card \{0\leq k<|\sigma^{n+1}(a)|: T^k\sigma^{n+1}([ab])\subseteq \sigma^{n}([cd])\}.$$
\end{proof}

\subsection{Some criteria for detecting imbalancedness} \label{subsec:criteria}

   For any $n\geq 1$,   let   $R_n(X_{\sigma})$ (resp. $Z_n(X_\sigma)$)   be  the set of maps from ${\mathcal L}_n(X_\sigma)$ to $\mathbb{R}$ (resp. to $\mathbb{Z}$) and let  $ \beta$   be the map defined as  $$\beta:R_1(X_\sigma)\to R_2(X_\sigma),  \quad 
f \mapsto (\beta f )(ab)=f(b)-f(a) \mbox{   for all  } ab\in {\mathcal L}_2(X_\sigma).$$

Our strategy works as follows.  We consider the map  $f_v= \chi_{[v]}-\mu([v])=\chi_{[v]}-\mu_v $ such as defined
in~(\ref{fv}).  We will  use the fact that  the  map  $f_v$  is constant in the atoms of  the two-letter partition 
 ${\mathcal P}_n$   (defined in  (\ref{pn})) for   all $n$ large,  and associate with $f_v$  a map $\phi_{v,n} \in R_2(X_{\sigma})$, thus defined on    $\mathcal{L}_2(X_{\sigma})$.
 Proposition \ref{beta_1} first  provides a  convenient necessary condition on such a  map, namely, it is proved to  belong to   $\beta(R_1(X_{\sigma}))$.
 This condition 
  is translated  in symbolic  terms in Proposition~\ref{t1}, and then exploited in Theorem \ref{condnec1}. 
 Indeed, knowing that a  map  belongs  to   $ \beta(R_1(X_{\sigma}))$  implies several  convenient restrictions,  for instance   its  coordinate on  each factor of the form 
 $aa$ is equal to $0$.
 The proof of Proposition \ref{beta_1}  below  closely follows the approach developed in   \cite{Host:1995,Host:2000,DHP}. 
 Corollary~\ref{cor:TM}  illustrates how powerful this  simple  formulation can be.

\begin{proposition}\label{beta_1}
Let $\sigma$ be a  primitive substitution. Let $f\in C(X_{\sigma},\mathbb{Z})$ such that there exists $k\in \mathbb{N}$ for which $f$ is constant in the atoms of $\mathcal{P}_k$. For all $n\geq k$, define $\phi_n\in \mathbb{R}^{\mathcal{L}_2(X_{\sigma})}$ by
\begin{equation}\label{phin}
\phi_n(ab)=\sum_{j=0}^{|\sigma^n(a)|-1} f\mid_{T^j\sigma^n([ab])} \quad \forall ab\in \mathcal{L}_2(X_{\sigma}).
\end{equation}
Let $d=|\mathcal{L}_2(X_{\sigma})|$. If $f$ is a coboundary, then $\phi_n\in \beta(R_1(X_{\sigma}))$ for all $n\geq k+d$. 
\end{proposition}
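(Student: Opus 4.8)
The statement to prove is Proposition~\ref{beta_1}: if $f\in C(X_\sigma,\Z)$ is a coboundary and is constant on the atoms of $\mathcal{P}_k$, then the associated map $\phi_n$ lies in $\beta(R_1(X_\sigma))$ for all $n\geq k+d$. The plan is to first use the integer-valued refinement of Gotshalk--Hedlund. Since $f$ is a coboundary of some real function (Theorem~\ref{GH}), Proposition~\ref{entiers} upgrades this to a function $h\in C(X_\sigma,\Z)$ with $f=h\circ T-h$. Because $f$ is locally constant, $h$ is locally constant as well, so there is some level $m$ at which $h$ is constant on the atoms of $\mathcal{P}_m$; I would take $m$ large enough (at least $\max(k,m)$) so that both $f$ and $h$ are measurable with respect to the partition $\mathcal{P}_n$ for all $n\geq m$.

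\textbf{Telescoping step.} The heart of the argument is to compute $\phi_n(ab)$ by telescoping the coboundary along each tower. For a fixed $ab\in\mathcal{L}_2(X_\sigma)$, the atoms $T^j\sigma^n([ab])$ for $0\le j<|\sigma^n(a)|$ form a single tower, and on each atom $f=h\circ T-h$ is constant. Summing over the tower,
\begin{equation}
\phi_n(ab)=\sum_{j=0}^{|\sigma^n(a)|-1}\bigl(h\mid_{T^{j+1}\sigma^n([ab])}-h\mid_{T^j\sigma^n([ab])}\bigr)
= h\mid_{T^{|\sigma^n(a)|}\sigma^n([ab])}-h\mid_{\sigma^n([ab])},
\end{equation}
a genuine telescoping provided $h$ is constant on each atom, which holds once $n\geq m$. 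The base atom $\sigma^n([ab])$ sits at level $0$ of the tower indexed by $ab$, and the value $h\mid_{\sigma^n([ab])}$ depends only on the base; I would define $H(a):=h\mid_{\sigma^n([ab])}$ and check, using recognizability and the nested structure of Lemma~\ref{part}, that this depends only on the first letter $a$ (indeed $\sigma^n([ab])$ is contained in $[\sigma^n(a)]$, whose prefix is determined by $a$), so $H\in R_1(X_\sigma)$ is well defined. The top atom $T^{|\sigma^n(a)|}\sigma^n([ab])$ is, by the injectivity/recognizability picture of Figure~\ref{sigma}, mapped into the base of the tower indexed by the successor pair, i.e.\ it lands in $\sigma^n([bc])$ for the appropriate continuation; its $h$-value is therefore $H(b)$. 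This yields $\phi_n(ab)=H(b)-H(a)=(\beta H)(ab)$, which is exactly the desired conclusion.

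\textbf{The main obstacle.} The delicate point is the bookkeeping at the top level of the towers and the precise threshold $n\geq k+d$. After telescoping I must argue that $h$ evaluated on the top atom equals $h$ evaluated on the base of the \emph{next} tower, and that this common value depends only on the letter $b$ rather than on the full pair $ab$. This is where recognizability (used already in Lemma~\ref{part} and Lemma~\ref{m2}) enters: the image $T^{|\sigma^n(a)|}\sigma^n([ab])$ of the full tower over $ab$ lands in a base atom determined by the two-block transition, and one must verify the landing atom is governed by $b$ alone. The extra $d=|\mathcal{L}_2(X_\sigma)|$ in the threshold $n\geq k+d$ is what guarantees enough iterations of $\sigma$ have been applied for $h$ to have stabilized to being $\mathcal{P}_n$-measurable on \emph{all} relevant atoms simultaneously; concretely, since $h$ is locally constant it is constant on atoms of $\mathcal{P}_m$ for some $m$, and the finiteness of $\mathcal{L}_2(X_\sigma)$ together with the nesting $\mathcal{P}_{n+1}\preceq\mathcal{P}_n$ lets one bound $m\leq k+d$. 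I expect verifying this threshold rigorously, and confirming the letter-dependence of $H$ at both the base and the top of each tower, to be the step requiring the most care.
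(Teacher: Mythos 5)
Your opening moves match the paper's: upgrade the cobounding function to an integer-valued $h$ via Proposition~\ref{entiers}, and telescope $f=h\circ T-h$ along each tower of $\mathcal{P}_n$ to get $\phi_n(ab)=h\mid_{\text{top}}-h\mid_{\text{base}}$. But there is a genuine gap at the step where you set $H(a):=h\mid_{\sigma^n([ab])}$ and claim it depends only on $a$. First, $h$ being constant on a base atom $\sigma^n([ab])$ is not automatic and your justification (``$\sigma^n([ab])\subseteq[\sigma^n(a)]$, whose prefix is determined by $a$'') does not work: $h$ is merely locally constant, so $h(x)$ depends on a window $x_{[-i,i]}$ including \emph{negative} coordinates, which the cylinder $[\sigma^n(a)]$ says nothing about. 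The paper handles this by writing $g(x)=g\circ T^{i}(x)-f^{(i)}(x)$ and using that $f$ is constant on atoms of $\mathcal{P}_\ell$; but even then the resulting value is a function $\psi(ab)$ of the \emph{pair}, not of the first letter alone. Converting $\psi$ into a one-letter function requires an extra device (the paper sets $\theta(a)=\psi(a_1a_2)$ where $a_1a_2$ is the first two-block of $\sigma^m(a)$, and telescopes $\psi$ across the two-block decomposition of $\sigma^m(ab)$ to get $(M_{\sigma_2}^{\ell-n+m}\phi_n)(ab)=\theta(b)-\theta(a)$). You cannot shortcut this by declaring that the value at the base depends only on $a$.

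Second, and more seriously, your explanation of the threshold $n\geq k+d$ is wrong, and without the correct mechanism your argument cannot reach it. The radius $i$ of local constancy of $h$ comes out of Gotshalk--Hedlund and is in no way bounded by $k+d$; the nesting of the partitions and the finiteness of $\mathcal{L}_2(X_\sigma)$ give you no such bound. Your telescoping argument therefore only yields the conclusion for $n\geq\ell$ with $\ell$ large and uncontrolled. The role of $d=|\mathcal{L}_2(X_\sigma)|$ in the paper is entirely linear-algebraic: by Remark~\ref{rem:m2} one has $\phi_n=M_{\sigma_2}^{n-k}\phi_k$, so $n\geq k+d$ places $\phi_n$ in the eventual range $\mathcal{R}_{M_{\sigma_2}}=M_{\sigma_2}^d\mathbb{R}^d$; the proof first shows $M_{\sigma_2}^{\ell-n+m}\phi_n\in\beta(Z_1(X_\sigma))$ for large $\ell,m$, and then descends to $\phi_n$ itself using that $M_{\sigma_2}$ is an automorphism of its eventual range and that $\beta(R_1(X_\sigma))$ is $M_{\sigma_2}$-invariant, together with $\mathbb{R}^d=\mathcal{R}_{M_{\sigma_2}}\oplus\mathcal{K}_{M_{\sigma_2}}$. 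This descent step is the missing idea in your proposal; without it the stated threshold is not obtained.
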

\begin{proof} Let $f\in C(X_{\sigma},\mathbb{Z})$ such that there exists $k\in \mathbb{N}$ for which $f$ is constant in the atoms of $\mathcal{P}_k$. Suppose that  $f$ is a coboundary, that is, there exists $g\in C(X_{\sigma},\mathbb{R})$ such that $f=g\circ T-g$. By Proposition \ref{entiers}, $g\in C(X_{\sigma},\mathbb{Z})$, and then it is locally constant. We claim that there exists $\ell \geq k+d$ such that for all $ab\in \mathcal{L}_2(X_{\sigma})$, $g$ is constant on the set $\sigma^\ell([ab])$.

 Indeed, let $i$ be a positive integer such that for all $x\in X_{\sigma}$, $g$ depends only on $x_{[-i,i]}$.  Such an integer exists since $g$ is  locally constant, as observed in Section~\ref{subsec:coboundaries}. Take $\ell$ large enough so that $\ell \geq k+d$ and $\min\{|\sigma^\ell(a)|:a\in A\}>i$. Since $f$ is constant on the atoms of $\mathcal{P}_k$, so is it on those of $\mathcal{P}_\ell$.
Let $ab\in \mathcal{L}_2(X_{\sigma})$ and $y,z\in \sigma^\ell([ab])$. Since $g(x)$ depends only on $x_{[-i,i]}$, $g\circ T^i(x)$ depends on $x_{[0,2i]}$. Since $y,z\in \sigma^\ell([ab])$ and $|\sigma^\ell(a)|, |\sigma^\ell(b)|>i$, $y$ and $z$ share the same $2i$ first coordinates and  thus $g\circ T^i(y)=g\circ T^i(z)$. On the other hand, for all $0\leq j<|\sigma^\ell(a)|$, $T^j(y)$ and $T^j(z)$ are in the same atom of $\mathcal{P}_\ell$, so in particular for all $0\leq j<i$, $T^j(y)$ and $T^j(z)$ are in the same atom of $\mathcal{P}_\ell$. Since $f$ is constant on the atoms of $\mathcal{P}_\ell$, we obtain that $f^{(i)}(y)=f^{(i)}(z)$, by recalling that
 $f^{(i)} (x)$ stands for  $f(x)+f\circ T(x)+\cdots+f\circ T^j(x)+\cdots+ f\circ T^{i-1}(x).$
Finally, note that for all $x\in X_{\sigma}$ and for all $s\in\mathbb{N}$, $g(x)=g\circ T^s(x)-f^{(s)}(x)$, which implies that $g(y)=g(z)$, which ends the proof of the claim, that is, $g$ is constant on each  atom of the  base.

We thus can define  a map  $\psi\in Z_2(X_{\sigma})$  as $\psi(ab)=g(x)$ for $x\in \sigma^\ell([ab])$. Then, if $x\in \sigma^\ell([ab])$ and $T^{|\sigma^\ell(a)|}(x)\in \sigma^\ell([bc])$, we have
\begin{equation} \label{eq:cob}
\psi(bc)-\psi(ab)= g\circ T^{|\sigma^\ell(a)|}(x)-g(x)=f^{(\mid\sigma^\ell(a)\mid)}(x)=\phi_\ell(ab).
\end{equation}

The function $\phi_n$ defined in \eqref{phin} can be seen as a vector in $\mathbb{R}^{d}$. In the following we refer to it indistinctly as a function belonging to $\mathbb{R}^{\mathcal{L}_2(X_\sigma)}$ or as a vector in $\mathbb{R}^{d}$.

  Let $ n \geq k+d$. We now want to prove  that  by multiplying $\phi_n$ by a suitable power of $M_{\sigma_2}$  yields an element  of  $\beta(R_1(X_{\sigma}))$.

We recall that,  given a $d\times d$-matrix $M$, its { eventual range}  $\mathcal{R}_M$   and  its {eventual kernel} $\mathcal{K}_M$ are respectively 
$$\mathcal{R}_M=\bigcap_{k\geq 1}M^k\mathbb{R}^d, \quad   \quad \mathcal{K}_M=\bigcup_{k\geq 1}\ker(M^k).$$
Note that $\mathbb{R}^d=\mathcal{R}_M\oplus\mathcal{K}_M$,  $\mathcal{R}_M=M^d\mathbb{R}^d$ and  $\mathcal{K}_M=\ker(M^d)$ (see e.g.  \cite[Chapter 7]{LindMarcus:95}).
First observe that $\phi_n\in\mathcal{R}_{M_{\sigma_2}}$. Indeed, by following Remark \ref{rem:m2}, one can show that
$$\phi_n=M_{\sigma_2}^{n-k}\phi_k\in M_{\sigma_2}^{n-k}\mathbb{R}^{\mathcal{L}_2(X_{\sigma})},$$
and since $n-k\geq d$ and $\mathcal{R}_{M_{\sigma_2}}=M_{\sigma_2}^d\mathbb{R}^d$, we conclude that $\phi_n\in \mathcal{R}_{M_{\sigma_2}}$.

Again thanks to Remark \ref{rem:m2}, and by assuming $\ell\geq n$, we obtain that, for every $abc\in \mathcal{L}_3(X_{\sigma})$,   
$$\psi(bc)-\psi(ab)=(M_{\sigma_2}^{\ell-n}\phi_n)(ab).$$

Choose $m$ large enough so that $|\sigma^m(a)|\geq 2$ for every $a\in {\mathcal A}$, and define $\theta\in Z_1(X_{\sigma})$ by $\theta(a)=\psi(a_1a_2)$ for $a\in {\mathcal A}$ if $\sigma^m(a)=a_1\cdots a_r$. If $ab\in \mathcal{L}_2(X_{\sigma})$ with $\sigma^m(a)=a_1\cdots a_r$ and $\sigma^m(b)=b_1\cdots b_s$, we obtain
\begin{eqnarray*}
(M_{\sigma_2}^{\ell-n+m}\phi_n)(ab)&=&(M_{\sigma_2}^{\ell-n}\phi_n)(a_1a_2)+\cdots+(M_{\sigma_2}^{\ell-n}\phi_n)(a_rb_1)\\
&=& \psi(b_1b_2)-\psi(a_1a_2)=\theta(b)-\theta(a)=(\beta\theta)(ab),
\end{eqnarray*}
and it follows that $M_{\sigma_2}^{\ell-n+m}\phi_n$ belongs to $\beta(Z_1(X_{\sigma}))$. 

It remains to prove that  $\phi_n\in \beta(R_1(X_{\sigma}))$. In particular, $M_{\sigma_2}^{\ell-n+m}\phi_n\in \beta(R_1(X_{\sigma}))$. 
Choosing $m$ large enough,  we can assume that $M_{\sigma_2}^{\ell-n+m}\phi_n\in \mathcal{R}_{M_{\sigma_2}}$. Since the subspace $\beta(R_1(X_{\sigma}))$ is invariant under $M_{\sigma_2}$ and $M_{\sigma_2}$ is an automorphism of $\mathcal{R}_{M_{\sigma_2}}$, we obtain that
$$M_{\sigma_2}^{\ell-n+m}: \mathcal{R}_{M_{\sigma_2}}\cap \beta(R_1(X_{\sigma}))\to \mathcal{R}_{M_{\sigma_2}}\cap \beta(R_1(X_{\sigma}))$$
is a bijection. Therefore, there exists a unique $\varphi \in \mathcal{R}_{M_{\sigma_2}}\cap \beta(R_1(X_{\sigma}))$ such that
$$M_{\sigma_2}^{\ell-n+m}\phi_n=M_{\sigma_2}^{\ell-n+m}\varphi,$$
and then $\phi_n=(\phi_n-\varphi)+\varphi$ belongs to $\mathcal{K}_{M_{\sigma_2}}+\mathcal{R}_{M_{\sigma_2}}\cap \beta(R_1(X_{\sigma}))$. Finally, recall that $\mathbb{R}^d=\mathcal{R}_{M_{\sigma_2}}\oplus \mathcal{K}_{M_{\sigma_2}}$ and $\phi_n\in \mathcal{R}_{M_{\sigma_2}}$. This  implies that $\phi_n-\varphi=0$ and thus  $\phi_n\in \mathcal{R}_{M_{\sigma_2}}\cap \beta(R_1(X_{\sigma}))\subseteq \beta(R_1(X_{\sigma}))$.
\end{proof}

We now translate the previous proposition in terms of balancedness for substitutive symbolic systems having rational frequencies. 
\begin{proposition}\label{t1}
Let $\sigma$ be a primitive substitution.  Let $v\in \mathcal{L}_{\sigma}$ having a rational frequency  $\mu_v$ and  $f_v= \chi_{[v]}-\mu_v\in C(X_{\sigma},\mathbb{R})$. There exists 
 $k\in\mathbb{N}$ be  such that $f_v$ is constant in the atoms of  the two-letter partition $\mathcal{P}_k$. 
If  $(X_{\sigma},T)$ is balanced on $v$, then $\phi_{v,b}\in\beta(R_1(X))$ for all $n\geq k+d$, where $d=\Card \mathcal{L}_2(X)$  and $\phi_{v,n}$ is defined as in \eqref{phin}, i.e., 
$$\phi_{v,n}(ab)=\sum_{j=0}^{|\sigma^n(a)|-1} f_v\mid_{T^j\sigma^n([ab])} \quad \forall ab\in \mathcal{L}_2(X_{\sigma}).$$
\end{proposition}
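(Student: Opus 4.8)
The plan is to deduce the proposition from Proposition~\ref{beta_1} by clearing the denominator of the frequency $\mu_v$. The substitution $\sigma$ being primitive, one has $\min_{a\in\mathcal{A}}|\sigma^n(a)|\to\infty$, and this is what yields the integer $k$. Indeed, a point $x$ lying in an atom $T^j\sigma^n([ab])$ of $\mathcal{P}_n$ has the form $x=T^j\sigma^n(y)$ with $y_0y_1=ab$, so the block $\sigma^n(a)\sigma^n(b)$ is read off the coordinates $[-j,\,|\sigma^n(a)|+|\sigma^n(b)|-j)$ of $x$ and is common to all points of the atom. As $0\le j<|\sigma^n(a)|$, this determined window always contains the coordinates $0,1,\dots,|\sigma^n(b)|$, so once $n$ is large enough that $\min_b|\sigma^n(b)|\ge|v|-1$, the quantity $\chi_{[v]}(x)$---which depends only on $x_{[0,|v|)}$---is constant on each atom, and hence so is $f_v$. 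I would fix such an $n$ and call it $k$; note that this first assertion is unconditional.

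For the conditional part I would pass to an integer-valued function. Writing $\mu_v=p_v/q_v$, I set $f:=q_vf_v=q_v\chi_{[v]}-p_v$; since $\chi_{[v]}$ takes values in $\{0,1\}$, the map $f$ takes only the two integer values $-p_v$ and $q_v-p_v$, so $f\in C(X_\sigma,\mathbb{Z})$, and it is still constant on the atoms of $\mathcal{P}_k$. By Proposition~\ref{eq_cob}, balancedness on $v$ means exactly that $f_v$ is a coboundary; since the coboundaries $\partial C(X_\sigma,\mathbb{R})$ form an $\mathbb{R}$-linear subspace, $f=q_vf_v$ is a coboundary as well. Thus $f$ meets every hypothesis of Proposition~\ref{beta_1}.

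Applying that proposition to $f$, the vector attached to $f$ in \eqref{phin} is
$$\sum_{j=0}^{|\sigma^n(a)|-1} f\mid_{T^j\sigma^n([ab])}=q_v\sum_{j=0}^{|\sigma^n(a)|-1} f_v\mid_{T^j\sigma^n([ab])}=q_v\,\phi_{v,n}(ab),$$
so Proposition~\ref{beta_1} gives $q_v\,\phi_{v,n}\in\beta(R_1(X_\sigma))$ for all $n\ge k+d$, with $d=\Card\,\mathcal{L}_2(X_\sigma)$. To finish I would use that $\beta$ is $\mathbb{R}$-linear, so that $\beta(R_1(X_\sigma))$ is a real vector space; dividing by the nonzero integer $q_v$ then yields $\phi_{v,n}\in\beta(R_1(X_\sigma))$ for all $n\ge k+d$. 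I do not expect a serious obstacle here, since all the real work already sits in Proposition~\ref{beta_1}. The single point that is genuinely essential rather than routine is the \emph{rationality} of $\mu_v$: it is precisely what makes $q_vf_v$ integer-valued and hence eligible for a proposition stated only for $C(X_\sigma,\mathbb{Z})$, whereas the integrality of the values of $f$ and the constancy on atoms are straightforward verifications.
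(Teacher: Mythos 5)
Your proof is correct and follows essentially the same route as the paper's: the atoms of $\mathcal{P}_n$ determine at least the coordinates $0,\dots,\min_a|\sigma^n(a)|$ of their points (whence the existence of $k$), balancedness gives that $f_v$ and hence $q_vf_v$ is a coboundary via Proposition~\ref{eq_cob}, Proposition~\ref{beta_1} applied to the integer-valued function $q_vf_v$ yields $q_v\phi_{v,n}\in\beta(R_1(X_\sigma))$, and linearity of $\beta$ lets you divide by $q_v$. Your justification of the window of determined coordinates is in fact slightly more detailed than the paper's.
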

\begin{proof} We write $\mu_v= p_v/q_v$  in  irreducible form.
For all $n\geq 0$, the two-letter  partition $\mathcal{P}_n$ (as defined in \eqref{pn}) verifies that all elements in any atom of $\mathcal{P}_n$ share at least their $L_n+1$ letters, where $L_n=\min\{|\sigma^n(a)|:a\in\mathcal{A}\}$. Therefore, for  all $k$ large, $f_v$ (and consequently $q_v\cdot f_v$) is constant in the atoms of $\mathcal{P}_k$. By Lemma \ref{eq_cob}, since $(X_{\sigma},T)$ is balanced in $v$, $f_v$ is a coboundary, and then so is $q_v\cdot f_v$. By Proposition  \ref{beta_1}, $q_v\cdot \phi_n\in \beta(R_1(X))$ for all $n\geq k+d$, and consequently $\phi_n\in \beta(R_1(X))$ for all $n\geq k+d$.
\end{proof}

We now deduce from  Proposition~\ref{t1} necessary conditions for balancedness.  Let $(X,T)$ be  a minimal symbolic system on the alphabet $\mathcal{A}$ and  let $a\in\mathcal{A}$. 
 We   recall that  a word $w$  with $wa \in {\mathcal L} (X)$  is a {\it return word} to the letter  $a$
 if $a$ is a prefix of $wa$.
 
\begin{lemma}\label{return}
Let $(X,T)$ be  a minimal symbolic system defined  on the alphabet $\mathcal{A}$, $a\in\mathcal{A}$ and $w=w_0\cdots w_{|w|-1}$ be a return word to $a$. 
If $\phi\in \beta(R_1(X))$, then
 $$\phi(w_{|w|-1}a)+\sum_{i=1}^{|w|-1}\phi(w_{i-1}w_i)=0.$$
\end{lemma}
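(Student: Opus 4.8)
The plan is to exploit the fact that every element of $\beta(R_1(X))$ is, by construction, a discrete coboundary defined on pairs of letters, so that the prescribed sum collapses by telescoping. First I would fix $\phi\in\beta(R_1(X))$ and choose $f\in R_1(X)$ with $\phi=\beta f$; by the definition of $\beta$ this means $\phi(xy)=f(y)-f(x)$ for every $xy\in\mathcal{L}_2(X)$. The key observation to record at the outset is that, since $w$ is a return word to the letter $a$, the letter $a$ is a prefix of $wa$, whence $w_0=a$.

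Next I would substitute $\phi(w_{i-1}w_i)=f(w_i)-f(w_{i-1})$ into the sum $\sum_{i=1}^{|w|-1}\phi(w_{i-1}w_i)$ and observe that it telescopes to $f(w_{|w|-1})-f(w_0)$. Adding the remaining term $\phi(w_{|w|-1}a)=f(a)-f(w_{|w|-1})$ cancels the $f(w_{|w|-1})$ contributions and leaves exactly $f(a)-f(w_0)$. Using $w_0=a$ from the first step, this difference is zero, which is precisely the claimed identity.

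There is essentially no analytic obstacle here: the statement is a purely algebraic identity valid for any coboundary on the alphabet, and the whole content reduces to the telescoping cancellation together with the boundary matching forced by the return-word condition $w_0=a$. The only point requiring a line of care is to confirm that each argument of $\phi$ --- namely every consecutive pair $w_{i-1}w_i$ and the wraparound pair $w_{|w|-1}a$ --- indeed lies in $\mathcal{L}_2(X)$, so that $\phi$ is defined on it; this is immediate since all of these pairs occur consecutively inside the factor $wa\in\mathcal{L}(X)$.
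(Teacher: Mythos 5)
Your proof is correct and follows essentially the same route as the paper's (which simply notes that all the pairs $w_0w_1,\dots,w_{|w|-1}a$ lie in $\mathcal{L}_2(X)$ and that the identity follows from $\phi=\beta\varphi$ together with the return-word condition); you have merely written out the telescoping cancellation and the observation $w_0=a$ explicitly.
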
  
\begin{proof}
One has $w_0w_1, w_1w_2,\cdots, w_{|w|-2}w_{|w|-1}, w_{|w|-1}a\in \mathcal{L}_2(X)$.
The result  follows  directly  from the definition of return words and from the fact that there exists $\varphi\in R_1(X)$ such that $\phi=\beta\varphi$.
\end{proof}
We now  can prove Theorem  \ref{condnec1}.

\begin{proof}
By  Proposition \ref{t1}, $\phi_{v,n} \in \beta(R_1(X_{\sigma}))$ for all  $n$ large. For any  $ab\in \mathcal{L}_2(X_{\sigma})$ \begin{eqnarray}\label{eqphi}
\phi_n(ab)=\alpha_{ab}\left(1-\frac{p_v}{q_v}\right)-(|\sigma^n(a)|-\alpha_{ab})\cdot  \frac{p_v}{q_v},
\end{eqnarray}
where $$\alpha_{ab}=\Card\{0\leq j<|\sigma^n(a)|: T^j\sigma^n([ab])\subseteq [v]\},$$
that is, $\alpha_{ab}$ is the number of levels in the $ab-$tower of $\mathcal{P}_n$ in which all elements begin with the word $v$.
Using Lemma \ref{return} and  \eqref{eqphi}, we obtain 
\begin{eqnarray*}
0&=&\alpha_{w_{|w|-1}a}\left(q_v-p_v\right)-(|\sigma^n(w_{|w|-1})|-\alpha_{w_{|w|-1}a})\cdot p_v+\\
& & \sum_{i=1}^{|w|-1} \alpha_{w_{i-1}w_i}\left(q_v-p_v\right)-(|\sigma^n(w_{i-1})|-\alpha_{w_{i-1}w_i})\cdot p_v
\end{eqnarray*}

which implies
\begin{eqnarray*}
q_v \left(\alpha_{w_{|w|-1}a}+\sum_{i=1}^{|w|-1} \alpha_{w_{i-1}w_i} \right)&=& p_v\left( |\sigma^n(w_{|w|-1})|+\sum_{i=1}^{|w|-1}|\sigma^n(w_{i-1})|\right)\\
&=& p_v|\sigma^n(w)|.
\end{eqnarray*}
The integers $p_v$ and $q_v$ being coprime, either 
$$\left(\alpha_{w_{|w|-1}a}+\sum_{i=1}^{|w|-1} \alpha_{w_{i-1}w_i} \right)=0$$
or $q_v$ divides $|\sigma^n(w)|$.  Since  $|\sigma^n(w)| \neq 0$,   we conclude that $q_v$ divides $|\sigma^n(w)|$, which ends the proof of the first assertion.

We now prove the second assertion of Theorem~\ref{condnec1}.
Let $a\in\mathcal{A}$ and  assume that  there exist $b,c$ such that $bac$ belongs to $\mathcal{L}_3(X_{\sigma})$ and $bc\in \mathcal{L}_2(X_{\sigma})$. Since $\phi_{v,n}\in \beta(R_1(X_{\sigma}))$ and $ba, ac, bc \in \mathcal{L}_2(X_{\sigma})$, one has $\phi_n(ba)+\phi_n(ac)=\phi_n(bc)$, that is,
\begin{eqnarray*}
0&=&\alpha_{ba}(q_v-p_v)-p_v(|\sigma^n(b)|-\alpha_{ba})+\alpha_{ac}(q_v-p_v)-p_v(|\sigma^n(a)|-\alpha_{ac})\\
& &-\alpha_{bc}(q_v-p_v)+p_v(|\sigma^n(b)|-\alpha_{bc})\\
&=&(\alpha_{ba}+\alpha_{ac}-\alpha_{bc})q_v-p_v|\sigma^n(a)|.
\end{eqnarray*}
The integers $p_v$ and $q_v$ being coprime, either $\alpha_{ba}+\alpha_{ac}-\alpha_{bc}=0$ or $q_v$ divides $|\sigma^n(a)|$. Here again  $\alpha_{ba}+\alpha_{ac}-\alpha_{bc} \neq 0$, since $|\sigma^n(a)|\neq 0$,   hence $q_v$ divides $|\sigma^n(a)|$.
\end{proof}
\begin{remark}\label{rem:example}
Note that Proposition \ref{t1} gives us the smallest $n$ for which the conclusions of both parts of Theorem \ref{condnec1} are always  true. It corresponds to $n=k+d$ and thus it can be determined in an effective way. See Example \ref{ex:ch} 
below for an application.
\end{remark}

As a consequence of the previous theorem, we have the following corollary about the Thue--Morse substitution.

\begin{corollary}[Thue--Morse substitution]\label{cor:TM}
Let $\sigma_{TM}$ be the Thue--Morse substitution on $\{0,1\}$ given by
$\sigma_{TM} \colon 0\mapsto 01$, $1\mapsto 10$.  The subshift $(X_{\sigma_{TM}},T)$ is balanced on   letters  but it  is unbalanced  on any  factor of length $\ell$, with 
 $\ell\geq 2$. \end{corollary}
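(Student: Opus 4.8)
The plan is to treat the two assertions separately: the positive one (balance on letters) via the spectral sufficiency criterion of Theorem~\ref{theo:Adam}, and the negative one (imbalance on factors of \emph{every} length $\ell\ge 2$) via the arithmetic obstruction of Theorem~\ref{condnec1}, after reducing the problem to length $2$ using Proposition~\ref{prop:decrease}.

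For balance on letters, I would start from the fact, recorded in Example~\ref{ex:TMsigma2}, that $M_{\sigma_{TM}}=\left(\begin{smallmatrix}1&1\\1&1\end{smallmatrix}\right)$ has eigenvalues $2$ and $0$. Since the Perron eigenvalue $2$ is a rational integer, it is (trivially) a Pisot number, having no conjugate distinct from itself; hence $\sigma_{TM}$ is a Pisot substitution. Theorem~\ref{theo:Adam} then yields at once that $(X_{\sigma_{TM}},T)$ is balanced on letters.

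For the imbalance, the key point is that the spectral test of Theorem~\ref{theo:Adam} cannot by itself settle the question: the matrix $M_{\sigma_2}$ has eigenvalues $0,1,-1,2$, and the modulus-one eigenvalues $\pm1$ are roots of unity, so the \emph{necessary} condition of that theorem holds. This is exactly the situation flagged in Remark~\ref{rem:adam}, and it forces us to invoke the finer criterion of Theorem~\ref{condnec1}. I would first compute the frequency of the factor $v=00$: the renormalized Perron eigenvector of $M_{\sigma_2}$ is proportional to $(1,2,2,1)$ in the basis $00,01,10,11$, so $\mu_{00}=1/6$, which is rational with $q_v=6$ in irreducible form. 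Now assume, for contradiction, that $(X_{\sigma_{TM}},T)$ is balanced on $00$. Since $00\in\mathcal{L}_2(X_{\sigma_{TM}})$, the ``in particular'' clause of part~(1) of Theorem~\ref{condnec1}, applied with $a=0$, gives that $q_v=6$ divides $|\sigma_{TM}^n(0)|$ for all $n$ large. But $\sigma_{TM}$ has constant length $2$, so $|\sigma_{TM}^n(0)|=2^n$, which is never divisible by $3$ and hence never by $6$; this is the desired contradiction. Thus $(X_{\sigma_{TM}},T)$ is not balanced on the factor $00$, and therefore not balanced on factors of length $2$.

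Finally, I would propagate imbalance to all larger lengths through the contrapositive of Proposition~\ref{prop:decrease}: if the system were balanced on factors of some length $\ell\ge 2$, iterating that proposition would force balance on factors of length $2$, contradicting the previous step; hence it is balanced on no length $\ell\ge 2$. The main obstacle, conceptually, is precisely the need to go beyond linear-algebraic spectral data: because $M_{\sigma_2}$ carries roots of unity on the unit circle, the crux of the argument is the divisibility mechanism of Theorem~\ref{condnec1}, which converts the clash between the denominator $q_v=6$ of the frequency $\mu_{00}$ and the constant length $2$ of $\sigma_{TM}$ into genuine imbalance.
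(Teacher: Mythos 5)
Your argument for balance on letters (Perron eigenvalue $2$ is Pisot, apply Theorem~\ref{theo:Adam}) is fine, and your treatment of the factor $00$ is exactly the paper's mechanism: $\mu_{00}=1/6$ is correct, $00\in\mathcal{L}_2(X_{\sigma_{TM}})$, and the ``in particular'' clause of part~(1) of Theorem~\ref{condnec1} forces $6\mid 2^n$, a contradiction. However, there is a gap between what you prove and what the corollary asserts. The statement (and the paper's proof of it) is that the subshift is unbalanced on \emph{every individual} factor $v$ of length $\ell\ge 2$; the paper achieves this by quoting Dekking's theorem, which gives $\mu_v=\tfrac{1}{6}2^{-m}$ or $\tfrac{1}{3}2^{-m}$ for \emph{any} such $v$, so that $q_v$ is always divisible by $3$ and Theorem~\ref{condnec1}(1) (with the fixed fact $00\in\mathcal{L}_2$) kills each $v$ separately. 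Your route instead establishes imbalance only for the single factor $00$ and then invokes Proposition~\ref{prop:decrease}. That proposition, read contrapositively, only propagates the \emph{collective} property ``not balanced on factors of length $\ell$'' (i.e.\ some factor of each length is unbalanced), or, via its first clause, imbalance along right-extensions of $00$; it cannot reach, say, $11$ or a length-$5$ factor not beginning with $00$. So you have proved a strictly weaker conclusion.

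The fix is short: either cite Dekking's frequency formula as the paper does, or at least observe that the same computation applies verbatim to every factor $v$ with rational frequency whose denominator is divisible by $3$ --- and then you must still justify that \emph{all} factors of length $\ge 2$ have this property, which is precisely the content of Dekking's theorem. Everything else in your write-up (the role of the unit-circle eigenvalues of $M_{\sigma_2}$, the choice of $00$, the divisibility clash with the constant length $2$) is consistent with the paper.
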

 \begin{proof}
 From \cite[Theorem1]{Dekking:92},  we know that  the frequency  $\mu_v$ of a factor  $v$ of length  $\ell \geq 2$  verifies
$\mu_v=\frac{1}{6}2^{-m}$ or $\mu_v=\frac{1}{3}2^{-m}$, 
where $m$ is such that $2^m<\ell\leq 2^{m+1}$. Frequencies are then rational, $p_v=1$,  and $q_v\in \{3\cdot 2^{m+1}, 3\cdot 2^{m}\}$.
Note that $00$   belongs  to $ \mathcal{L}_2(X_{\sigma_{TM}}$). We  then apply the first condition of Theorem~\ref{condnec1}.
 \end{proof}

We  also deduce from Theorem~\ref{condnec1} the following.
\begin{corollary} \label{cor:sym}
Let $\sigma$ be   primitive substitution of  constant length $\ell$  over the alphabet ${\mathcal A}$ of cardinality $d$  such that its 
substitution matrix is symmetric and  $d$     is coprime with  $\ell$, or 
does  not divide $\ell ^n$, for  all $n$ large. If there exists a letter $a$ and  a return word 
$w$ to $a$ such that $d$ is coprime with $|w|$.   Then,
$(X_{\sigma},T)$ is not balanced on letters.
\end{corollary}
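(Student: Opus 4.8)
The plan is to derive a contradiction from Theorem~\ref{condnec1}(1), after first pinning down the letter frequencies. Since $\sigma$ has constant length $\ell$, every letter satisfies $|\sigma(a)|=\ell$, and hence $|\sigma^n(a)|=\ell^n$ and, more generally, $|\sigma^n(w)|=\ell^n|w|$ for every word $w$. The first point I would establish is that each letter has frequency exactly $1/d$. For a constant-length substitution every column of $M_\sigma$ sums to $\ell$, because $\sum_{a}|\sigma(b)|_a=|\sigma(b)|=\ell$; thus the all-ones row vector $\mathbf{1}^t$ is a left eigenvector of $M_\sigma$ for the eigenvalue $\ell$, which is therefore the Perron eigenvalue. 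Since $M_\sigma$ is symmetric, the all-ones column vector $\mathbf{1}$ is a right eigenvector for $\ell$, so by Perron's theorem it is, up to scaling, the renormalized Perron eigenvector. Consequently $\mu_a=1/d$ for every $a\in\mathcal{A}$, which in irreducible form reads $p_a=1$, $q_a=d$.

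Next I would argue by contradiction. Assume $(X_\sigma,T)$ is balanced on letters; then in particular it is balanced on the letter $a$ provided by the hypothesis, whose frequency $1/d$ is rational. Applying Theorem~\ref{condnec1}(1) with $v=a$ to the return word $w$ to $a$ yields that $q_a=d$ divides $|\sigma^n(w)|=\ell^n|w|$ for all $n$ large.

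The final step is elementary. Since $\gcd(d,|w|)=1$, the divisibility $d\mid \ell^n|w|$ forces $d\mid \ell^n$ for all large $n$. This contradicts the standing hypothesis on $d$ and $\ell$: if $\gcd(d,\ell)=1$ (and $d\geq 2$) then $\gcd(d,\ell^n)=1$, so $d\nmid\ell^n$; while under the alternative hypothesis $d$ does not divide $\ell^n$ for $n$ large. In either case we reach a contradiction, so $(X_\sigma,T)$ cannot be balanced on letters.

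The only genuinely non-routine step is the frequency computation, namely recognizing that constant length together with symmetry of $M_\sigma$ forces the uniform frequency vector $\tfrac1d\mathbf{1}$; once $q_a=d$ is identified, the remainder is a direct application of Theorem~\ref{condnec1} combined with the coprimality of $d$ and $|w|$. I would also verify the trivial edge case $d=1$ is excluded (both alternatives on $d$ and $\ell$ presuppose $d\geq 2$), and confirm that the hypothesis indeed supplies a letter $a$ admitting a return word $w$ with $\gcd(d,|w|)=1$, so that the chosen $a$ and $w$ feed correctly into part~(1) of the theorem.
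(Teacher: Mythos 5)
Your proof is correct and follows essentially the same route as the paper: identify the uniform frequency vector $\tfrac1d\mathbf{1}$ from the constant length and the symmetry of $M_\sigma$, then apply the first part of Theorem~\ref{condnec1} to the given return word. You merely spell out the final divisibility step ($d\mid \ell^n|w|$ and $\gcd(d,|w|)=1$ forcing $d\mid\ell^n$) that the paper leaves implicit.
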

\begin{proof}
 The  substitution matrix $M_{\sigma}$  admits as   left eigenvector (and thus as right eigenvector) 
associated with  the eigenvalue $\ell$ the  vector  with coordinates  all equal  to $1$.
One thus has $\mu_a=1/d$ for all $a$ and we apply   the first part of Theorem~\ref{condnec1}.
\end{proof}

\subsection{Examples}\label{subsec:examples}

\begin{example}[Chacon substitution]\label{ex:ch}
The primitive Chacon substitution $\sigma_C$  is defined over the
alphabet $\{1,2,3\}$  by
$\sigma_C \colon  1 \mapsto 1123, 2 \mapsto 23,  3 \mapsto 123.$
The spectrum of $M_{\sigma_C}$ is $\{3,1,0\}$. We cannot  apply directly Theorem~\ref{theo:Adam}.
The letter frequency vector is   $(1/3,1/3,1/3)$  and then $q_1=q_2=q_3=3$. One has $11 \in {\mathcal L}_2(X_{\sigma_C})$, and then, for every $a\in\{1,2,3\}$, if the system is balanced on $a$, $3$ divides $|\sigma_C^n(1)|$ for all $n\geq k+d$ (see Proposition \ref{t1} for notation and  Remark \ref{rem:example}). In this case, it is enough to take $k=1$;  moreover  one has  $d=5$, so that $3$ divides $|\sigma_C^6(1)|$.  But $|\sigma_C^6(1)|=1093$, which is not divisible by $3$. We conclude that $(X_{\sigma_C},T)$ is neither balanced on   letters,   nor balanced on factors of   a given size,  by Proposition \ref{prop:decrease}. In view of  Theorem \ref{theo:equilibre}, this is consistent with the fact that
$ (X_{\sigma_C}, T)$ is weakly  mixing, that is, it admits no non-trivial  topological eigenfunction \cite[Lemma 5.5.1]{Fog02}.  \end{example}

\begin{example}[Toeplitz substitution]
The Toeplitz  substitution,  
also called Period doubling substitution, is defined   over $\{0,1\}$ as 
$\sigma_T\colon 0 \mapsto 01$, $ 1 \mapsto 00.$ The 
spectrum of $M_{\sigma_T}$ is  $\{2,-1\}$. We cannot  apply directly Theorem~\ref{theo:Adam}.
The frequencies of letters  are $1/3$ and $2/3$ and then $q_0=q_1=3$. One has $00 \in {\mathcal L}_2(X_{\sigma_C})$, so if the system is balanced on $a\in\{0,1\}$, $3$ divides $|\sigma_T^n(0)|$ for all $n$ large. Since for all $n\geq 1$, $|\sigma_T^n(0)|=2^n$, which is not divisible by 3, $(X_{\sigma_T},T)$ is neither balanced on  letters,  by applying  the first condition of Theorem~\ref{condnec1},  nor   balanced on factors of a given size, by Proposition \ref{prop:decrease}.
\end{example}

\begin{example} Consider the substitution 
$\sigma \colon 0 \mapsto 11,$ $   1 \mapsto 21$,  $ 2 \mapsto 10$.
The spectrum of  $M_{\sigma}$ is  $\{2, -1/2 (\sqrt 3 i +1),  1/2 (\sqrt 3 i -1)\}$.
Once again, we cannot  apply directly Theorem~\ref{theo:Adam}.
The frequencies of letters  are 
$(1/7, 4/7,2/7)$ and $11$ is in its language.   We  apply the first condition of Theorem~\ref{condnec1} to deduce that   $(X_{\sigma_T},T)$ is not balanced.

\end{example}

\begin{example}
Consider the substitution $\sigma \colon 0 \mapsto 010,$  $ 1 \mapsto 102$, $ 2 \mapsto 201$ from \cite[Example 6.2]{Queffelec:2010}. The spectrum of $M_\sigma$ is $\{3,1,0\}$.
We can neither  apply  Theorem~\ref{theo:Adam}, nor     the criteria of Theorem~\ref{condnec1}.  The letter frequency vector is $(1/2, 1/3, 1/6)$ and  $q_0=2$, $q_1=3$, $q_2=6$. 
No  factor of the form $aa$ appears in ${\mathcal L} (X_{\sigma})$. For instance, the word $w=01$ is a return word to $0$,
and $|\sigma^n(w)|=2\cdot 3^{n}$ for all $n$, which provides  no contradiction with   $q_a$ divides $|\sigma^n(w)|$.   In fact the substitution $\sigma$
is balanced on  letters. Indeed, consider the substitution $\tau$ on the alphabet  $\{a,b\}$
defined by $\tau \colon  a \mapsto aab$, $b \mapsto aba$,  and the morphism $\varphi$ from
$\{a,b\}^*$ to $\{0,1,2\}^*$  defined by   $\varphi(a)=01$ and $\varphi(b )=02$.
One has $\sigma \circ \varphi= \varphi \circ \sigma$ which implies that  
$\sigma^{\infty} (0)=  \varphi ( \tau ^{\infty} (a))$.
The substitution $\tau $ is balanced on letters since the spectrum   of $M_\tau$ is $\{2,0\}$ which implies that
$\sigma$ is balanced on letters.   \end{example}

\begin{example} 
Consider the  substitution 
$\sigma \colon 0 \mapsto 001, \ 1 \mapsto 101 $.
The spectrum of  its substitution matrix is $\{1,3\}$. 
The frequencies of its letters  are $\mu_0=\mu_1=1/2$ and $00$ is   a factor. We deduce from Corollary \ref{cor:sym}
that it is not balanced  on   letters.
\end{example}

\footnotesize{

\bibliographystyle{alpha}
\bibliography{Balance}}
\end{document}